\theoremstyle{definition}
\newtheorem{defn}{Definition}[section]
\theoremstyle{definition}
\newtheorem{prop}{Proposition}[section]
\theoremstyle{theorem}
\newtheorem{thm}{Theorem}[section]
\theoremstyle{theorem}
\newtheorem{lemma}{Lemma}[section]
\theoremstyle{remark}
\newtheorem{rem}{Remark}[section]
\theoremstyle{hypotesis}
\theoremstyle{hypotesis}
\newcommand{\bigslant}[2]{{\raisebox{.2em}{$#1$}\left/\raisebox{-.2em}{$#2$}\right.}}
\title[\tiny{Propagation of singularities on asymptotically AdS spacetimes}]{A generalization of the propagation of singularities theorem on asymptotically anti-de Sitter spacetimes}
\author[C.\ Dappiaggi]{Claudio Dappiaggi}
\address{Dipartimento di Fisica \\ Universit{\`a} degli Studi di Pavia\\ Istituto Nazionale di Alta Matematica, Sezione di Pavia and INFN, Sezione di Pavia \\ Via Bassi, 6 --  I-27100 Pavia \\ Italy}
\email{claudio.dappiaggi@unipv.it}
\author[A.\ Marta]{Alessio Marta}
\address{Dipartimento di Matematica \\ Universit{\`a} degli Studi di Milano\\ and INFN, Sezione di Milano \\ Via Cesare Saldini, 50 --  I-20133 Milano \\ Italy}
\email{alessio.marta@unimi.it}
\date{\today}
\begin{document}
	
	\maketitle

\begin{abstract}
In a recent paper O. Gannot and M. Wrochna considered the Klein-Gordon equation on an asymptotically anti-de Sitter spacetime subject to Robin boundary conditions, proving in particular a propagation of singularity theorem. In this work we generalize their result considering a more general class of boundary conditions implemented on the conformal boundary via pseudodifferential operators of suitable order. Using techniques proper of $b$-calculus  and of twisted Sobolev spaces, we prove also for the case in hand a propagation of singularity theorem along generalized broken bicharacteristics, highlighting the potential presence of a contribution due to the pseudodifferential operator encoding the boundary condition. 
\end{abstract}

\section{Introduction}\label{Sec: Introduction}

In the framework of Lorentzian geometry, the $n$-dimensional asymptotically anti-de Sitter (aAdS) spacetimes play an important role since they represent a distinguished class of manifolds admitting a conformal boundary endowed with an induced Lorentzian metric. The main representative of this family is anti-de Sitter $AdS_n$, which is a maximally symmetric solution of Einstein's equations with negative cosmological constant. On top of these backgrounds it is natural to consider the Klein-Gordon operator and its properties have been studied by several authors, {\it e.g.} \cite{Bachelot:2010zw,Enciso:2013lza,Hol12,War1,Vasy12}, to quote a few papers which have been of inspiration to this work. 

One of the interesting aspects of aAdS spacetimes concerns the fact that, contrary to globally hyperbolic spacetimes, in order to solve the Klein-Gordon equation, besides initial data it is necessary to impose boundary conditions. Those of Dirichlet type have caught the interest for several years and only recently, in the mathematical physics literature, the attention has been moved towards other choices. The first natural generalization consists of considering boundary conditions of Robin type, as discussed for example in \cite{War1} and in \cite{Dappiaggi:2017wvj, WaIs03} within the framework of quantum field theory. Especially in this context the main objects of interest are the fundamental solutions of the Klein-Gordon operator, particularly the advanced and retarded ones, as well the propagation of singularities. For Dirichlet boundary conditions the latter has been studied by Vasy in \cite{Vasy12}, while for the Robin case the problem was addressed recently by Gannot and Wrochna in \cite{GaWr18}, proving in addition the existence and uniqueness up to smoothing terms of parametrices for the Klein-Gordon operator with prescribed $b$-wavefront set. 

Yet, in the past few years, it has emerged that one could consider a larger class of boundary conditions, in between which a distinguished example are those of Wentzell type as advocated in the realm of quantum field theory in \cite{Zahn:2015due}, although they have been considered by several other authors in different contexts, {\it e.g.} \cite{Coclite:2014,Favini:2002,Ueno:1973}. In a few words, given an aAdS spacetime $M$, and given $u\in H^1_{loc}(M)$, it is possible to define two trace maps $\gamma_+$ and $\gamma_-$ the first encoding the Neumann data, the second the Dirichlet ones. While Robin boundary conditions are codified by a smooth function $f$ on $\partial M$ such that $\gamma_+u-f\gamma_-u=0$, in those of Wentzell type the role of $f$ is replaced by suitable second order differential operator acting on the boundary. From the viewpoint of applications, similarly to those of Robin type -- see \cite{GaWr18} but also \cite{Dybalski:2018egv}, these conditions are relevant in connection to the so-called holographic principle as advocated in \cite{Dappiaggi:2018pju,Zahn:2015due}. 

From a structural viewpoint it has been shown in \cite{DDF18}, using the notion of boundary triples, that there exists a large class of boundary conditions relating Neumann and Dirichlet data via pseudodifferential operators for which there exist advanced and retarded fundamental solutions for the Klein-Gordon operator. It is noteworthy that Wentzell boundary conditions fall in this class, although the analysis makes clear that this is just one of the many possible scenarios. Yet one of the key limitations of \cite{DDF18} is the lack of a complete control of the wavefront set of the propagators, mainly due to the lack of a theorem of propagation of singularities applicable to such scenario. 

In this work we shall bypass such limitation proving a theorem of propagation of singularities for the Klein-Gordon operator on an asymptotically anti-de Sitter spacetime $M$ such that the boundary condition is implemented by a $b$-pseudodifferential operator $\Theta\in\Psi^k_b(\partial M)$ with $k\leq 2$. As it will become manifest from out analysis, we can distinguish two notable cases, namely $k\leq 0$ and $0<k\leq 2$. The first one can be seen as a rather natural extension of the results of \cite{GaWr18}, while the second accounts for the main, novel interesting cases and, in particular, Wentzell boundary conditions fall in this class. 

In our endeavor we shall follow the same strategy and techniques adopted first by Vasy in \cite{Vasy12} and subsequently by Wrochna and Gannot in \cite{GaWr18}, when dealing with boundary conditions of Robin type. As the subscript $b$ suggests, we shall mainly use techniques proper of $b$-calculus, $b$-wavefront sets and of twisted Sobolev spaces, which were first introduced in this framework by Warnick in \cite{War1}. More precisely we shall prove two main theorems, {\it cf.} Theorem \ref{Thm: main theorem k positivo} and \ref{Thm: main theorem k negativo}. While the latter can be seen as a natural extension of \cite[Th. 1]{GaWr18} and it accounts for the case $k\leq 0$ mentioned above, the second one deals with $0<k\leq 2$. Most notably this scenario opens the possibility for the boundary conditions to yield an additional contribution to the underlying wavefront set contrary to the first case.

\vskip .3cm

The paper is organized as follows: In Section \ref{Sec: Geometric preliminaries} we discuss the key geometric ingredients necessary in our work. We start with an introduction to $b$-geometry in Section \ref{Sec: Geometric preliminaries} and we discuss the notions of globally hyperbolic spacetimes with timelike boundary and of manifolds of bounded geometry in Section \ref{Sec: Globally Hyperbolic} and \ref{Sec: Manifolds of bounded geometry} respectively. In Section \ref{Sec: Analytic Preliminaries} we introduce the analytic tools necessary in this paper. We start from a survey on $b$-pseudodifferential operators in Section \ref{Sec: bPsiDos} and on twisted Sobolev spaces in \ref{subsec:twisted_sobolev}. Here we spend some time in motivating their introduction focusing on a simple, yet in our opinion enlightening example. Subsequently we discuss the interplay between b-calculus and wavefront set in Section \ref{Sec: b-calculus and WF}, while in Section \ref{Sec: Asymptotic expansion} we introduce the relevant trace maps needed to discuss the boundary conditions. At last in Section \ref{Sec: Twisted Dirichlet form} we introduce one of the main key ingredients of the work: the twisted Dirichlet energy form. In Section \ref{Sec: Boundary Value Problem} we give a weak formulation of the problem we are interested in in Section \ref{Sec: Boundary conditions and the associated Dirichlet form} proving in particular some microlocal estimates for the associated Dirichlet form. Section \ref{Sec: propagation of singularities} represents the core of the paper. We start by introducing the key notion of compressed  characteristic set and of generalized broken bicharacteristic in Section \ref{Sec: compressed characteristic set}. Herein we identify three notable conic subsets of the $b$-cotangent bundle, the elliptic, the hyperbolic and the glancing regions. For each of these we need to prove suitable microlocal estimates which are discussed respectively in Section \ref{Sec: elliptic region}, \ref{Sec: hyperbolic region} and \ref{Sec: glancing region}. Finally we gather all these data to derive the sought theorem of propagation of singularities in \ref{Sec: propagation of singularities}. In the whole analysis we separate two cases, that in which the boundary condition is implemented by $\Theta\in\Psi^k_b(\partial M)$, $k\leq 0$ and that in which $0<k\leq 2$.

\section{Geometric preliminaries}\label{Sec: Geometric preliminaries}
In this section our main goal is both to fix notation and conventions and to introduce the three geometric concepts which play a key role in our analysis: $b$-geometry, globally hyperbolic spacetimes with timelike boundary and manifolds of bounded geometry.

\subsection{Introduction to b-geometry} We introduce and characterize suitable geometric structures which are the natural playground to discuss the propagation of singularities on manifolds with boundaries. The concept of {\em b-geometry} has been first introduced by R. Melrose in \cite{Mel93,MePi92} and its use has been advocated by several authors, see in particular \cite{GaWr18,Vasy08}. Here we give a slightly different and more general version of the key ingredients of b-geometry which is inspired by the presentation in \cite{GMP}.

In this section with $M$ we indicate a connected, orientable, smooth manifold of dimension $\dim M=n\geq 2$, possibly with non empty boundary. Let $S$ denote a smooth submanifold of $M$ of dimension $\dim S=n-1$, so that the natural map $\iota:S\hookrightarrow M$ is an injective immersion and an homeomorphism on its image. In addition we call $NS$ the rank $1$ normal bundle associated to $S$.

Under these assumptions, one can apply the {\em tubular neighbourhood theorem} to conclude that there exists both an open subset $V$ of the zero section $\sigma_0:\iota(S)\to NS$ and $U\subset M$ such that $\iota(S)\subset U$. In addition there exists a diffeomorphism $\psi:V\to U$ such that
$$\psi\circ\sigma_0=\iota.$$

For future convenience we introduce the following space of smooth sections and we adopt the same notation employed in \cite{GaWr18} to facilitate a comparison. Hence we call  $\mathcal{V}(M)\doteq\Gamma(TM)$, $\mathcal{V}(S)=\Gamma(TS)$ and
\begin{equation}\label{eq:Notation}
\mathcal{V}_S(M)\doteq\{X\in\Gamma(TM)\;|\; X|_S\in\Gamma(TS)\},
\end{equation}
where, with a slight abuse of notation, with $X|_S\in\Gamma(TS)$ we mean that the restriction of $X$ to $S$ is tangent to the submanifold. Our goal is to characterize $\mathcal{V}_S(M)$ as the space of smooth sections of a suitable vector bundle. To this end, we observe first of all, that, for every open neighbourhood $\widetilde{U}\subset M$ such that $\widetilde{U}\cap\iota(S)=\emptyset$, $\mathcal{V}_S(\widetilde{U})=\{X|_{\widetilde{U}}\;|\;X\in\mathcal{V}_S(M)\}$ coincides with $\mathcal{V}(\widetilde{U})$. 

For a complete characterization of \eqref{eq:Notation}, we apply the tubular neighbourhood theorem, to conclude that there must exist an open neighbourhood $U_\epsilon$ of $\iota(S)$ diffeomorphic to $(-\epsilon,\epsilon)\times\iota(S)$, $\epsilon>0$. Calling $x$ the coordinate on the interval $(-\epsilon,\epsilon)$, we can realize that, given any $X\in\mathcal{V}_S(M)$, then
$$X|_{U_\epsilon}=f\partial_x + Y_x,$$
where $f\in C^\infty(U_\epsilon)$ is such that $f|_{x=0}=0$ while $Y_x\in\Gamma(TU_\epsilon)$ for all $z\in(-\epsilon,\epsilon)$ with the constraint that $Y_0\in\Gamma(TS)$. Since $f$ is smooth and vanishing at $x=0$, it holds that, there exists $\alpha\in C^\infty(U_\epsilon)$ such that $f=x\alpha$. 

Collecting all these data, we are motivated to introducing a new bundle $^STM$ whose base space is $M$ and whose fiber is defined as follows:
\begin{equation}\label{Eq:fiber}
^ST_pM\doteq\left\{\begin{array}{ll}
T_pM & p\notin U_\epsilon\\
\textrm{span}_{\mathbb{R}}\left(x\partial_x, T_pS\right) & p \in U_\epsilon
\end{array}\right.
\end{equation}
We observe that the definition does not depend on the choice of $\epsilon$, since, if $p\notin\iota(S)$, then one can always find an open neighbourhood $U_p\subset M$ containing $p$ and not intersecting $S$ such that $^STM|_{U_p}$ is diffeomorphic to $TM|_{U_p}$. The above analysis can now be summarized in the following proposition:
\begin{prop}\label{Prop:S-sections}
	The vector space $\mathcal{V}_S(M)$, defined in \eqref{eq:Notation}, is isomorphic to $\Gamma(^STM)$. 
\end{prop}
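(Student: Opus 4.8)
The plan is to establish the isomorphism $\mathcal{V}_S(M)\cong\Gamma({}^STM)$ locally and then patch the local identifications together using a partition of unity, exploiting the fact that ${}^STM$ has already been defined fiberwise in \eqref{Eq:fiber}. First I would observe that there is a natural map in one direction: every $X\in\mathcal{V}_S(M)$ restricts on $U_\epsilon$, in the tubular coordinates $(x,y)$ with $y$ coordinates on $\iota(S)$, to $X|_{U_\epsilon}=f\partial_x+Y_x$ with $f|_{x=0}=0$, hence (by the smoothness argument already given in the excerpt, Hadamard's lemma) $f=x\alpha$ with $\alpha\in C^\infty(U_\epsilon)$, so $X|_{U_\epsilon}=\alpha\,(x\partial_x)+Y_x$ is a smooth section of ${}^STM$ over $U_\epsilon$; away from $\iota(S)$ the fibers of ${}^STM$ and $TM$ coincide, so there $X$ is trivially a smooth section of ${}^STM$. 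This defines a linear map $\Phi:\mathcal{V}_S(M)\to\Gamma({}^STM)$.

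Next I would check that $\Phi$ is injective and surjective. Injectivity is immediate since $\Phi$ does not change the underlying set-theoretic section, only reinterprets its target bundle: if $\Phi(X)=0$ then $X=0$ as a section of $TM$. For surjectivity, given $W\in\Gamma({}^STM)$, I would cover $M$ by the tubular neighbourhood $U_\epsilon\cong(-\epsilon,\epsilon)\times\iota(S)$ together with an open set $U'=M\setminus\iota(S)$; on $U_\epsilon$ the section $W$ has the form $\alpha\,(x\partial_x)+Y_x$ with $\alpha\in C^\infty(U_\epsilon)$ and $Y_0$ tangent to $S$, and setting $f=x\alpha$ we get a genuine vector field $f\partial_x+Y_x$ on $U_\epsilon$ which lies in $\mathcal{V}_S(U_\epsilon)$ and whose image under the local version of $\Phi$ is $W|_{U_\epsilon}$; on $U'$ the section $W$ is already an ordinary vector field. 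Choosing a partition of unity $\{\chi,\chi'\}$ subordinate to $\{U_\epsilon,U'\}$ and glueing, $X\doteq\chi(f\partial_x+Y_x)+\chi' W$ is a global vector field on $M$; one must check $X\in\mathcal{V}_S(M)$, i.e.\ $X|_S$ is tangent to $S$: this holds because on $\operatorname{supp}\chi'$ we have $\chi'=0$ near $S$ (since $U'$ does not meet $S$ but $\operatorname{supp}\chi'$ might, so more carefully one uses $\chi\equiv1$ near $S$), so near $S$ one has $X=f\partial_x+Y_x$ with $f|_{x=0}=0$ and $Y_0\in\Gamma(TS)$. Finally $\Phi(X)=W$ by construction, using that $\Phi$ is $C^\infty(M)$-linear and local.

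The one genuinely delicate point, and the step I expect to be the main obstacle, is verifying that the bundle ${}^STM$ defined fiberwise in \eqref{Eq:fiber} is \emph{well-defined and smooth} independently of the auxiliary choices, namely the tubular neighbourhood, the defining function $x$, and the size $\epsilon$ — and in particular that the local frames $\{x\partial_x,\partial_{y^1},\dots,\partial_{y^{n-1}}\}$ on overlapping tubular charts are related by smooth $GL(n,\mathbb{R})$-valued transition functions across $\iota(S)$. Concretely, if $\tilde x$ is another defining function for $S$ in a neighbourhood, then $\tilde x=x\,g$ for a nowhere-vanishing smooth $g$, whence $\tilde x\partial_{\tilde x}=x\partial_x+(\text{smooth terms tangent to }S)$, so the change of frame matrix is smooth and invertible up to and across $S$; this is precisely the classical construction of Melrose's $b$-tangent bundle, and once it is in place the identification $\Gamma({}^STM)=\mathcal{V}_S(M)$ is a local and essentially formal consequence, since two vector bundles whose spaces of smooth sections are isomorphic as $C^\infty(M)$-modules via a map respecting restrictions are isomorphic as bundles (Swan's theorem, or simply evaluating at points). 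I would therefore organize the proof so that the bulk of the argument is the local computation on $U_\epsilon$ together with the partition-of-unity glueing, treating the well-definedness of ${}^STM$ either as already implicit in the construction preceding the proposition or dispatching it with the short $\tilde x=xg$ remark above.
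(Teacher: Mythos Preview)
Your proposal is correct and follows the same line as the paper, only much more fully articulated: the paper does not give a separate proof but presents the proposition as a summary of the preceding local analysis (namely that $X|_{U_\epsilon}=f\partial_x+Y_x$ with $f=x\alpha$), leaving the global patching and the well-definedness of ${}^STM$ implicit. Your explicit construction of $\Phi$, the partition-of-unity surjectivity argument, and the $\tilde x=xg$ check for independence of the defining function are all sound additions that the paper omits.
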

\noindent Observe that the natural restriction map 
$$\pi_S:\mathcal{V}_S(M)\to\Gamma(TS)\quad X\mapsto X|_S$$
is not injective and its kernel can be characterized as a line bundle over $S$ with a canonical non trivial section, {\it cf.} \cite[Prop. 4]{GMP}. In order to make contact with \cite{Mel92} and with the notation introduced therein, we stress that Proposition \ref{Prop:S-sections} and the preceding discussion applies also when $M$ has not empty boundary and $S=\partial M$. The only difference lies in the form of the tubular neighbourhood, namely $(-\epsilon,\epsilon)$ ought to be replaced by $[0,\epsilon)$. In this case 
\begin{defn}
	Let $M$ be a connected, orientable smooth manifold with $\partial M\neq\emptyset$. We call $b$-tangent bundle $^bTM\doteq\, ^{\partial M}TM$ as per equation \eqref{Eq:fiber}. 
\end{defn}

In the following we shall refer to $^bT^*M$ as the $b$-cotangent bundle which is a finite rank vector bundle over $M$ dual to $^bTM$. We remark that, for all $p\in M\setminus\partial M$, $^bT^*_pM$ coincides with $T^*_pM$, while, if $p\in\partial M$, $^bT^*_pM=\textrm{span}_{\mathbb{R}}\left\{T^*_p\partial M,\frac{dx}{x}\right\}$. In addition one can observe that there exists a natural non injective map $\pi:T^*M\to{}^bT^*M$, built as follows. Let us consider a tubular neighbourhood of $\partial M$ and a chart $U$ centered at point $p\in\partial M$. Hereon we can consider local coordinates $(x,y_i,\xi,\eta_i)$ of $T^*M$, $i=1,\dots,\dim\partial M$ as well as local coordinates $(x,y_i,\zeta,\eta_i)$ of $T^*_bM$. The projection map $\pi$ acts as follows:
$$\pi(x,y_i,\xi,\eta_i)=(x,y_i,x\xi,\eta_i).$$
On the contrary if we consider a chart $U^\prime$ centered at a point $q\in\mathring{M}=M\setminus\partial M$, thereon the map $\pi$ is nothing but the identity. Hence, one can realize that $\pi\in C^\infty(T^*M;{}^bT^*M)$, though it is not injective. We call {\em compressed $b$-cotangent bundle} 
\begin{equation}\label{Eq: compressed b-cotangent bundle}
^b\dot{T}^*M\doteq \pi[T^*M],
\end{equation}
which is a subset of $^bT^*M$. Further details can be found in \cite{Mel81, Vasy12}. The last geometric structure that we shall need in this work is the {\em b-cosphere bundle} which is realized as the quotient manifold obtained via the action of the dilation group on $T^*_bM\setminus\{0\}$, namely
\begin{equation}\label{Eq: cosphere bundle}
{}^bS^*M\doteq\bigslant{{}^bT^*M\setminus\{0\}}{\mathbb{R}^+}.
\end{equation}
We remark that, if we consider a local chart $U\subset M$ such that $U\cap\partial M\neq\emptyset$ and the local coordinates $(x,y_i,\zeta,\eta_i)$, $i=1,\dots,n-1=\dim\partial M$, on ${}^bT^*_UM\doteq{}^bT^*M|_U$, we can build a natural counterpart on ${}^bS^*_UM$, namely $(x,y_i,\widehat{\zeta},\widehat{\eta}_i)$ where $\widehat{\zeta}=\frac{\zeta}{\rho}$ and $\widehat{\eta_i}=\frac{\eta_i}{\rho}$ with $\rho=|\eta_{n-1}|$.

To conclude the section, we introduce a class of differential operators which is naturally built out of $\mathcal{V}_{\partial M}(M)$.

\begin{defn}\label{Def: b-differential operators}
Let $M$ be a connected, orientable smooth manifold with $\partial M\neq\emptyset$. We call $Diff_b(M)=\oplus_{k=0}^\infty Diff^k_b(M)$ the graded differential operator algebra generated by $\mathcal{V}_{\partial M}(M)$.
\end{defn}

\subsection{Globally hyperbolic spacetimes with timelike boundary}\label{Sec: Globally Hyperbolic}

In this section we specify a distinguished class of manifolds which play a key role in our construction, since they are the natural playground where one can expect that the mixed initial/boundary value problem for partial differential equations ruled by a normally hyperbolic operator is well-posed. Our analysis summarizes the main results obtained by \cite{AFS18}. We also assume that the reader is acquainted with the basic notions of Lorentzian geometry, {\it e.g.} \cite{ONeill83}.

\begin{defn}\label{Def: globally hyperbolic}
	Let $(M,g)$ be a connected, oriented, time oriented, smooth Lorentzian manifold of dimension $\dim M=n\geq 2$ with non empty boundary $\iota:\partial M\to M$. We say that $(M,g)$ 
	\begin{enumerate}
		\item has a {\bf timelike boundary} if $(\partial M,\iota^*g)$ identifies a smooth, Lorentzian manifold,
		\item is {\bf globally hyperbolic} if it does not contain any closed causal curve and if, for every $p,q\in M$, $J^+(p)\cap J^-(q)$ is either empty or compact. Here $J^\pm$ stand for the causal future (+) and past (-).
	\end{enumerate}
If both conditions are met, we call $(M,g)$ a {\em globally hyperbolic spacetime with timelike boundary}.
\end{defn}

Observe that, for simplicity, we assume throughout the paper that also $\partial M$ is {\em connected}. The following theorem, proven in \cite{AFS18}, gives a more explicit characterization of the class of manifolds, we are interested in. As a preliminary step, we recall that, given a Lorentzian manifold $(M,g)$, a Cauchy surface $\Sigma$ is an achronal subset of $M$ such that every inextensible, piecewise smooth, timelike curve intersects $\Sigma$ only once.

\begin{thm}\label{Th: globally hyperbolic}
	Let $(M,g)$ be a globally hyperbolic spacetime with timelike boundary of dimension $\dim M=n\geq 2$. Then it is isometric to a Cartesian product $\mathbb{R}\times\Sigma$ where $\Sigma$ is an $(n-1)$-dimensional Riemannian manifold. The associated line element reads 
	$$ds^2=-\beta d\tau^2 + \kappa_\tau,$$
	where $\beta\in C^\infty(\mathbb{R}\times\Sigma;(0,\infty))$ while $\tau:\mathbb{R}\times\Sigma\to\mathbb{R}$ plays the role of time coordinate. In addition $\mathbb{R}\ni\tau\mapsto\kappa_{\tau}$ identifies a family of Riemmannian metrics, smoothly dependent on $\tau$ and such that, calling $\Sigma_\tau\doteq\{\tau\}\times\Sigma$, each $(\Sigma_\tau,\kappa_\tau)$ is a Cauchy surface with non empty boundary.
\end{thm}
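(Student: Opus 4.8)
The plan is to reduce the statement to the known splitting theorems for globally hyperbolic spacetimes (without boundary) and then handle the boundary by a doubling argument, following the strategy of Aké--Flores--S\'anchez \cite{AFS18} which the excerpt already cites. First I would use the assumption that $(M,g)$ has timelike boundary to construct a \emph{double} $\widehat{M}$ of $M$ across $\partial M$: glueing two copies of $M$ along $\partial M$ produces a smooth manifold without boundary, and because $\iota^*g$ is Lorentzian one can equip $\widehat{M}$ with a Lorentzian metric $\widehat{g}$ restricting to $g$ on each copy, at least of class $C^0$ and smooth away from $\partial M$; with more care (using a collar and a suitable interpolation of the second fundamental form) one arranges $\widehat{g}$ to be smooth. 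The key point, established in \cite{AFS18}, is that global hyperbolicity of $(M,g)$ with timelike boundary is equivalent to global hyperbolicity of the double $(\widehat{M},\widehat{g})$, essentially because causal curves cannot cross $\partial M$ transversally (the boundary is timelike) and the causal diamonds $J^+(p)\cap J^-(q)$ in $\widehat{M}$ are contained in the union of the corresponding diamonds in the two copies.

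Next I would invoke the classical Bernal--S\'anchez splitting theorem for the globally hyperbolic manifold $(\widehat{M},\widehat{g})$: there is a smooth Cauchy time function $\widehat{\tau}$ whose level sets are smooth spacelike Cauchy hypersurfaces, giving an isometry $\widehat{M}\cong\mathbb{R}\times\widehat{\Sigma}$ with $ds^2=-\widehat{\beta}\,d\widehat{\tau}^2+\widehat{\kappa}_{\widehat{\tau}}$. One must then check that this splitting can be chosen \emph{equivariantly} with respect to the $\mathbb{Z}_2$-reflection that exchanges the two copies of $M$: averaging the time function over the involution, or more carefully constructing a temporal function invariant under the isometric $\mathbb{Z}_2$-action, produces Cauchy hypersurfaces that are preserved (setwise) by the reflection and hence intersect $\partial M$ in Cauchy hypersurfaces-with-boundary of $\partial M$. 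Restricting the splitting of $\widehat{M}$ to one copy $M$ then yields $M\cong\mathbb{R}\times\Sigma$ with $\Sigma$ a manifold with boundary $\partial\Sigma=\Sigma\cap\partial M$, the metric of the stated warped-product form $-\beta\,d\tau^2+\kappa_\tau$, and each $\Sigma_\tau$ a Cauchy surface with nonempty boundary; the fact that $\partial M$ is itself globally hyperbolic (being timelike) ensures $\partial\Sigma$ is a genuine Cauchy surface of $(\partial M,\iota^*g)$, which is needed for consistency.

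The main obstacle I anticipate is precisely the equivariance/regularity interface: producing a Cauchy time function on the double that is simultaneously smooth, has spacelike (not merely causal) level sets, \emph{and} is invariant under the reflection isometry, while ensuring that the induced metric on $\partial M$ is the right one and that the smoothness of $\widehat{g}$ near $\partial M$ is not lost in the glueing. Bernal--S\'anchez-type constructions are flexible enough to be carried out $G$-invariantly when $G$ acts by isometries, so the averaging step should go through, but the details of the smooth doubling (matching normal derivatives of the metric across $\partial M$) require the collar neighbourhood theorem and a careful choice of extension; this is the technical heart and is exactly what \cite{AFS18} supplies. Once the equivariant splitting of the double is in hand, restricting to $M$ and reading off $\beta$, $\tau$ and $\kappa_\tau$ is routine, and the claim that each $(\Sigma_\tau,\kappa_\tau)$ is Riemannian follows because the level sets of the time function are spacelike.
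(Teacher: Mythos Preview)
The paper does not supply its own proof of this theorem: it merely states the result and attributes it to \cite{AFS18}. There is therefore nothing in the paper to compare your argument against beyond that citation.

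On the substance of your proposal: the doubling strategy is natural, but the obstacle you flag at the end is more serious than you suggest. The doubled metric $\widehat{g}$ is smooth across $\partial M$ only when $\partial M$ is totally geodesic in $(M,g)$; generically it is only $C^{0}$, since the second fundamental form flips sign under the reflection. Your proposed fix---interpolating the second fundamental form in a collar---changes $g$ near $\partial M$, so after splitting $\widehat{M}$ and restricting you would not recover the original metric, only one that agrees with it outside a collar. One can try to run Bernal--S\'anchez in the $C^{0}$ setting via low-regularity causality theory, but then producing a \emph{smooth} temporal function with everywhere timelike gradient on $M$ requires a separate argument. As written, the smoothness step is a genuine gap.

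For comparison, the argument in \cite{AFS18} does not proceed by doubling. It works intrinsically on the manifold with timelike boundary, adapting the Geroch and Bernal--S\'anchez constructions to that setting: one shows directly that global hyperbolicity in the sense of Definition~\ref{Def: globally hyperbolic} yields a smooth Cauchy temporal function on $M$ whose gradient is timelike and tangent to $\partial M$, from which the orthogonal splitting follows. The $\mathbb{Z}_2$-equivariance and metric-glueing issues you anticipate simply do not arise in that approach.
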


\begin{rem}\label{Rem: boundary metric}
Observe that a notable consequence of this theorem is that, calling $\iota_{\partial M}:\partial M\to M$ the natural embedding map, then $(\partial M,h)$ where $h=\iota^*_{\partial M}g$ is a globally hyperbolic spacetime. In particular the associated line element reads 
$$ds^2|_{\partial M}=-\beta|_{\partial M}d\tau^2+\kappa_{\tau}|_{\partial M}.$$
\end{rem}

\subsection{Asymptotically anti-de Sitter spacetimes}

In this subsection we recall briefly the class of backgrounds which have been considered in \cite{GaWr18} and which represents a key ingredient also in our investigation.
	
\begin{defn}\label{Def: asymptotically AdS}
Let $M$ be an n-dimensional manifold with non empty boundary $\partial M$. Suppose that $\mathring{M}=M\setminus\partial M$ is equipped with a smooth Lorentzian metric $g$ and that 
\begin{itemize}
\item[a)] If $x \in \mathcal{C}^\infty(M)$ is a boundary function, then $\widehat{g} = x^2 g$ extends smoothly to a Lorentzian metric on $M$.
\item[b)] The pullback $h=\iota^*_{\partial M}\widehat{g}$ via the natural embedding map $\iota_{\partial M}:\partial M\to M$ individuates a smooth Lorentzian metric.
\item[c)] $\widehat{g}^{-1}(dx,dx)=1$ on $\partial M$.
\end{itemize}
Then $(M,g)$ is called an {\em asymptotically anti-de Sitter (AdS) spacetime}. In addition, if $(M,\widehat{g})$ is a globally hyperbolic spacetime with timelike boundary, {\it cf.} Definition \ref{Def: globally hyperbolic}, then we call $(M,g)$ a {\em globally hyperbolic asymptotically AdS spacetime}.
\end{defn}

Observe that conditions a), b) and c) are actually independent from the choice of the boundary function $x$ and the pullback $h$ is actually determined up to a conformal multiple since there exists always the freedom of multiplying the boundary function $x$ by any nowhere vanishing $\Omega\in C^\infty(M)$. Such freedom plays no role in our investigation and we shall not consider it further. 

As a direct consequence of the collar neighbourhood theorem and of the freedom in the choice of the boundary function in Definition \ref{Def: asymptotically AdS}, this can always be engineered in such a way, that, given any $p\in\partial M$, it is possible to find a neighbourhood $U\subset\partial M$ containing $p$ and $\epsilon>0$ such that on $U\times[0,\epsilon)$ the line element associated to $g$ reads
\begin{equation}\label{Eq: metric near the boundary}
ds^2 = \frac{-dx^2+h_x}{x^2}
\end{equation}
where $h_x$ is a family of Lorentzian metrics depending smoothly on $x$ such that $h_0\equiv h$.

\begin{rem}
It is important to stress that the notion of asymptotically AdS spacetime given in Definition \ref{Def: asymptotically AdS} is actually more general than the one given in \cite{Ashtekar:1999jx}, which is more commonly used in the general relativity and theoretical physics community. Observe in particular that $h_x$ in Equation \eqref{Eq: metric near the boundary} does not need to be an Einstein metric nor $\partial M$ is required to be diffeomorphic to $\mathbb{R}\times\mathbb{S}^{n-2}$. Since we prefer to make a close connection to \cite{GaWr18} we stick to their nomenclature.
\end{rem}

\begin{rem}\label{Rem: x as global coordinate}
	With a slight abuse of notation and in view of Definition \ref{Def: asymptotically AdS}, henceforth we shall use the symbol $x$ both when referring to the boundary function of an asymptotically AdS spacetime $(M,g)$ and when considering the coordinate normal to $\partial M$.
\end{rem}

\subsection{Manifolds of bounded geometry}\label{Sec: Manifolds of bounded geometry}
To conclude the section we introduce another notable class of manifolds namely those of bounded geometry. These play a key role in defining Sobolev spaces when the underlying background has a non empty boundary. In this section we outline these concepts in an abridged form, in order to keep this work self-consistent. An interested reader can find more details in \cite{Sch01,ANN16,GS13,GOW17} as well as in \cite[Sec. 2.1 \& 2.2]{DDF18}. 

\begin{defn}\label{def:manifold_bounded_wb}
A Riemannian manifold $(N,h)$ with empty boundary is of bounded geometry if 
\begin{itemize}
\item[a)] The injectivity radius $r_{inj}(N)$ is strictly positive,
\item[b)] $N$ is of totally bounded curvature, namely for all $k \in \mathbb{N}\cup\{0\}$ there exists a constant $C_k>0$ such that $\| \bigtriangledown^k R\|_{L^\infty(M)} < C_k$.
\end{itemize}
\end{defn}

This definition cannot be applied slavishly to a manifold with non empty boundary and, to extend it, we need to introduce a preliminary concept.

\begin{defn}\label{def:submanifold_bounded}
Let $(N,h)$ be a Riemannian manifold of bounded geometry and let $(Y,\iota_Y)$ be a codimension $k$, closed, embedded smooth submanifold with an inward pointing, unit normal vector field $\nu_Y$. The submanifold $(Y,\iota^*_Y g)$ is of bounded geometry if:
\begin{itemize}
\item[a)] The second fundamental form $II$ of $Y$ in $N$ and all its covariant derivatives along $Y$ are bounded,
\item[b)] There exists $\varepsilon_Y>0$ such that the map $\phi_{\nu_Y}:Y\times(-\varepsilon_Y,\varepsilon_Y) \rightarrow N$ defined as $(x,z) \mapsto \phi_{\nu_Y}(x,z)\doteq exp_x(z \nu_{Y,x})$ is injective. 
\end{itemize}
\end{defn} 

These last two definitions can be combined to introduce the following notable class of Riemannian manifolds

\begin{defn}\label{Def: Riemannian manifold with boundary and of bounded geometry}
Let $(N,h)$ be a Riemannian manifold with $\partial N\neq\emptyset$. We say that $(N,h)$ is of bounded geometry if there exists a Riemannian manifold of bounded geometry $(N^\prime,h^\prime)$ of the same dimension as $N$ such that:
\begin{itemize}
\item[a)] $N \subset N^\prime$ and $h = h^\prime|_{N}$
\item[b)] $(\partial N, \iota^*h^\prime)$ is a bounded geometry submanifold of $N^\prime$, where $\iota:\partial N \rightarrow N^\prime$ is the embedding map.
\end{itemize}
\end{defn}

\begin{rem}
Observe that Definition \ref{Def: Riemannian manifold with boundary and of bounded geometry} is independent from the choice of $N^\prime$. For completeness, we stress that an equivalent definition which does not require introducing $N^\prime$ can be formulated, see for example \cite{Sch01}.
\end{rem}

In the following we shall introduce Sobolev spaces on a Riemannian manifold $(N,h)$ with boundary and of bounded geometry such that $\dim N=n$. In particular we shall recollect succinctly the main results of \cite[Sec. 2.4]{ANN16} to where we refer for further details. In the following, we denote with $r_{inj}(N)$ and $r_{inj}(\partial N)$, the injectivity radius of $N$, $\partial N$ respectively while $\delta>0$ is such that the normal exponential map $exp^\perp: \partial N\times[0,\delta) \rightarrow N$ is injective. With these data let
\begin{equation}\label{Eq: Fermi chart 1}
\begin{cases}
k_p : & B_r^{n-1}(0) \times[0,r) \rightarrow N \ \ \ \textit{if} \ \ p \in \partial N \\
 & (x,t)  \mapsto exp^\perp(exp_p^{\partial N}(x),t)\\
 k_p : & B_r^{n}(0) \rightarrow N \ \ \ \ \ \ \ \ \ \ \ \ \ \ \ \ \textit{if} \ \ p \in \mathring{N} \\
 & v \mapsto exp^N_p(v)\\
\end{cases},
\end{equation}
where we are implicitly identifying $T_p\partial N$ with $\mathbb{R}^{n-1}$, whenever $p\in\partial N$. In addition we introduce the sets
\begin{equation}\label{Eq: Fermi chart 2}
U_p(r) \doteq
\begin{cases}
k_p (B_r^{n-1}(0) \times [0,r)) \subset N\ \ \textit{if} \ \ p \in \partial N\\
k_p(B_r^{n}(0)) \ \ \ \ \ \ \ \ \ \ \ \ \ \ \ \ \ \ \ \ \ \ \textit{if} \ \ p \in \mathring{N} \\
\end{cases}
\end{equation}
where $r < \texttt{min} \left\{ \frac{1}{2} r_{inj}(N), \frac{1}{4}r_{inj}(\partial N), \frac{1}{2} r_\delta \right\}$. 

\begin{defn}\label{Def: Fermi coordinates}
Let $(N,h)$ be a Riemannian manifold with boundary and of bounded geometry of dimension $\dim N=n$. Let $$r < \texttt{min} \left\{ \frac{1}{2} r_{inj}(N), \frac{1}{4}r_{inj}(\partial N), \frac{\delta}{2} \right\}$$
For each $p\in\partial N$, we call Fermi coordinate chart the map $k_p:B^{n-1}_r(0) \times [0,r) \rightarrow W_p(r)$ with associated coordinates $(x,z):U_p(r) \rightarrow \mathbb{R}^{n-1} \times [0,	\infty) $.
\end{defn}

Observe that in view of Equations \eqref{Eq: Fermi chart 1} and \eqref{Eq: Fermi chart 2}, if $p\in\mathring{N}$, we can always consider geodesic neighbourhoods not intersecting $\partial N$ and endowed with normal coordinates. These data allow to introduce a distinguished covering

\begin{defn}\label{Def: covering of a Riemannian manfiold of bounded geoemtry}
Let $(N,h)$ be a Riemannian manifold with boundary and of bounded geometry. Let $0 < r < \texttt{min} \left\{ \frac{1}{2} r_{inj}(N), \frac{1}{4}r_{inj}(\partial N), \frac{\delta}{2} \right\}$. A subset $\{p_\gamma\}_{\gamma \in I}$, $i \subseteq \mathbb{N}$, is an {\em r-covering subset} of $N$ if:
\begin{itemize}
\item[a)] For each $R>0$, there exists $K_R \in \mathbb{N}$ such that, for each $p \in N$, the set $\{\gamma \in I \ | \ \textit{dist}(p_\gamma,p)<R\}$ has at most $K_R$ elements.
\item[b)] For each $\gamma \in I$, we have either $p_\gamma \in \partial N$ or $\textit{dist}(p_\gamma,\partial N) \geq r$.
\item[c)] $N \subset \bigcup_{\gamma \in I} U_{p_\gamma}(r)$, {\it cf.} Equation \eqref{Eq: Fermi chart 2}.
\end{itemize}
\end{defn}

\noindent At last, we need a partition of unity compatible with an r-covering set.

\begin{defn}\label{Def: partition of unity}
Under the same assumptions of Definition \ref{Def: covering of a Riemannian manfiold of bounded geoemtry}, a partition of unity $\{\phi_\gamma\}_{\gamma \in I}$ of $N$ is called an r-uniform partition of unity associated with the r-covering set $\{p_\gamma\}$ if:
\begin{itemize}
\item[a)] The support of each $\phi_\gamma$ is contained in $U_{p_\gamma}$, {\it cf.} Equation \eqref{Eq: Fermi chart 2},
\item[b)] For each multi-index $\alpha$, there exists $C_\alpha > 0$ such that $|\partial^\alpha \phi_\gamma|\leq C_\alpha$ for all $\gamma\in I$. Here the derivatives $\partial^\alpha$ are computed either in the normal geodesic or in the Fermi coordinates on $U_{p_\gamma}$ depending whether $p$ lies in $\mathring{N}=N\setminus\partial N$ or in $\partial N$.
\end{itemize}
\end{defn}

We have all ingredients to define Sobolev spaces on a Riemannian manifold $(N,h)$ with boundary and of bounded geometry. Let $\{\phi_\gamma\}$ be a uniform partition of unity associated with the r-covering set ${p_{\gamma}}$ as per Definition \ref{Def: partition of unity}. For every $k\in\mathbb{N}$ we call $k$-th Sobolev space, $H^k(N)$, the collection of all distributions $u \in \mathcal{D}^\prime(N)$ such that
\begin{equation}\label{Eq: Sobolev norm}
\| u \|^2_{H^k(N)} \doteq \sum_\gamma \|(\phi_{p_\gamma} u)\circ k_{p_\gamma} \|^2_{H^k}
\end{equation}
where $\|\cdot \|_{H^k}$ is the standard Sobolev space norm either on $\mathbb{R}^n$ or $\mathbb{R}^n_+$. 

It is important to stress that Equation \eqref{Eq: Sobolev norm} does not depend on the choice either of the $r$-covering and of the partition of unity . In addition we stress that, as in the case of a manifold without boundary \cite{GS13}, it turns out that $H^k(N)$ is equivalent to $W^{2,k}(N)$ which is the completion of 
$$\mathcal{E}^k(N)\doteq\{f\in C^\infty(N)\;|\;f,\nabla f,\dots,(\nabla)^k f\in L^2(N)\},$$
with respect to the norm 
$$\|f\|_{W^{2,k}(N)}=\left(\sum\limits_{i=0}^k\|(\nabla)^i f\|_{L^2(N)}\right)^{\frac{1}{2}},$$
where $\nabla$ is the covariant derivative built out of the Riemannian metric $h$, while $(\nabla)^i$ indicates the $i$-th covariant derivative. This notation is employed to disambiguate with $\nabla^i=h^{ij}\nabla_j$.

\vskip .2cm

To conclude the section we outline how the previous analysis can be extended to the case of Lorentzian manifolds. For simplicity we focus on the case without boundary, but the extension is straightforward. Following \cite{GOW17} we start from $(N,h)$ a Riemannian manifold of bounded geometry such that $\dim N=n$. In addition we call $BT^m_{m^\prime}(B_n(0,\frac{r_{inj}(N)}{2}),\delta_E)$, the space of all bounded tensors on the ball $B_n(0,\frac{r_{inj}(N)}{2})$ centered at the origin of the Euclidean space $(\mathbb{R}^n,\delta_E)$ where $\delta_E$ stands for the flat metric. For every $m,m^\prime\in\mathbb{N}\cup\{0\}$, we denote with $BT^m_{m^\prime}(N)$ the space of all rank $(m,m^\prime)$ tensors $T$ on $N$ such that, for any $p\in M$, calling $T_p\doteq(\exp_p\circ e_p)^*T$ where $e_p:(\mathbb{R}^n,\delta)\to (T_pN, h_p)$ is a linear isometry,  the family $\{T_p\}_{p\in M}$ is bounded on $BT^m_{m^\prime}(B_n(0,\frac{r_{inj}(N)}{2}),\delta_E)$. 

\begin{defn}\label{Def: Lorentzian manifold of bounded geometry}
	A smooth Lorentzian manifold $(M,g)$ is of bounded geometry if there exists a Riemannian metric $\widehat{g}$ on $M$ such that:
	\begin{itemize}
		\item[a)] $(M,\widehat{g})$ is of bounded geometry.
		\item[b)] $g \in BT^0_2 (M,\widehat{g})$ and $g^{-1} \in BT^2_0(M,\widehat{g})$.
	\end{itemize}
\end{defn}

\begin{rem}\label{Rem: Always bounded geometry}
	Henceforth we shall assume implicitly that {\em all} manifolds that we are considering are of bounded geometry. Although in many instances this property is not necessary, it becomes vital every time we need to invoke a partition of unity argument. 
\end{rem}

\section{Analytic Preliminaries}\label{Sec: Analytic Preliminaries}

In this section we introduce the basic analytic tools that we will need in the rest of the paper following mainly from \cite{GaWr18} and \cite{Vasy08}.

\subsection{b-pseudodifferential operators}\label{Sec: bPsiDos}

In this part of the section we introduce $b$-pseudodifferential operators and we stick to discussing the tools and the results that we need in the rest of the paper. We assume that the reader is already acquainted with the basic notions of b-calculus and, for further details we refer to the following introductory work, \cite{Gri00}.

In this section with $(M,g)$ we consider for definiteness a globally hyperbolic, asymptotically AdS spacetime with connected boundary $\partial M$ such that $\dim M=n$, {\it cf.} Definition \ref{Def: asymptotically AdS}. With $S^m(^bT^*M)$ we indicate  the set of symbols of order $m$ on $^bT^*M$, while with $\Psi^m_b(M)$ the properly supported $b$-pseudodifferential operators (b-$\Psi$DOs) of order $m$, $m\in\mathbb{R}$. Hence, calling $\dot{C}^\infty(M)$ ({\em resp.} $\dot{C}_0^\infty(M)$) the set of smooth ({\em resp.} smooth and compactly supported) functions in $M$, vanishing at the boundary with all derivatives, each $A\in\Psi^m_b(M)$ can be read as a continuous map $A:\dot{C}^\infty(M)\to\dot{C}^\infty(M)$ which can be extended to an endomorphism on $C^\infty(M)$. In addition, for any $A\in\Psi^m_b(M)$ there exists a principal symbol map 
\begin{equation}\label{Eq: principaly symbol map}
\sigma_{b,m} : A \mapsto a \in S^m(^bT^*M)/S^{m-1}(^bT^*M)
\end{equation}
which gives rise to an isomorphism 
$$\Psi_b^m(M)/\Psi_b^{m-1}(M) \cong S^m(^bT^*M)/S^{m-1}(^bT^*M)$$
This isomorphism and the definition of classical symbol over ${}^b T^*M$ yield as a consequence that $\Psi_b^m(M) \subset \Psi_b^n(M)$ if $m<n$. Notice in addition that the principal symbol of a b-$\Psi$DO is invariant under conjugation by a power of the boundary function $x$, {\it cf.} Definition \ref{Def: asymptotically AdS}. In other words if $A \in \Psi^m_b(M)$, then $x^{-s}A x^s \in \Psi^m_b(M)$ and $\sigma_{b,m}(x^{-s}Ax^s) =\sigma_{b,m}(A)$ for every $s \in \mathbb{R}$.

Since we are considering a Lorentzian manifold we can fix the metric induced volume density $\mu_g$ and, calling $A^*$ the formal adjoint of $A\in\Psi^m_b(M)$ with respect to the pairing induced by $\mu_g$, it turns out that $A^*\in\Psi^m_b(M)$. Furthermore $A^*$ admits the following asymptotic expansion \cite{McSa11}
\begin{equation}
\sigma(A^*)(z,k_z) \sim \sum_{\alpha} \frac{(-i)^{|\alpha|}}{\alpha !} \partial_{k_z}^\alpha \bigtriangledown_z^\alpha \overline{a(z,k_z)},\quad |k_z|\to\infty
\end{equation}  
where $\nabla_z$ denotes the covariant derivative induced from the metric $g$, acting on the point $z\in M$.
Existence of $A^*$ entails that $A$ extends also to an endomorphism of both $\mathcal{E}^\prime(M)$ and $\dot{\mathcal{E}}^\prime(M)$ the topological dual spaces of $C^\infty(M)$ and of $\dot{C}^\infty(M)$ respectively.

Given two pseudodifferential operators $A \in \Psi^m_b(M)$ and $B \in \Psi^n_b(M)$, the principal symbol of the composition $AB$ is $\sigma_{b,m+n}(AB) = \sigma_{b,m}(A)\cdot \sigma_{b,n}(B)$,  while their commutator $[A.B]\in \Psi^{m+n-1}_b(M)$ has a principal symbol which can be expressed locally in terms of Poisson brackets as \begin{gather*}
\sigma_{b,m+n-1}([A,B])=\{\sigma_{b,m}(A),\sigma_{b,n}(B) \}=\\
=\partial_\zeta\sigma_{b,m}(A)\cdot x\partial_x\sigma_{b,n}(B)-\partial_\zeta\sigma_{b,n}(B)\cdot  x\partial_x\sigma_{b,m}(A)+\\
+\sum\limits_{i=1}^{n-1}\left(\partial_{\eta_i}\sigma_{b,m}(A)\cdot \partial_{y_i}\sigma_{b,n}(B)-\partial_{\eta_i}\sigma_{b,n}(B)\cdot \partial_{y_i}\sigma_{b,m}(A)\right),
\end{gather*}
where $(x,y_i,\zeta,\eta_i)$, $i=1,\dots n-1$ are the local coordinates on an open subset of ${}^bT^*M$ introduced in Section \ref{Sec: Geometric preliminaries}.

In the following, we use bounded subsets of $\Psi_b^m(M)$ indexed by a real number in $(0,1)$. In order to make this notion precise, we equip $S^m({}^bT^*M)$ with the structure of a Fr\'echet space by means of the following family of seminorms
$$ \| a \|_{N} \ = \sup_{(z,k_z) \in K_i \times \mathbb{R}^n} \max_{|\alpha|+ |\gamma| \leq N} \dfrac{|\partial_z^\alpha \partial_\zeta^\gamma a(z,k_z) | }{\langle k_z \rangle^{m-|\gamma|}} $$
where $\langle k_z \rangle = (1+|k_z|^2)^{\frac{1}{2}}$, while $\{K_i\}_{i\in I}$, $I$ being an index set, is an exhaustion of $M$ by compact subsets. Hence one can endow $S^m({}^bT^*M)$ with a metric $d$ as follows: Given two symbols $a,b \in S^m({}^bT^*M)$, we call
\begin{equation*}
d(a,b) = \sum_{N \in \mathbb{N}} 2^{-N} \dfrac{\|a-b\|_{N}}{1+\|a-b \|_N}.
\end{equation*} 
Accordingly we say that a subset in $\Psi_b(M)$ is bounded if the subset of the symbols associated with the family of $\Psi$DOs is bounded.
\medskip

We are ready now to discuss the microlocal properties of b-pseudodifferential operators. We begin from the notion of elliptic b-$\Psi$DO.
\begin{defn}\label{Def: ellptic PsiDO}
A b-pseudodifferential operator $A \in \Psi^m_b(M)$ is {\em elliptic} at a point $q_0 \in \ {}^bT^*M \setminus\{0\}$ if there exists $c\in S^{-m}(^bT^*M)$ such that
\begin{equation*}
\sigma_{b,m}(A)\cdot c - 1 \in S^{-1}(^bT^*M)
\end{equation*} 
in a conic neighborhood of $q_0$. We call $ell_b(A)$ the (conic) subset of $^bT^*M \setminus\{0\}$ in which $A$ is elliptic.
\end{defn}

In the following we shall need the wavefront set both of a single and of a family of pseudodifferential operator. We recall here the definition, see \cite{Jos99} and, as far as notation is concerned, we adopt that of \cite{Hor1}:

\begin{defn}\label{Def: WF of PsiDO}
For any $P \in \Psi^m_b(M)$, we say that $(z_0,k_{z_0}) \notin WF^\prime_b(P)$ if the associated symbol $p(z,k_z)$ is such that, for every multi-indices $\gamma$ and for every $N\in\mathbb{N}$, there exists a constant $C_{N,\alpha,\gamma}$ such that
\begin{equation*}
|\partial_z^\alpha \partial^\gamma{k_z} p(z,k_z)| \leq C_{m,\alpha,\gamma} \langle k_z \rangle^{-N} 
\end{equation*}
for $z$ in a neighborhood of $z_0$ and $k_z$ in a conic neighborhood of $k_{z_0}$. 

Similarly, if $\mathcal{A}$ is a bounded subset of $\Psi_b^m(M)$ and $q \in {}^bT^*M$. We say that $q \not \in WF_b^\prime(\mathcal{A})$ if there exists $B \in \Psi_b(M)$, elliptic at $q$, such that $\{ BA : A \in \mathcal{A}\}$ is a bounded subset of $\Psi_b^{-\infty}(M)$.
\end{defn}

We recall a few notable consequences of Definition \ref{Def: WF of PsiDO}, see \cite{Jos99}. First of all $WF_b^\prime(P)=\emptyset$, if and only if $P \in \Psi^{-\infty}_b(M)$. In addition, given two bounded families of pseudodifferential operators, $\mathcal{A}$ and $\mathcal{B}$ it holds
\begin{equation*}
WF_b^\prime(\mathcal{A}+\mathcal{B}) \subset WF_b^\prime(\mathcal{A}) \cup WF_b^\prime(\mathcal{B})  \ \ \ \ WF_b^\prime(\mathcal{A}\mathcal{B}) \subset WF_b^\prime(\mathcal{A}) \cap WF_b^\prime(\mathcal{B})
\end{equation*}
Furthermore, if $B \in \Psi^m_b(M)$ is such that $WF_b^\prime(B) \cap WF_b^\prime(\mathcal{A}) = \emptyset$, then $\{AB : A \in \mathcal{A} \}$ is bounded in $\Psi^{-\infty}(M)$.

\begin{defn}\label{Def: microloal map}
Let $S  \subset \Psi^m_b(M)$ be a closed subspace. We say that a bounded linear map $M:S \rightarrow \Psi^k_b(M)$ is {\em microlocal} if $WF_b^\prime(M(A)) \subset WF_b^\prime(A)$ for all $A \in S$. 
\end{defn}

\noindent We can also microlocalize the notion of parametrix, see \cite{Vasy08}.

\begin{defn}[microlocal parametrix]\label{Def: microlocal paramterix}
Let $A \in \Psi^m_b(M)$ be elliptic as per Definition \ref{Def: ellptic PsiDO} in an open cone centered at a point $q \in {}^bT^*M \setminus\{0\}$. Then there exists a microlocal parametrix $\mathcal{G}$ for $A$ at $q$, namely $\mathcal{G}\in \Psi^{-m}_b(M)$ such that $\mathcal{G}A$ and $A\mathcal{G}$ are microlocally the identity operator near $q$. This means that $q \not \in WF_b^\prime(\mathcal{G}A-\mathbb{I})$ and $q \not \in WF_b^\prime(A\mathcal{G}-\mathbb{I})$. 
\end{defn}

Observe that, if $K \subset {}^bT^*M$ is a compact set and $A \in \Psi^m_b(M)$ is elliptic on $K$, then there exists $\mathcal{G}\in \Psi^{-m}_b(M)$ such that $K \cap WB_b^\prime (\mathcal{G}A-\mathbb{I}) = K \cap WB_b^\prime (A\mathcal{G}-\mathbb{I}) = \emptyset$. This entails that $E_1 =\mathcal{G}A-\mathbb{I}$ and $E_2 = A\mathcal{G}-\mathbb{I}$ lie in $\Psi^{-\infty}_b(K)$.

To conclude this part of the section, we stress that, in order to study the behavior of a b-pseudodifferential operator at the boundary, it is useful to introduce the notion of \textit{indicial family}, \cite{GaWr18}. Let $A \in \Psi_b^m(M)$. For a fixed boundary function $x$, {\it cf.} Definition \ref{Def: asymptotically AdS}, and for any $v \in \mathcal{C}^\infty(\partial M)$ we define the indicial family $\widehat{N}(A)(s):C^\infty(\partial M)\to C^\infty(\partial M)$ as:
\begin{equation}\label{Eq: indicial family}
\widehat{N}(A)(s)v = x^{-is}A \left( x^{is}u \right)|_{\partial M}
\end{equation}
where $u \in \mathcal{C}^\infty(M)$ is any function such that $u|_{\partial M}=v$. The indicial family does not depend on the choice of $u$ and it is manifestly an homomorphism of the algebra of pseudodifferential operators since, for all $A\in\Psi^m(M)$ and for all $B\in\Psi^{m^\prime}(M)$,
\begin{equation}\label{Eq: product of indicial families}
\widehat{N}(AB)(s) = \widehat{N}(A)(s) \circ \widehat{N}(B)(s).
\end{equation}

\subsection{Twisted Sobolev spaces}\label{subsec:twisted_sobolev}

Following the road paved in \cite{GaWr18}, a key ingredient of our analysis will be the Dirichlet form. To this end it is necessary to introduce a twisted version of the standard Sobolev spaces to account for the behaviour of the fields at the boundary, see also \cite{War1}. Furthermore, to deal with boundary conditions other than that of Dirichlet type, it is convenient to use twisted derivatives. 

Since we reckon that some readers might not find straightforward the necessity of twisting Sobolev spaces, we feel worth starting from a short motivational example. Let us consider the simplest case of a globally hyperbolic, asymptotically AdS spacetime as per Definition \ref{Def: asymptotically AdS}, namely $PAdS_2$, the Poincar\'e patch of the two dimensional anti de Sitter spacetime. $PAdS_2$ is a manifold diffeomorphic to $\mathbb{R} \times [0,\infty)$ whose metric is $g=x^{-2}\eta_2$ where $\eta$ is the two-dimensional Minkowski metric. Consider $\phi:PAdS_2\to\mathbb{R}$ obeying the Klein-Gordon equation
\begin{equation}\label{Eq: KG on PAdS2}
\Big( \square_g - m^2\Big) \phi =0 \Longrightarrow x^2\Big( \square_\eta - m^2 \Big) \phi = 0.
\end{equation}
For future convenience, we introduce the parameter $\nu = \frac{1}{2}\sqrt{1+4m^2}$ constrained to be positive. This is known is the theoretical physics literature as the Breitenlohner-Freedman bound \cite{BF82}. Equation \eqref{Eq: KG on PAdS2} can be solved by separation of variables using the ansatz $\phi(t,x)=F(t)H(x)$, where 
\begin{align*}
& F(t) = C_1 e^{\lambda t} + C_2 e^{-\lambda t}\\
& M(x) = C_3 \sqrt{x} J_\nu(-i\sqrt{\lambda}x)+C_4\sqrt{x}Y_\nu(-i\sqrt{\lambda}x)
\end{align*}
with $C_1,C_2,C_3,C_4 \in \mathbb{R}$ and $\lambda \in \mathbb{C}$ to be determined imposing boundary conditions at $x=0$ and at $x \rightarrow + \infty$. Here $J_\nu$ and $Y_\nu$ are the standard Bessel functions. In particular, $M_1(x) = \sqrt{x} J_\nu(-i\sqrt{\lambda}x)$ is the solution associated with the Dirichlet boundary condition at $x=0$, while $M_2(x) = \sqrt{x} Y_\nu(-i\sqrt{\lambda}x)$ is the Neumann counterpart. The indicial roots $\nu_\pm = \frac{1}{2} \pm \nu$ describe the behavior of the solutions near the boundary. In the following, we consider a relatively compact subset $U\subset PAdS_2$ such that $U \cap \partial PAdS_2 \neq \emptyset$ and we introduce the Dirichlet form 
\begin{equation}\label{Eq: Dirichlet form on PAdS2}
\mathcal{E}_D(\phi_1,\phi_2) = - \int_U g(d\phi_1,d\phi_2) d\mu_g = - \int_U \eta(d\phi_1,d\phi_2) dxdt,
\end{equation}
where $\phi_1,\phi_2$ are arbitrary solutions of Equation \eqref{Eq: KG on PAdS2}, while $g(\cdot,\cdot)$ is the metric induced pairing between $1$-forms. A direct inspection unveils that, if we choose as $\phi_1$ the solution of Equation \eqref{Eq: KG on PAdS2} with Dirichlet boundary conditions, $\eta(d\phi_1,d\phi_1) \sim x^{1+2\nu}$ close to the boundary $x=0$. Hence the $x$-integral in Equation \eqref{Eq: Dirichlet form on PAdS2} is always convergent, while, if we consider as $\phi_2$ the solution of Equation \eqref{Eq: KG on PAdS2} with Neumann boundary conditions, the $x$-integral is always divergent since $\eta(d\phi_2,d\phi_2) \sim x^{-1-2\nu}$. In order to bypass this hurdle, given a generic solution $\phi$ of Equation \eqref{Eq: KG on PAdS2}, we introduce the twisted derivatives
\begin{align}
& \widetilde{Q}_0 \phi = x^{\frac{1}{2}-\nu} \frac{\partial}{\partial t} \Big( x^{-\frac{1}{2}+\nu} \phi \Big) \label{Eq: Q0}\\
& \widetilde{Q}_1 \phi = x^{\frac{1}{2}-\nu} \frac{\partial}{\partial x} \Big( x^{-\frac{1}{2}+\nu} \phi \Big) \label{Eq: Q1}
\end{align}
Observe that this procedure only affects the derivative in the $x$ direction and that the power of the twisting factor is nothing but the indicial root $\nu_-$. In addition we define
\begin{equation}\label{Eq: twisted Dirichlet form on PAdS2} 
\mathcal{E}_0(\phi_1,\phi_2) = - \int_U g(d_{\widetilde{Q}}\phi_1,d_{\widetilde{Q}}\phi_2) d\mu_g = - \int_U \eta(d_{\widetilde{Q}}\phi_1,d_{\widetilde{Q}}\phi_2)dxdt
\end{equation}
where $d_{\widetilde{Q}}$ is the twisted differential defined as 
\begin{equation}\label{Eq: twsited differential PAdS2}
d_{\widetilde{Q}} \phi = x^{\nu_-} d \Big( x^{-\nu_-}\phi\Big).
\end{equation}
The integral along the $x$-direction in Equation \eqref{Eq: twisted Dirichlet form on PAdS2} is no longer divergent, both if we choose Dirichlet or Neumann boundary conditions.

Motivated by this example and following \cite{GaWr18}, we consider henceforth a generic globally hyperbolic, asymptotically AdS spacetime $(M,g)$ of dimension $\dim M=n$, {\it cf.} Definition \ref{Def: asymptotically AdS} and we introduce the space of twisted differential operators
\begin{equation*}
\textbf{Diff}^1_\nu(M) = \{ x^{\nu_-} D x^{- \nu_-} \ | \ D \in \textbf{Diff}^1(M) \}
\end{equation*}
where $\nu_- = \frac{n-1}{2}-\nu$, $\nu>0$ while $Diff^1(M)$ stands for the set of first order differential operators on $(M,g)$. In the following $\nu_-$ shall correspond to the lowest indicial root of the Klein-Gordon operator on $(M,g)$.

\begin{rem}
Since $\textbf{Diff}^1_\nu(M) \subset x^{-1}\textbf{Diff}^1_b(M)$ \cite[Lemma 3.1]{GaWr18}, it follows that $\textbf{Diff}^1_\nu(M)$ is finitely generated.
\end{rem}

\noindent Starting from these data we can introduce
\begin{equation}\label{Eq: rescaled L^2 space}
\mathcal{L}^2(M) \doteq L^2(M,x^2d\mu_g)
\end{equation}
and the corresponding twisted Sobolev space 

\begin{equation}\label{Eq: twisted Sobolev space}
\mathcal{H}^1(M) \doteq \Big\{ u \in \mathcal{L}^2(M) \ | \ Q u \in \mathcal{L}^2(M) \ \forall Q \in \textbf{Diff}^1_\nu(M) \Big\},
\end{equation}
whose norm is 
\begin{equation}\label{Eq: twisted norm}
\| u \|^2_{\mathcal{H}^1(M)} = \| u \|^2_{\mathcal{L}^2(M)} + \sum_{i=1}^n \|Q_i u\|^2_{\mathcal{L}^2(M)}
\end{equation}
where $\{Q_i\}_{i=1\dots n}$ is a generating set of $\textbf{Diff}^1_\nu(M)$.
In addition we shall be considering also $\mathcal{L}^2_{loc}(M)$ the space of locally square integrable functions over $M$ with respect to the measure $x^2d\mu_g$. Similarly one can introduce $\dot{\mathcal{L}}^2_{loc}(M)$ starting from $\dot{C}^\infty(M)$ in place of $C^\infty(M)$. 

All these spaces admit corresponding first order Sobolev spaces, which are indicated with  $\mathcal{H}^1_{loc}(M)$, $\dot{\mathcal{H}}^1_{loc}(M)$ respectively. Their topological duals are denoted instead with $\dot{\mathcal{H}}^{-1}_{loc}(M)$ and $\mathcal{H}^{-1}_{loc}(M)$. In addition we define 
\begin{equation}\label{Eq: H^1_0(m)}
\mathcal{H}^1_0(M)=\mathcal{H}^1_{loc}(M)\cap\mathcal{E}^\prime(M),
\end{equation}
where we denote with $\mathcal{E}^\prime(M)$ the topological dual space of $\dot{C}^\infty(M)$. Similarly one can define $\mathcal{H}^{-1}_0(M)$.

In the following we shall need two distinguished maps $\gamma_\pm : \mathcal{H}^1_{loc}(M)\rightarrow\mathcal{L}^2_{loc}(\partial M)$ playing the role of trace maps for twisted Sobolev spaces, hence generalizing to the case in hand the construction of \cite{GS13}. The first one can be individuated thanks to this result:

\begin{thm}[\cite{GaWr18}, Lemma 3.3]\label{lemma-exp1}
Let $\nu>0$, $2r=n-2$ and let $\mathbb{R}^n_+\doteq\mathbb{R}^{n-1}\times [0,\infty)$. If $u \in \mathcal{H}^1(\mathbb{R}^n_+)$, then the restriction of $u$ to $\mathbb{R}^{n-1}\times[0,\epsilon)$ for any $\epsilon>0$ admits an asymptotic expansion
\begin{equation}
u = x^{\nu_-}u_- + x^{r+1} H_b^1([0,\varepsilon);L^2(\mathbb{R}^{n-1}))
\end{equation}
where $u_- \in H^\nu(\mathbb{R}^{n-1})$ while $x$ is the coordinate along $[0,\infty)$. Furthermore, the application $u \mapsto \gamma_-u\doteq u_-$ is a continuous map from $\mathcal{H}^1(\mathbb{R}^n_+) \rightarrow H^\nu(\mathbb{R}^{n-1})$.
\end{thm}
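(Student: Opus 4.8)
\textbf{Proof strategy for Theorem \ref{lemma-exp1}.}

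The plan is to reduce the statement to a one-dimensional analysis in the normal variable $x$, keeping the tangential variables $y\in\mathbb{R}^{n-1}$ as parameters, and then to exploit the structure of the twisted derivative $\widetilde{Q}_1=x^{\nu_-}\partial_x x^{-\nu_-}$. First I would set $v=x^{-\nu_-}u$, so that the hypothesis $u\in\mathcal{H}^1(\mathbb{R}^n_+)$, which controls $\widetilde{Q}_1 u=x^{\nu_-}\partial_x v$ and the tangential twisted derivatives in $\mathcal{L}^2=L^2(x^2\,d\mu_g)$, translates into weighted $L^2$-bounds on $\partial_x v$ and on the tangential derivatives of $v$. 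The measure $x^2\,d\mu_g$ near the boundary is, up to a smooth positive factor, $x^{2-n}\,dx\,dy$ (since $d\mu_g\sim x^{-n}dx\,dy$ for the metric \eqref{Eq: metric near the boundary}), so with $2r=n-2$ the relevant statement becomes: $x^{-r}\partial_x v\in L^2(dx\,dy)$ and $x^{-r}v\in L^2$ together with tangential control. Writing things in terms of $w=x^{-r}v=x^{-r-\nu_-}u$ one recognizes exactly the weighted Hardy-type setting of \cite[Lemma 3.3]{GaWr18}.

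The core of the argument is a Hardy inequality in the $x$-variable. For fixed $y$, from $\partial_x v\in x^{r}L^2_x$ one integrates: $v(x,y)-v(x',y)=\int_{x'}^{x}\partial_s v(s,y)\,ds$, and a Cauchy–Schwarz estimate with the weight $s^{r}$ shows that $v(x,y)$ has a limit $v_0(y)$ as $x\to 0^+$, with the remainder $v(x,y)-v_0(y)$ controlled by $x^{r+1/2}$ times an $L^2_x$-tail — more precisely one gets $v-v_0\in x^{r+1}H^1_b([0,\varepsilon);\,\cdot\,)$ after the standard $b$-Sobolev bookkeeping, which is where the $x^{r+1}H^1_b$ term in the statement comes from. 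Setting $u_-=v_0$ yields the decomposition $u=x^{\nu_-}u_-+x^{\nu_-+r+1}H^1_b=x^{\nu_-}u_-+x^{r+1}\cdot x^{\nu_-}H^1_b$; one then checks $x^{\nu_-}H^1_b([0,\varepsilon);L^2(\mathbb{R}^{n-1}))\subset x^{r+1}H^1_b$ is really what is meant, i.e. the exponents match after accounting for the weight in $\mathcal{L}^2$ versus the unweighted $L^2$ in the target — this is a routine but slightly delicate exponent check that I would do carefully. The tangential regularity $u_-\in H^\nu(\mathbb{R}^{n-1})$ is obtained by the same limiting procedure applied after a tangential Fourier transform (or by interpolation): the identity $\langle\xi\rangle^{\nu} \hat u_-(\xi)$ is bounded in $L^2(d\xi)$ by combining the $x\to 0$ limit with the weighted bound on tangential twisted derivatives, using that $\nu=r+1-\nu_-$ is the gap fixed by $2r=n-2$ and $\nu_-=\frac{n-1}{2}-\nu$; the Mellin transform in $x$ makes the appearance of the exponent $\nu$ transparent, since $\nu_-$ and $\nu_-+2\nu$ are precisely the two indicial roots and the "good" part of $u$ is the projection onto the $\nu_-$ root.

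Continuity of $u\mapsto\gamma_- u=u_-$ from $\mathcal{H}^1(\mathbb{R}^n_+)$ to $H^\nu(\mathbb{R}^{n-1})$ then follows by tracking constants through the Hardy inequality and the Fourier/Mellin estimates: every bound above is linear in the $\mathcal{H}^1$-norm of $u$, so the map is bounded, and it is well-defined (independent of $\varepsilon$) because $v_0(y)=\lim_{x\to0}x^{-\nu_-}u(x,y)$ is intrinsic. The main obstacle I anticipate is not any single inequality but the careful matching of weights and indicial exponents — keeping straight the three different "natural" measures ($dx\,dy$, $d\mu_g$, $x^2\,d\mu_g$), the shift by $\nu_-$ from the twisting, and the shift by $r$ from the volume density, so that the claimed remainder space $x^{r+1}H^1_b([0,\varepsilon);L^2(\mathbb{R}^{n-1}))$ and the claimed trace space $H^\nu(\mathbb{R}^{n-1})$ come out with exactly the stated exponents. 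A secondary technical point is justifying that the pointwise-in-$y$ limit assembles into a genuine element of $H^\nu$ and that the remainder lies in $H^1_b$ rather than merely $L^2$ in $x$; this requires using the full strength of the $\mathcal{H}^1$ hypothesis (both normal and tangential twisted derivatives) rather than just the normal one, and is handled exactly as in \cite{GaWr18}.
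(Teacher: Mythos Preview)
The paper does not supply its own proof of this statement: Theorem~\ref{lemma-exp1} is quoted verbatim from \cite[Lemma~3.3]{GaWr18} and invoked as a black box, so there is no in-paper argument to compare against. Your sketch is in the right spirit---the substitution $v=x^{-\nu_-}u$, a Hardy-type inequality in the normal variable, and a Mellin/Fourier analysis for the tangential regularity are exactly the ingredients used in the original reference---and you have correctly flagged the one genuinely delicate point, namely the bookkeeping that reconciles the weight in $\mathcal{L}^2=L^2(x^2\,d\mu_g)$ with the unweighted target space so that the remainder lands in $x^{r+1}H^1_b$ with the stated exponent. That exponent check is not merely cosmetic: your intermediate claim ``$v-v_0\in x^{r+1}H^1_b$'' followed by ``$u=x^{\nu_-}u_-+x^{\nu_-+r+1}H^1_b$'' does not literally give $x^{r+1}H^1_b$ unless one unwinds what measure the $H^1_b$ space is taken with respect to, so when you write this up you should fix one convention (unweighted $dx\,dy$ throughout, say) and track the powers of $x$ explicitly rather than deferring to ``routine''.
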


In order to extend this result to a generic globally hyperbolic, asymptotically AdS spacetime, we can use a standard partition of unity argument to extend Theorem  \ref{lemma-exp1} to identify a continuous map
\begin{equation}\label{Eq: gamma-}
\gamma_-: \mathcal{H}^1_0(M)\to\mathcal{H}^\nu(\partial M)
\end{equation}
and similarly to $\mathcal{H}^1_{loc}(M)$. With a slight abuse of notation we shall employ the same symbol $\gamma_-$ as in Theorem \ref{lemma-exp1} since we reckon that no confusion can arise. Notice that Equation \eqref{Eq: gamma-} depends in general on the choice of the boundary function $x$, {\it cf.} Definition \ref{Def: asymptotically AdS}. Henceforth se shall assume that one such function has been selected and it will be kept unchanged throughout the paper.

For later purposes we give the following bound on the action of $\gamma_-$, see \cite{Gan15}.
\begin{lemma}\label{eq:gamma-bound}
Let $u \in \mathcal{H}_{loc}^1(M)$ be compactly supported. Then for any $\varepsilon>0$ there exists $C_\varepsilon>0$ such that $$\| \gamma_- u \|^2_{L^2(\partial M)} \leq \varepsilon \| u \|^2_{\mathcal{H}^1_{loc}(M)} +C_\varepsilon \| u \|^2_{\mathcal{L}^2_{loc}(M)}$$
\end{lemma}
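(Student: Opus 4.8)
The plan is to reduce the estimate to the flat half-space model via the partition of unity underlying the definition of $\mathcal{H}^1_{loc}(M)$ and then establish a trace inequality of the stated form on $\mathbb{R}^n_+$. First I would fix a relatively compact open set $\Omega$ containing $\mathrm{supp}\, u$ and choose finitely many Fermi charts $\{k_{p_\gamma}\}_{\gamma\in F}$, $F$ finite, together with a subordinate partition of unity $\{\phi_\gamma\}$ as in Definition \ref{Def: partition of unity}, so that $u=\sum_{\gamma\in F}\phi_\gamma u$ with each $\phi_\gamma u$ supported in a single chart. For the charts that do not meet $\partial M$, the term $\gamma_-(\phi_\gamma u)$ vanishes, so only the boundary charts contribute; on each of those, transport to $\mathbb{R}^n_+$ and use that the pushed-forward function lies in $\mathcal{H}^1(\mathbb{R}^n_+)$ with norm controlled by $\|u\|_{\mathcal{H}^1_{loc}(M)}$, while the relevant $\mathcal{L}^2$ norms are likewise comparable (the weight $x^2 d\mu_g$ and the flat weight differ by smooth bounded factors on $\Omega$, and $\phi_\gamma$ has uniformly bounded derivatives). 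Summing the finitely many local contributions and absorbing the finite combinatorial constants then yields the global statement from the flat one.

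The analytic core is thus the half-space estimate: for $v\in\mathcal{H}^1(\mathbb{R}^n_+)$ compactly supported and any $\varepsilon>0$,
\begin{equation*}
\|\gamma_- v\|^2_{L^2(\mathbb{R}^{n-1})}\leq \varepsilon\|v\|^2_{\mathcal{H}^1(\mathbb{R}^n_+)}+C_\varepsilon\|v\|^2_{\mathcal{L}^2(\mathbb{R}^n_+)}.
\end{equation*}
I would obtain this from the asymptotic expansion of Theorem \ref{lemma-exp1}, $v=x^{\nu_-}v_- + x^{r+1}w$ with $w\in H^1_b([0,\delta);L^2(\mathbb{R}^{n-1}))$ and $v_-=\gamma_- v$, combined with the one-dimensional weighted trace/interpolation inequality: writing $(x^{-\nu_-}v)(y,x)=v_-(y)+O(x^{r+1-\nu_-})$ and
\begin{equation*}
|v_-(y)|^2 = -\int_0^\delta \partial_x\!\left(\chi(x)^2\,|(x^{-\nu_-}v)(y,x)|^2\right)dx
\end{equation*}
for a cutoff $\chi$ equal to $1$ near $0$, one bounds $\|v_-\|^2_{L^2}$ by a product of an $L^2$-norm of $x^{-\nu_-}v$ (weighted) and an $L^2$-norm of its $x$-derivative, i.e. of the twisted derivative $\widetilde{Q}_1 v$; a Cauchy--Schwarz/Young inequality $ab\leq\varepsilon a^2+\tfrac{1}{4\varepsilon}b^2$ then splits this into the $\varepsilon\|v\|^2_{\mathcal{H}^1}$ term (carrying the twisted derivative) and the $C_\varepsilon\|v\|^2_{\mathcal{L}^2}$ term. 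One must check that the leftover pieces from the $x^{r+1}w$ remainder and from commuting $\chi$ through are likewise absorbable; since $r+1>\nu_-$ (because $2r=n-2$ and $\nu_-=\tfrac{n-1}{2}-\nu<\tfrac{n-1}{2}$, so $r+1-\nu_- = \tfrac{n}{2}-\nu_- = \tfrac12+\nu>0$), the weight $x^{2(r+1-\nu_-)}$ is bounded near $0$ and these terms are controlled by $\|v\|^2_{\mathcal{L}^2}$ directly.

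The main obstacle I anticipate is making the weighted one-dimensional trace inequality genuinely uniform in the partition of unity — i.e. ensuring the constants $C_\varepsilon$ from the local charts do not degenerate as one varies $\gamma$. This is where the bounded-geometry hypothesis (Remark \ref{Rem: Always bounded geometry}) is essential: it guarantees uniformly bounded transition data, a uniformly finite cover of $\Omega$, and uniform control of the metric coefficients and of $\phi_\gamma$, so that a single pair $(\varepsilon,C_\varepsilon)$ works across all charts and the finite sum closes. A secondary technical point is that $\gamma_-$ as defined on $\mathcal{H}^1_0(M)$ a priori lands in $\mathcal{H}^\nu(\partial M)$, and one needs the continuous inclusion $\mathcal{H}^\nu(\partial M)\hookrightarrow L^2_{loc}(\partial M)$ together with compatibility of the local flat expansions with the global one — both of which follow from Theorem \ref{lemma-exp1} and the partition-of-unity construction of $\gamma_-$ in Equation \eqref{Eq: gamma-}.
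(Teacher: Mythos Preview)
The paper does not prove this lemma at all: immediately before the statement it writes ``see \cite{Gan15}'' and simply quotes the result from Gannot's work on Bessel-type boundary value problems. So there is no ``paper's own proof'' to compare against; your proposal supplies an argument where the authors chose to outsource one.

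Your strategy---reduce by the bounded-geometry partition of unity to compactly supported functions in a half-space model, then run a one-dimensional fundamental-theorem-of-calculus/Young argument on $x^{-\nu_-}v$ to split the trace into $\varepsilon\|v\|_{\mathcal{H}^1}^2+C_\varepsilon\|v\|_{\mathcal{L}^2}^2$---is the standard route and is correct in outline. The localization step and the uniformity discussion are handled well; the bounded-geometry hypothesis is exactly what makes the constants independent of the chart, as you note. One point to tighten: in the one-dimensional step you assert that the two Cauchy--Schwarz factors land respectively in $\|v\|_{\mathcal{L}^2}$ and $\|\widetilde Q_1 v\|_{\mathcal{L}^2}$, but the naive split puts the weight $x^{-2\nu_-}=x^{2\nu-1}x^{2-n}$ on both, and for $\nu\in(0,\tfrac12)$ the extra factor $x^{2\nu-1}$ is unbounded near $x=0$. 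This is not fatal---one fixes it either by inserting an asymmetric weight in Cauchy--Schwarz (shifting a power of $x$ from one factor to the other) or by invoking the weighted Hardy inequality that is implicit in the definition of $\mathcal{H}^1$ for the Bessel operator, which is precisely what \cite{Gan15} establishes---but you should say explicitly which device you use rather than leaving ``(weighted)'' as a placeholder.
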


To conclude this part, we need to introduce one last ingredient. At the beginning of the subsection, we have shown that it is possible to modify on $PAdS_2$ the Dirichlet form, see Equation \eqref{Eq: Dirichlet form on PAdS2} provided that one uses a twisted differential as in Equation \eqref{Eq: twsited differential PAdS2}. This idea can be generalized as follows

\begin{defn}\label{Def: twisting function}
We call {\em smooth twisting function} any $F \in x^{\nu_-}\mathcal{C}^\infty(M)$, such that $x^{-\nu_-}F>0$ is strictly positive on $M$.
\end{defn}

Hence, for any $B \in \textbf{Diff}^1(M)$, it holds that $FBF^{-1} \in \textbf{Diff}^1_\nu (M)$. Conversely, any $Q \in \textbf{Diff}^1_\nu (M)$ is always of the form $Q=FBF^{-1}$ for some $B \in \textbf{Diff}^1(M)$ while $F$ is a twisting function.

\subsection{Interaction with b-calculus and wavefront sets}\label{Sec: b-calculus and WF}
In this section we recall some useful results from \cite{GaWr18}, \cite{Vasy08} and \cite{Vasy10} concerning the interplay between properly supported b-$\Psi$DOs and $\textbf{Diff}_\nu(M)$. Throughout this section we are still assuming that $(M,g)$ is a globally hyperbolic, asymptotically AdS spacetime, {\it cf.} Definition \ref{Def: asymptotically AdS}. In view of Theorem \ref{Th: globally hyperbolic}, $M$ is isometric to $\mathbb{R}\times\Sigma$ and we can introduce a time coordinate $t$ running over the whole $\mathbb{R}$. For any twisting function $F$, {\it cf.} Definition \ref{Def: twisting function}, we call
$$Q_0=F\partial_x F^{-1}\in\textbf{Diff}^1_\nu (M).$$

\begin{lemma}[Lemma 3.7 of \cite{GaWr18}]\label{lem:diff-psi-interact}
Let $A \in \Psi^m_b(M)$ have compact support in $U \subset M$. There exist two pseudodifferential operators $A_1 \in \Psi^{m-1}_b(M)$ and $A_0 \in \Psi^m_b(M)$ such that $$[Q_0,A] = A_1 Q_0 + A_0 $$ where $\sigma_{b,m-1}(A_1)= -i\partial_\zeta \sigma_{b,m}(A)$ and $\sigma_{b,m}(A_0)= -i\partial_x \sigma_{b,m}(A)$, $\sigma_{b,m}$ being the principal symbol map as in Equation \eqref{Eq: principaly symbol map}, while $(x,\zeta)$ are the local coordinates on ${}^bT^*M$ introduced in Section \ref{Sec: Geometric preliminaries}. Also the maps $A \mapsto A_0$ and $A \mapsto A_1$ are microlocal in the sense of Definition \ref{Def: microloal map}. Furthermore,
$$ Q_0 A = A^\prime Q_0 + A^{\prime \prime} $$
for some $A^\prime, A^{\prime \prime} \in \Psi^m_b(M)$. The maps $A \mapsto A^\prime$ and $A \mapsto A^{\prime \prime}$ are microlocal.
\end{lemma}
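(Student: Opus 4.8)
The plan is to compute $[Q_0, A]$ directly in local coordinates adapted to the boundary, using the fact that $Q_0 = F\partial_x F^{-1}$ is, up to a smooth zeroth-order factor, the vector field $x\partial_x$ rescaled by $x^{-1}$, so that $Q_0 \in x^{-1}\mathbf{Diff}^1_b(M)$, and then exploit the $b$-calculus symbol calculus already recalled in Section \ref{Sec: bPsiDos}. First I would write $Q_0 = x^{-1}(x\partial_x) + (x^{-1}\partial_x(x^{-\nu_-}F))\cdot(\text{mult.})$; more efficiently, since $A\in\Psi^m_b(M)$ conjugates nicely under powers of $x$ and under multiplication by smooth functions, it suffices to treat the generator $x\partial_x\in\mathcal{V}_{\partial M}(M)\subset\mathrm{Diff}^1_b(M)$ and track how the twisting factor $F$ and the conjugation by $x^{-1}$ act. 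Because $x\partial_x$ is itself a $b$-differential operator of order one, the commutator $[x\partial_x, A]$ lies in $\Psi^m_b(M)$ with principal symbol given by the $b$-Poisson bracket $\{\,\cdot\,,\sigma_{b,m}(A)\}$ displayed in Section \ref{Sec: bPsiDos}; evaluating that bracket with the symbol of $x\partial_x$, which is (the pullback of) $\zeta$, yields exactly $-i\,x\partial_x\,\sigma_{b,m}(A)$ modulo lower order.

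The key algebraic step is the decomposition: I want to peel off a copy of $Q_0$ on the right. Writing $[Q_0,A] = Q_0 A - A Q_0$, one commutes $A$ past the first-order operator and collects the term proportional to $\partial_x$ (equivalently to $Q_0$, after reinserting the twisting conjugation) into $A_1 Q_0$, leaving the remainder $A_0$ which is the "$\partial_x$-free" part of the commutator and hence still of order $m$. Concretely, in the coordinates $(x,y_i,\zeta,\eta_i)$ on ${}^bT^*M$, the symbol of $[Q_0,A]$ computed via the composition formula has a leading part $-i\big(\partial_\zeta\sigma_{b,m}(A)\cdot\zeta \cdot x^{-1} - \dots\big)$; the term carrying $\partial_\zeta\sigma_{b,m}(A)$ is precisely what gets absorbed into $A_1Q_0$ with $\sigma_{b,m-1}(A_1) = -i\partial_\zeta\sigma_{b,m}(A)$, and the genuinely differentiating-in-$x$ contribution gives $\sigma_{b,m}(A_0) = -i\partial_x\sigma_{b,m}(A)$. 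That the operators $A\mapsto A_0$ and $A\mapsto A_1$ so defined are microlocal in the sense of Definition \ref{Def: microloal map} follows because they are built from $A$ by composition with fixed $b$-$\Psi$DOs and differentiation of symbols, neither of which enlarges $WF'_b$; this is the standard argument that $WF'_b([P,A])\subset WF'_b(A)$ and that symbol differentiation respects the wavefront set. The analogous identity $Q_0 A = A'Q_0 + A''$ is then immediate by setting $A' = A + A_1$ and $A'' = A_0 + [\text{lower order collected terms}]$, or rather by rearranging the commutator identity; microlocality of these maps follows from the same reasoning.

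The main obstacle I anticipate is bookkeeping the conjugation by the twisting function $F$ cleanly. Since $Q_0 = F\partial_x F^{-1} = \partial_x + (\partial_x\log(x^{-\nu_-}F)) + \nu_- x^{-1}\cdot(\text{the }x^{-1}\text{ from differentiating }x^{-\nu_-})$, there is a genuine $x^{-1}$ singular coefficient, so $Q_0\notin\mathrm{Diff}^1_b(M)$ and one cannot invoke the $b$-commutator formula verbatim; instead one writes $Q_0 = x^{-1}\widetilde{Q}_0$ with $\widetilde{Q}_0\in\mathrm{Diff}^1_b(M)$, uses $[x^{-1}\widetilde{Q}_0, A] = x^{-1}[\widetilde{Q}_0,A] + [x^{-1},A]\widetilde{Q}_0$, and then recalls that $[x^{-1},A] = x^{-1}(A - x A x^{-1})\cdot x \in x^{-1}\Psi^{m-1}_b(M)$ by the conjugation-invariance of the principal symbol stated in Section \ref{Sec: bPsiDos}, so that all the apparently singular factors of $x^{-1}$ recombine correctly into honest elements of $\Psi^{m-1}_b$ and $\Psi^m_b$ respectively. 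Verifying that the $x^{-1}$ factors cancel against the $x$ weights in the $b$-symbol of $x\partial_x$ — i.e. that the $\zeta$ in the $b$-Poisson bracket is really $x\xi$ — is where one must be careful, but it is exactly this cancellation that makes the final symbols $-i\partial_\zeta\sigma_{b,m}(A)$ and $-i\partial_x\sigma_{b,m}(A)$ come out free of spurious powers of $x$. Once that is checked, the microlocality statements are routine.
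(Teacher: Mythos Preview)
The paper does not prove this lemma; it is quoted verbatim from \cite{GaWr18} (their Lemma 3.7) and used as a black box. So there is no ``paper's own proof'' to compare against, and your sketch is being measured against the standard argument in the cited reference.

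Your overall strategy is the right one and matches the standard approach: write $Q_0 = x^{-1}\widetilde{Q}_0$ with $\widetilde{Q}_0 = xQ_0 = x\partial_x - \nu_- + xc \in \mathrm{Diff}^1_b(M)$, expand
\[
[Q_0,A] \;=\; x^{-1}[\widetilde{Q}_0,A] \;+\; [x^{-1},A]\,\widetilde{Q}_0,
\]
and observe that the second term equals $\big(x^{-1}(A - xAx^{-1})x\big)Q_0 \in \Psi^{m-1}_b(M)\cdot Q_0$ by conjugation-invariance of the principal symbol, giving the $A_1 Q_0$ piece with $\sigma_{b,m-1}(A_1) = -i\partial_\zeta\sigma_{b,m}(A)$.

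There is, however, one genuine gap in your treatment of the first term. You write that ``verifying that the $x^{-1}$ factors cancel \ldots\ is where one must be careful'' but do not supply the mechanism. The point is \emph{not} a cancellation at the symbolic level; it is that $[\widetilde{Q}_0,A]$ lies in $x\Psi^m_b(M)$, so that $x^{-1}[\widetilde{Q}_0,A]$ is an honest element of $\Psi^m_b(M)$. The reason is the indicial family: since $\widehat{N}(\widetilde{Q}_0)(s) = is - \nu_-$ is a \emph{scalar} operator on $\partial M$ (the $xc$ term has vanishing indicial family), it commutes with $\widehat{N}(A)(s)$, whence $\widehat{N}\big([\widetilde{Q}_0,A]\big)(s) = 0$ for all $s$. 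Vanishing of the indicial family is exactly the condition that a b-$\Psi$DO factor through multiplication by $x$. Once this is in hand, the principal symbol of $A_0 \doteq x^{-1}[\widetilde{Q}_0,A]$ is obtained by dividing $\sigma_{b,m}([\widetilde{Q}_0,A]) = -i\,x\partial_x\sigma_{b,m}(A)$ by $x$, yielding $-i\partial_x\sigma_{b,m}(A)$ as claimed. Without this indicial-family observation your argument does not close, because $x^{-1}$ applied to a generic element of $\Psi^m_b(M)$ does \emph{not} land back in $\Psi^m_b(M)$. The microlocality statements and the rearrangement $Q_0A = (A+A_1)Q_0 + A_0$ then follow as you indicate.
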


In the following we will use bounded families of b-$\Psi DOs$ of fixed order, the most important case being that of a family of the form $\{A_r \in \Psi^m_b(M) \ | \ m <0, \ r \in (0,1)  \}$ bounded in $\Psi^0_b(M)$.

\begin{lemma}[Lemma 3.8 of \cite{GaWr18}, Lemma 3.2 of \cite{Vasy08}]\label{lemma:continuous-extension}
Let $A \in \Psi^0_b(M)$. Then $A$ is a continuous linear map
$$ \mathcal{H}^1_{loc/0}(M) \rightarrow \mathcal{H}^1_{loc/0}(M), \ \ \ \dot{\mathcal{H}}^1_{loc/0}(M) \rightarrow \dot{\mathcal{H}}^1_{loc/0}(M), $$
which extends per duality to a continuous map
$$ \dot{\mathcal{H}}^{-1}_{0/loc}(M) \rightarrow \dot{\mathcal{H}}^{-1}_{0/loc}(M), \ \ \ \mathcal{H}^{-1}_{0/loc}(M) \rightarrow \mathcal{H}^{-1}_{0/loc}(M).$$
\end{lemma}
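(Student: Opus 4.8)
The plan is to reduce the statement to the known mapping properties of zero-order $b$-pseudodifferential operators on the \emph{untwisted} Sobolev scale via conjugation by a twisting function, and then extend by duality. Recall from Definition \ref{Def: twisting function} that a twisting function $F \in x^{\nu_-}C^\infty(M)$ is invertible with $F^{-1} \in x^{-\nu_-}C^\infty(M)$, and that by construction $\mathcal{L}^2(M) = L^2(M,x^2 d\mu_g)$ while $\mathcal{H}^1(M)$ is the domain of all $Q \in \textbf{Diff}^1_\nu(M)$. The key observation is that multiplication by $F$ (respectively $F^{-1}$) is, up to the fixed power $x^{\nu_-}$, an isomorphism intertwining the twisted scale $\mathcal{H}^s$ with a standard $b$-Sobolev scale: writing $Q = F B F^{-1}$ with $B \in \textbf{Diff}^1(M)$ as in the remark following Definition \ref{Def: twisting function}, one sees that $u \in \mathcal{H}^1(M)$ if and only if $F^{-1}u$ lies in the corresponding untwisted space, because $Qu = F B(F^{-1}u)$ and $\|Qu\|_{\mathcal{L}^2} = \|F B(F^{-1}u)\|_{\mathcal{L}^2}$ is comparable to $\|B(F^{-1}u)\|$ in the untwisted $L^2$ with the appropriate weight, using that $x^{-\nu_-}F$ and its reciprocal are bounded with all derivatives.

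\textbf{Main steps.} First I would set up the conjugation isomorphism $\Phi_F : u \mapsto F^{-1}u$ and verify it is a topological isomorphism from $\mathcal{H}^1_{loc/0}(M)$ (resp.\ $\dot{\mathcal{H}}^1_{loc/0}(M)$) onto the untwisted $b$-Sobolev space of the same order, locally or with compact support respectively; this is where one uses boundedness of the smooth positive factor $x^{-\nu_-}F$. Second, given $A \in \Psi^0_b(M)$, observe that $x^{-\nu_-} A x^{\nu_-} \in \Psi^0_b(M)$ by the conjugation invariance of $b$-$\Psi$DOs recorded in Section \ref{Sec: bPsiDos} (the principal symbol is even unchanged), and hence $\Phi_F \circ A \circ \Phi_F^{-1} = F^{-1} A F$ differs from $x^{-\nu_-}A x^{\nu_-}$ only by multiplication by the smooth bounded positive functions $x^{\nu_-}F^{-1}$ and $x^{-\nu_-}F$; such multiplications preserve the untwisted $b$-Sobolev spaces, so $F^{-1}AF$ acts continuously on them. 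Third, invoke the classical fact (the untwisted analogue, e.g.\ Lemma 3.2 of \cite{Vasy08}, or simply the standard continuity of order-zero $b$-$\Psi$DOs on $H^1_b$) that order-zero $b$-$\Psi$DOs map $H^1_{b,loc/0} \to H^1_{b,loc/0}$ continuously; conjugating back through $\Phi_F$ gives continuity of $A$ on $\mathcal{H}^1_{loc/0}(M)$, and the same argument with $\dot C^\infty$ replaced gives the dotted version. Finally, the statements about $\dot{\mathcal{H}}^{-1}$ and $\mathcal{H}^{-1}$ follow by duality: since $A^* \in \Psi^0_b(M)$ as noted in Section \ref{Sec: bPsiDos}, and $\dot{\mathcal{H}}^{-1}_{0/loc}(M)$, $\mathcal{H}^{-1}_{0/loc}(M)$ are the topological duals of $\mathcal{H}^1_{loc/0}(M)$, $\dot{\mathcal{H}}^1_{loc/0}(M)$ respectively (with respect to the $\mathcal{L}^2$-pairing), continuity of $A^*$ on the $\mathcal{H}^1$-type spaces transposes to continuity of $A$ on the $\mathcal{H}^{-1}$-type spaces.

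\textbf{Expected obstacle.} The routine calculations are the bookkeeping of weights: one must be careful that the twisting power $\nu_-$, the volume-density weight $x^2$ defining $\mathcal{L}^2$, and the $x$-powers produced when conjugating a $b$-$\Psi$DO all line up so that $F^{-1}AF$ genuinely lands back in $\Psi^0_b$ modulo bounded-function multipliers rather than picking up an uncontrolled power of $x$. The genuinely delicate point, however, is the $loc$ versus $0$ (compact support) distinction together with the requirement that $A$ be properly supported: one needs $A$ to preserve compact supports up to the relevant decay at the boundary so that $\Phi_F$ and its inverse interact correctly with $\mathcal{E}'(M)$ versus $\dot{\mathcal{E}}'(M)$, and this is exactly the place where one cites the corresponding care taken in \cite[Lemma 3.8]{GaWr18} and \cite[Lemma 3.2]{Vasy08}. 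Once the conjugation framework is in place, everything else is a transcription of the untwisted theory.
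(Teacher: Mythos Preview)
The paper does not supply its own proof of this lemma; it is quoted verbatim from \cite[Lemma 3.8]{GaWr18} and \cite[Lemma 3.2]{Vasy08}, so there is no in-paper argument to compare against. What can be compared is your strategy versus the one implicit in the organization of the cited sources and of this paper, namely the placement of Lemma~\ref{lem:diff-psi-interact} immediately before the present statement.

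Your conjugation idea contains a genuine gap. After applying $\Phi_F: u\mapsto F^{-1}u$, the image of $\mathcal{H}^1(M)$ is the set of $v$ with $Fv,\,F\partial_x v,\,F\partial_{y_i}v\in\mathcal{L}^2(M)$. Unravelling the weights (recall $\mathcal{L}^2(M)=L^2(x^2d\mu_g)$, $F\sim x^{\nu_-}$, $\nu_-=\tfrac{n-1}{2}-\nu$) one finds $v\in x^{\nu-1}L^2_b$ together with $\partial_{y_i}v\in x^{\nu-1}L^2_b$ but $\partial_x v\in x^{\nu-1}L^2_b$, i.e.\ $x\partial_x v\in x^{\nu}L^2_b$. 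The normal and tangential $b$-derivatives therefore carry \emph{different} $x$-weights, so $\Phi_F(\mathcal{H}^1)$ is \emph{not} a space of the form $x^sH^1_b$, and the ``classical fact'' you invoke about $\Psi^0_b$ acting on $H^1_b$ does not apply to it. The obstacle you flag as ``routine bookkeeping of weights'' is precisely where the argument breaks.

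The route taken in the references is direct and avoids this issue. First, $A\in\Psi^0_b(M)$ is bounded on $\mathcal{L}^2(M)$ because $\mathcal{L}^2(M)$ is a fixed power of $x$ times $L^2_b(M)$ and conjugation by $x^s$ preserves $\Psi^0_b$. Then, for any generator $Q\in\textbf{Diff}^1_\nu(M)$ one writes $Q(Au)=A(Qu)+[Q,A]u$; the first term is in $\mathcal{L}^2$ by the previous step, and the commutator is handled by Lemma~\ref{lem:diff-psi-interact}: for $Q_0=F\partial_xF^{-1}$ one has $[Q_0,A]=A_1Q_0+A_0$ with $A_1\in\Psi^{-1}_b$, $A_0\in\Psi^0_b$, hence $[Q_0,A]u\in\mathcal{L}^2$; the tangential $Q_i$ already lie in $\textbf{Diff}^1_b(M)$, so $[Q_i,A]\in\Psi^0_b$. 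This yields $Au\in\mathcal{H}^1$ with the required continuity, and the dotted version is identical. Your duality step for the $\mathcal{H}^{-1}$-type spaces, using $A^*\in\Psi^0_b(M)$, is correct as stated.
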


\noindent As a direct consequence of this lemma, the following bound holds true.

\begin{prop}\label{prop-Psi0_Bound}
Let $A \in \Psi^0_b(M)$ have compact support in $U \subset M$. Then there exists $\chi \in C^\infty_0(U)$ such that 
\begin{equation*}
\| Au \|_{\mathcal{H}^k(M)} \leq C \| \chi u \|_{\mathcal{H}^k(M)}
\end{equation*}
for every $u \in \mathcal{H}^k_{loc}(M)$ with $k=\pm 1 $.
\end{prop}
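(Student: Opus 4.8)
The plan is to reduce the statement to the mapping property of order-zero $b$-$\Psi$DOs on twisted Sobolev spaces established in Lemma \ref{lemma:continuous-extension}, combined with the fact that such operators are, by definition, properly supported and have compactly supported Schwartz kernels in the situation at hand. First I would fix $A\in\Psi^0_b(M)$ with compact support contained in $U$. By ``compact support in $U$'' we mean that the Schwartz kernel of $A$ is supported in $K\times K$ for some compact $K\subset U$; consequently there is a cutoff $\chi\in C^\infty_0(U)$ with $\chi\equiv 1$ on a neighbourhood of $K$, so that $A\chi = A$ as operators (and likewise $\chi A = A$). This is the point where properness of the support is used: it lets us insert the cutoff for free without changing the operator.

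Next I would invoke Lemma \ref{lemma:continuous-extension}: $A$ extends to a continuous linear map $\mathcal{H}^1_{loc}(M)\to\mathcal{H}^1_{loc}(M)$ and, by duality, to a continuous map $\mathcal{H}^{-1}_{loc}(M)\to\mathcal{H}^{-1}_{loc}(M)$; the same holds for the dotted spaces. Continuity here is understood with respect to the seminorms defining the local topology. Because the kernel of $A$ is supported in $K\times K$, the value of $Au$ on any relatively compact neighbourhood $W\supset K$ depends only on $u|_W$, hence only on $\chi u$ when $\chi\equiv 1$ near $W$; and $Au$ is supported in $K$. Restricting attention to a fixed relatively compact open set $V$ with $K\subset V$, the map $u\mapsto Au$ factors as $u\mapsto \chi u\mapsto A(\chi u)$, and the composite $\mathcal{H}^k(M)\ni \chi u\mapsto A(\chi u)\in \mathcal{H}^k(M)$ (supported in $K$) is bounded by the continuity statement of Lemma \ref{lemma:continuous-extension} applied with the seminorm associated to $V$. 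Since $Au$ is supported in $K\subset V$, its full $\mathcal{H}^k(M)$-norm equals the $V$-localized one, which yields
\begin{equation*}
\| Au\|_{\mathcal{H}^k(M)} = \| A(\chi u)\|_{\mathcal{H}^k(M)} \leq C\,\|\chi u\|_{\mathcal{H}^k(M)},
\end{equation*}
for $k=\pm 1$, with $C$ depending only on $A$, $\chi$ and $V$ but not on $u\in\mathcal{H}^k_{loc}(M)$.

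The main obstacle, such as it is, is purely bookkeeping with the local (Fr\'echet) topologies of $\mathcal{H}^k_{loc}(M)$: one must make sure that the continuity of $A$ on $\mathcal{H}^1_{loc}(M)$ gives, for each compact $K$, an estimate of a $K$-seminorm of $Au$ by a single $K'$-seminorm of $u$ (with $K'$ a suitable compact neighbourhood), rather than by a sum of countably many seminorms — this is exactly where properness of $A$ and the compactness of its support are essential, since they confine both the output and the relevant part of the input to a fixed compact set. Having done this, choosing $\chi\in C^\infty_0(U)$ equal to $1$ on that compact set and absorbing the resulting constant completes the argument; the case $k=-1$ is identical, using the dual statement of Lemma \ref{lemma:continuous-extension} and the fact that $A^*$ is again a properly supported element of $\Psi^0_b(M)$ with compactly supported kernel.
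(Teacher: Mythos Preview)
Your argument is correct and is essentially the same as the paper's: the proposition is stated there as a direct consequence of Lemma~\ref{lemma:continuous-extension}, and your reduction---inserting a cutoff $\chi\equiv 1$ on the compact support of $A$ so that $A=A\chi$, then invoking the boundedness of $A\in\Psi^0_b(M)$ on $\mathcal{H}^{\pm 1}$---is precisely the intended one-line justification spelled out in detail.
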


A similar bound holds true if $u \in \dot{\mathcal{H}}^k_{loc}(M)$. We introduce, for $k=-1,0,1$, the subspaces of $\mathcal{H}^k(M)$ with additional regularity properties with respect to the action of b-pseudodifferential operators in $\Psi^m_b(M)$. These spaces allow us to get a better control on estimates like that of Proposition \ref{prop-Psi0_Bound}.

\begin{defn}\label{Def: tildeH spaces}
Let $k=-1,0,1$ and let $m \geq 0$. Given $u \in \mathcal{H}_{loc}^k(M)$, we say that $u \in \mathcal{H}_{loc}^{k,m}(M)$ if $Au \in \mathcal{H}_{loc}^k(M)$ for all $A \in \Psi^m_b(M)$. Furthermore, we define $\mathcal{H}^{k,\infty}(M)$ as:
\begin{equation}
\mathcal{H}^{k,\infty}(M) \doteq  \bigcap_{m=0}^\infty \mathcal{H}^{k,m}(M)  
\end{equation}
\end{defn}

\begin{rem}\label{Rem: Spazi Hk con m finito}
The spaces $\dot{\mathcal{H}}^{k,m}_{loc}(M)$, $\mathcal{H}^{k,m}(M)$ and  $\mathcal{H}^{k,m}_0(M)$ are defined in a similar way. Furthermore, as observed in \cite{Vasy08}, whenever $m$ is finite, it is enough to check that both $u$ and $Au$ lie in $\mathcal{H}^k_{loc}(M)$ for a single elliptic operator $A \in \Psi^m_b(M)$. As a consequence, for $u \in \mathcal{H}^{k,m}_{0}(M)$ with $m \geq 0$, we can define the following norm:
\begin{equation}
\|u\|_{\mathcal{H}^{k,m}(M)} = \|u\|_{\mathcal{H}^k(M)}+\|Au\|_{\mathcal{H}^k(M)}
\end{equation}
where $A$ is any elliptic b-pseudodifferential operator in $\Psi_b^m(M)$.
\end{rem}

\begin{defn}\label{Def: distinguished space of distributions}
Let $k = \pm 1$ and $m<0$. Let $A \in \Psi_b^{-m}(M)$ be a fixed pseudo-differential operator of positive order. We call $\mathcal{H}_{loc}^{k,m}(M)$ the set of the distributions $u \in \mathcal{D}^\prime(M)$ of the form
\begin{equation*}
u = u_1 + A u_2
\end{equation*}
where $u_1,u_2 \in \dot{\mathcal{H}}^k_{loc}(M)$.
\end{defn}

\begin{rem}
In the same spirit of Remark \ref{Rem: Spazi Hk con m finito}, we can define $\dot{\mathcal{H}}^{k,m}_{loc}(M)$ and $\mathcal{H}^{k,m}(M)$ in a similar way. Furthermore, when $m<0$ is finite, it is enough to check that both $u$ and $Au$ lie in $\mathcal{H}^k_{loc}(M)$ for a single elliptic operator $A \in \Psi^{-m}_b(M)$. 
\end{rem}

A notable consequence of these definitions can be summarized in the following lemma, whose proof can be found in \cite[Rem. 3.16]{Vasy08}.

\begin{lemma}\label{Lem: interaction between Hk and gamma-}
	Let $m<0$ and let $\mathcal{H}^{k,m}(M)$ be as in Definition \ref{Def: distinguished space of distributions}. Then $\gamma_-$ as per Equation \eqref{Eq: gamma-} extends to a continuous map
	$$\gamma_-:\mathcal{H}^{1,m}_{loc}(M)\to\mathcal{H}^{\nu+m}_{loc}(\partial M).$$
\end{lemma}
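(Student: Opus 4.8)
The plan is to reduce the statement to the model case $\mathbb{R}^n_+$ treated in Theorem \ref{lemma-exp1} and then propagate the extra $b$-regularity through the decomposition defining $\mathcal{H}^{1,m}_{loc}(M)$. Concretely, fix $m<0$ and pick once and for all an elliptic $A\in\Psi^{-m}_b(M)$ of positive order $-m$, properly supported. By Definition \ref{Def: distinguished space of distributions} any $u\in\mathcal{H}^{1,m}_{loc}(M)$ can be written locally as $u=u_1+Au_2$ with $u_1,u_2\in\dot{\mathcal{H}}^1_{loc}(M)$. The map $\gamma_-$ is already known to be continuous on $\mathcal{H}^1_0(M)$ by Equation \eqref{Eq: gamma-}, so $\gamma_-u_1\in\mathcal{H}^\nu_{loc}(\partial M)$; the whole point is to make sense of $\gamma_-(Au_2)$ and to show it lies in $\mathcal{H}^{\nu+m}_{loc}(\partial M)$, i.e.\ that applying $A$ costs exactly $-m$ derivatives at the boundary, consistently with the indicial family $\widehat N(A)(s)$ from Equation \eqref{Eq: indicial family}.

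First I would localize: using a partition of unity subordinate to Fermi charts (Definition \ref{Def: Fermi coordinates}), and the fact that away from $\partial M$ a $b$-$\Psi$DO is an ordinary $\Psi$DO so the statement is the classical elliptic boundary-trace estimate, it suffices to work in a single chart $U_p(r)\cong B^{n-1}_r(0)\times[0,r)$ with coordinates $(y,x)$. There the key structural input is that conjugation by $x^{-is}$ produces the indicial family: for $A\in\Psi^{-m}_b(M)$ one has $x^{-is}Ax^{is}\in\Psi^{-m}_b(M)$ with principal symbol unchanged (this is the conjugation-invariance noted after Equation \eqref{Eq: principaly symbol map}), and the restriction to $x=0$ gives $\widehat N(A)(s)\in\Psi^{-m}(\partial M)$ depending polynomially-in-$s$-of-order-$\le -m$ in the relevant symbol sense. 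Then I would use the asymptotic expansion from Theorem \ref{lemma-exp1}: writing $u_2=x^{\nu_-}v_- + x^{r+1}H^1_b([0,\varepsilon);L^2)$ with $v_-=\gamma_-u_2\in H^\nu$, I would argue that $A$ maps the remainder term $x^{r+1}H^1_b(\cdots)$ into a term vanishing faster than $x^{\nu_-}$ at the boundary (so it contributes nothing to $\gamma_-$), while $A(x^{\nu_-}v_-)$ has leading boundary behaviour $x^{\nu_-}\bigl(\widehat N(A)(-i\nu_-)v_-\bigr)$ up to lower order. Since $\widehat N(A)(-i\nu_-)$ is an elliptic-order-$(-m)$ operator on $\partial M$ and $v_-\in H^\nu_{loc}(\partial M)$, one gets $\widehat N(A)(-i\nu_-)v_-\in H^{\nu+m}_{loc}(\partial M)$, which is the claimed gain. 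Continuity follows by tracking constants through each step — the trace bound of Theorem \ref{lemma-exp1}, the $\Psi$DO mapping property $H^\nu\to H^{\nu+m}$ of $\widehat N(A)$, and the $\mathcal{H}^1$-boundedness of $A$ from Lemma \ref{lemma:continuous-extension}/Proposition \ref{prop-Psi0_Bound} — together with a check that the resulting $\gamma_-u$ is independent of the decomposition $u=u_1+Au_2$ (two decompositions differ by $u_1-u_1'=A(u_2'-u_2)$ with the right side in $\dot{\mathcal{H}}^1_{loc}$, and one verifies both sides give the same trace).

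The step I expect to be the main obstacle is the careful justification that $A$ interacts with the asymptotic expansion in the expected way — that is, that $A$ preserves the class of functions admitting an expansion $x^{\nu_-}(\text{something in }H^{\bullet}) + (\text{faster-vanishing remainder})$ and acts on the leading coefficient through $\widehat N(A)(-i\nu_-)$. This is exactly the content behind \cite[Rem.\ 3.16]{Vasy08} which the lemma cites, and making it rigorous requires controlling the error terms in the symbolic expansion of $x^{-is}Ax^{is}$ uniformly as one restricts to $x=0$, plus knowing that the $H^1_b$-remainder in Theorem \ref{lemma-exp1} is stable under $\Psi^0_b$ (after the conjugation $x^{-\nu_-}Ax^{\nu_-}$, which is in $\Psi^{-m}_b$, composed with an elliptic operator of order $m$ to land back in order $0$). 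A secondary technical point is handling the case $m$ not an integer and the non-classical nature of $x^{\nu_-}$ as a weight — but since the $b$-calculus is set up precisely to make powers of $x$ harmless under conjugation, and twisted Sobolev spaces are built to absorb the $x^{\nu_-}$ factor, this should go through without new ideas once the leading-order computation is in place. I would conclude by reassembling the local pieces via the partition of unity to obtain the global continuous map $\gamma_-:\mathcal{H}^{1,m}_{loc}(M)\to\mathcal{H}^{\nu+m}_{loc}(\partial M)$.
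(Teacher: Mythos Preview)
The paper does not give its own proof of this lemma: it simply refers to \cite[Rem.~3.16]{Vasy08}. Your sketch follows exactly the line one finds there and, in fact, the key identity you isolate, $\gamma_-(Au_2)=\widehat N(A)(-i\nu_-)\gamma_-u_2$ with $\widehat N(A)(-i\nu_-)\in\Psi^{-m}(\partial M)$, is precisely the mechanism the paper itself invokes later (see the boundary-term computation in the proof of Lemma~\ref{Lem: lemma-bound-quad-form-v1}); so your approach is correct and coincides with the intended one.
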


In the following we give the definition of wavefront set for $\mathcal{H}^{k,m}_{loc}(M)$, the counterpart for all other spaces following suit.

\begin{defn}\label{Def: wavefrontset for Hkm-loc}
Let $k=0,\pm 1$ and let $u \in \mathcal{H}^{k,m}_{loc}(M)$, $m \in \mathbb{R}$. Given $q \in {}^bT^*M \setminus\{0\} $, we say that $q \not \in WF_b^{k,m}(u)$ if there exists $A \in \Psi_b^m(M)$ such that $q \in ell_b(A)$ and $Au \in \mathcal{H}^k_{loc}(M)$, where $ell_b$ stands for the elliptic set as per Definition \ref{Def: ellptic PsiDO}. When $m=+\infty$, we say that $q \not \in WF_b^{k,\infty}(M)$ if there exists $A \in \Psi^0_b(M)$ such that $q \in ell_b(A)$ and $Au \in \mathcal{H}^{k,\infty}_{loc}(M)$.
\end{defn}

\noindent Definition \ref{Def: wavefrontset for Hkm-loc} is microlocal in the following sense:
\begin{equation*}
WF_b^{k,m}(Au) \subset WF_b^{k,m-s}(u) \cup WF_b^\prime(A)
\end{equation*}
for each $A \in \Psi_b^s(M)$, $s\geq 0$. Yet, sometimes, it is useful to have at our disposal a more refined bound. Combining the results in \cite{GaWr18} and \cite{Vasy08} the following lemma descends.

\begin{lemma}\label{prop-boundH1}
Let $\mathcal{A}$ be a bounded family in $\Psi^s_b(M)$ and let $G \in \Psi^s_b(M)$ be such that $WF_b^\prime (\mathcal{A}) \subset ell_b(G)$. Suppose that $\mathcal{A}$ and $G$ have compact support in $U \subset M$. Let $m \in \mathbb{R}$ and $k = \pm 1$. Then there exist $\chi \in \mathcal{C}_0^\infty(U)$ and a constant $C>0$ such that
\begin{equation*}
\| Au \|_{\mathcal{H}^k(M)} \leq C \left( \| Gu \|_{\mathcal{H}^k(M)}+\| \chi u \|_{\mathcal{H}^{k,m}(M)}  \right)
\end{equation*}
for every $u \in \mathcal{H}^{k,m}_{loc}(M)$ with $WF_b^{k,s}(u) \cap WF^\prime_b(G) = \emptyset$ and for every $A \in \mathcal{A}$.
\end{lemma}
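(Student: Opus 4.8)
\textbf{Proof strategy for Lemma \ref{prop-boundH1}.}

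The plan is to reduce the estimate to a combination of the elliptic parametrix construction, the microlocalization of the wavefront set hypothesis, and the boundedness of $b$-$\Psi$DOs on the twisted Sobolev scale (Lemma \ref{lemma:continuous-extension} and Proposition \ref{prop-Psi0_Bound}). First I would exploit the hypothesis $WF_b^\prime(\mathcal{A}) \subset ell_b(G)$ together with compactness of supports: since $WF_b^\prime(\mathcal{A})$ is contained in the elliptic set of $G$, one may choose an elliptic microlocal parametrix $\mathcal{G} \in \Psi_b^{-s}(M)$ for $G$ on a conic neighborhood of $WF_b^\prime(\mathcal{A})$, so that $\mathcal{G}G - \mathbb{I}$ has $b$-wavefront set disjoint from $WF_b^\prime(\mathcal{A})$. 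Then for each $A \in \mathcal{A}$ we can write, modulo a smoothing remainder,
\begin{equation*}
A = A\mathcal{G}G + A(\mathbb{I} - \mathcal{G}G) = B G + R_A,
\end{equation*}
where $B = A\mathcal{G}$ and $R_A = A(\mathbb{I}-\mathcal{G}G)$. The key structural point is that $\{B : A \in \mathcal{A}\} = \{A\mathcal{G} : A \in \mathcal{A}\}$ is a \emph{bounded} family in $\Psi_b^0(M)$ (composition of the bounded family $\mathcal{A}\subset\Psi^s_b(M)$ with the fixed operator $\mathcal{G}\in\Psi^{-s}_b(M)$), while $\{R_A : A \in \mathcal{A}\}$ is a bounded family in $\Psi_b^{-\infty}(M)$ because $WF_b^\prime(\mathcal{A}) \cap WF_b^\prime(\mathbb{I}-\mathcal{G}G) = \emptyset$ and the composition of a bounded family with a $\Psi$DO whose wavefront set is disjoint is smoothing (the last displayed consequence of Definition \ref{Def: WF of PsiDO}).

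Next I would estimate the two pieces separately in the $\mathcal{H}^k$ norm. For the principal term, $\|BGu\|_{\mathcal{H}^k(M)} \leq C\|Gu\|_{\mathcal{H}^k(M)}$ uniformly over $A \in \mathcal{A}$: this follows from the uniform boundedness of $\{B\}$ in $\Psi_b^0(M)$ and the continuity of order-zero $b$-$\Psi$DOs on $\mathcal{H}^k_{loc}(M)$, $k = \pm 1$, from Lemma \ref{lemma:continuous-extension} together with the support-localized bound of Proposition \ref{prop-Psi0_Bound} (here one introduces the cutoff $\chi \in C^\infty_0(U)$, using that $\mathcal{A}$ and $G$ are compactly supported in $U$ so that $\mathcal{G}$ and hence $B$ can be arranged to have support in a slightly larger relatively compact set, still inside a fixed compact of $U$ after adjusting $\chi$). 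For the smoothing remainder, since $\{R_A\}$ is bounded in $\Psi_b^{-\infty}(M)$ and $u \in \mathcal{H}^{k,m}_{loc}(M)$, we get $R_A u \in \mathcal{H}^{k}_{loc}(M)$ with $\|R_A u\|_{\mathcal{H}^k(M)} \leq C\|\chi u\|_{\mathcal{H}^{k,m}(M)}$ uniformly in $A$; here one uses the mapping properties of smoothing $b$-operators on the $\mathcal{H}^{k,m}$ scale together with a compact-support cutoff, the exponent $m$ appearing because the remainder must absorb the order-$m$ regularity of $u$ encoded in the $\mathcal{H}^{k,m}$ norm. Combining the two bounds gives the claimed inequality.

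The place where the hypothesis $WF_b^{k,s}(u) \cap WF_b^\prime(G) = \emptyset$ enters — and what I expect to be the main subtlety — is in making sure the parametrix splitting genuinely produces a \emph{smoothing} remainder acting on $u$ rather than merely a lower-order one, so that no uncontrolled $\mathcal{H}^{k,s}$-type term on the right-hand side survives. Concretely, one must check that the error operator $A(\mathbb{I}-\mathcal{G}G)$, when paired against $u$, yields something in $\mathcal{H}^k_{loc}(M)$: this is exactly where the disjointness of $WF_b^{k,s}(u)$ from $WF_b^\prime(G)$ (equivalently from the region where the parametrix is non-smoothing) must be used, possibly by a further microlocal partition near $WF_b^\prime(\mathcal{A})$ to route each frequency either through the elliptic parametrix of $G$ or into a region where $u$ has the required regularity. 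The remaining work — tracking that all constants and cutoffs are uniform over the bounded family $\mathcal{A}$, and that the compact-support bookkeeping closes up — is routine given the machinery of Lemma \ref{lemma:continuous-extension}, Proposition \ref{prop-Psi0_Bound}, and the wavefront-set calculus recalled after Definition \ref{Def: WF of PsiDO}.
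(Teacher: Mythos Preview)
The paper does not give its own proof of this lemma; it simply states that the result ``descends'' by combining results from \cite{GaWr18} and \cite{Vasy08}. Your parametrix argument is the standard one and is essentially correct: write $A = (A\mathcal{G})G + A(\mathbb{I}-\mathcal{G}G)$, use that $\{A\mathcal{G}\}$ is bounded in $\Psi^0_b(M)$ to control the first piece by $\|Gu\|_{\mathcal{H}^k}$ via Proposition~\ref{prop-Psi0_Bound}, and use that $\{A(\mathbb{I}-\mathcal{G}G)\}$ is bounded in $\Psi^{-\infty}_b(M)$ to absorb the second piece into $\|\chi u\|_{\mathcal{H}^{k,m}}$.

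One point of confusion is worth flagging. You locate the role of the hypothesis $WF_b^{k,s}(u)\cap WF_b'(G)=\emptyset$ in ensuring that the remainder $R_A=A(\mathbb{I}-\mathcal{G}G)$ is ``genuinely smoothing on $u$''. But $R_A$ is already a bounded family in $\Psi_b^{-\infty}(M)$ as an \emph{operator}, independently of $u$; this follows directly from $WF_b'(\mathcal{A})\cap WF_b'(\mathbb{I}-\mathcal{G}G)=\emptyset$ and the calculus facts recalled after Definition~\ref{Def: WF of PsiDO}. The wavefront hypothesis on $u$ is instead what guarantees that $Gu\in\mathcal{H}^k_{loc}(M)$ in the first place (since $G\in\Psi^s_b(M)$ while $u$ is only in $\mathcal{H}^{k,m}_{loc}(M)$, possibly with $m<s$), so that the term $\|Gu\|_{\mathcal{H}^k(M)}$ on the right-hand side is finite and Proposition~\ref{prop-Psi0_Bound} can be applied to $B(Gu)$. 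Once you relocate the hypothesis to this step, the argument closes with no further subtlety.
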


\noindent A notable consequence is

\begin{lemma}[Lemma 3.13 \cite{GaWr18}]\label{prop-boundL2}
Let $\mathcal{A}$ be a bounded family of pseudodifferential operators in $\Psi^s_b(M)$ and let $G \in \Psi^{s-1}_b(M)$ be such that $WF_b^\prime (\mathcal{A}) \subset ell_b(G)$. Suppose that $\mathcal{A}$ and $G$ have compact support in $U \subset M$. Let $m \in \mathbb{R}$ and let $k = \pm 1$. Then there exist $\chi \in \mathcal{C}_0^\infty(U)$ and a constant $C>0$ such that
\begin{equation*}
\| Au \|_{\mathcal{L}^2(M)} \leq C \left( \| Gu \|_{\mathcal{H}^k(M)}+\| \chi u \|_{\mathcal{H}^{k,m}(M)}  \right)
\end{equation*}
for every $u \in \mathcal{H}^{k,m}_{loc}(M)$ with $WF_b^{k,s-1}(u) \cap WF^\prime_b(G) = \emptyset$ and for every $A \in \mathcal{A}$.
\end{lemma}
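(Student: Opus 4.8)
The plan is to deduce this $\mathcal{L}^2$-estimate from the already-established $\mathcal{H}^1$-estimate of Lemma \ref{prop-boundH1}, exploiting the one-unit gap between the orders of $\mathcal{A}$ (order $s$) and $G$ (order $s-1$). The key structural observation is that, since a b-pseudodifferential operator of negative order maps $\mathcal{H}^k_{loc}(M)$ into itself continuously and, more precisely, gains one derivative in the scale of $\mathcal{H}^{k,m}$ spaces, an operator $A\in\mathcal{A}\subset\Psi^s_b(M)$ can be factored (microlocally, where it matters) as $A = \Lambda\,\widetilde A$ with $\widetilde A$ bounded in $\Psi^{s-1}_b(M)$ and $\Lambda\in\Psi^1_b(M)$ elliptic, modulo a smoothing remainder. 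One then wants to bound $\|Au\|_{\mathcal{L}^2}$ by $\|\widetilde A u\|_{\mathcal{H}^1}$ plus controlled terms, and apply Lemma \ref{prop-boundH1} to the family $\widetilde{\mathcal{A}}=\{\widetilde A\}$ with $G\in\Psi^{s-1}_b(M)$, whose order now matches.

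Concretely, first I would choose $\Lambda\in\Psi^1_b(M)$ elliptic and properly supported, with compact support in a slightly enlarged $U'\Supset U$, and a microlocal parametrix $\mathcal{B}\in\Psi^{-1}_b(M)$ so that $\mathcal{B}\Lambda=\mathbb{I}+R$ with $R\in\Psi^{-\infty}_b(M)$; this is Definition \ref{Def: microlocal paramterix}. Writing $A = A\mathcal{B}\Lambda - AR$, set $\widetilde A \doteq A\mathcal{B}\in\Psi^{s-1}_b(M)$, so that $\{\widetilde A : A\in\mathcal{A}\}$ is bounded in $\Psi^{s-1}_b(M)$ (composition is continuous on bounded sets) and, by the product rule for wavefront sets, $WF_b'(\widetilde{\mathcal{A}})\subset WF_b'(\mathcal{A})\subset ell_b(G)$. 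Then $Au = \widetilde A(\Lambda u) - ARu$. Since $\Lambda\in\Psi^1_b(M)$ and $u\in\mathcal{H}^{k,m}_{loc}(M)$, one has $\Lambda u\in\mathcal{H}^{k,m-1}_{loc}(M)$ with $WF_b^{k,s-1}(\Lambda u)\subset WF_b^{k,s}(u)\cup WF_b'(\Lambda)$. The term $ARu$ is harmless: $R$ smoothing sends $u$ into $\dot{C}^\infty$ locally, controlled by $\|\chi u\|_{\mathcal{H}^{k,m}}$ for suitable $\chi\in C_0^\infty(U)$. Hence
\begin{equation*}
\|Au\|_{\mathcal{L}^2(M)} \leq C\big(\|\widetilde A(\Lambda u)\|_{\mathcal{H}^1(M)} + \|\chi u\|_{\mathcal{H}^{k,m}(M)}\big),
\end{equation*}
using that the $\mathcal{H}^1$-norm dominates the $\mathcal{L}^2$-norm (equivalently $\mathcal{H}^0=\mathcal{L}^2$ and $\mathcal{H}^1\hookrightarrow\mathcal{H}^0$ with the twisted norm \eqref{Eq: twisted norm}).

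It remains to bound $\|\widetilde A(\Lambda u)\|_{\mathcal{H}^1(M)}$. Here I would apply Lemma \ref{prop-boundH1} with the bounded family $\widetilde{\mathcal{A}}\subset\Psi^{s-1}_b(M)$, the same operator $G\in\Psi^{s-1}_b(M)$ (whose order now legitimately matches that of $\widetilde{\mathcal{A}}$), regularity order $m-1$, and the function $v=\Lambda u\in\mathcal{H}^{k,m-1}_{loc}(M)$; the hypothesis $WF_b^{k,s-1}(v)\cap WF_b'(G)=\emptyset$ follows from $WF_b^{k,s}(u)\cap WF_b'(G)=\emptyset$ together with $WF_b'(\Lambda)\cap WF_b'(G)=\emptyset$, which I arrange by choosing $\Lambda$ microsupported off $WF_b'(G)$ where needed — or, more cleanly, by shrinking to the relevant microlocal region and using the parametrix only there. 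This yields $\|\widetilde A v\|_{\mathcal{H}^1}\leq C(\|Gv\|_{\mathcal{H}^1}+\|\chi' v\|_{\mathcal{H}^{1,m-1}})$, and then $\|Gv\|_{\mathcal{H}^1}=\|G\Lambda u\|_{\mathcal{H}^1}$ is reabsorbed into $\|Gu\|_{\mathcal{H}^k}$ plus lower-order terms by commuting $G$ past $\Lambda$ and using $WF_b'(G)\subset ell_b(G)$ once more, while $\|\chi' \Lambda u\|_{\mathcal{H}^{1,m-1}}\leq C\|\chi u\|_{\mathcal{H}^{k,m}}$ since $\Lambda$ raises the order index by one. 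Collecting everything and renaming $\chi$ gives the asserted inequality.

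The main obstacle I anticipate is bookkeeping the wavefront-set and support conditions so that the hypotheses of Lemma \ref{prop-boundH1} are genuinely met after the factorization — in particular ensuring $WF_b'(\Lambda)$ does not intersect $WF_b'(G)$ while $\Lambda$ is still elliptic on all of $WF_b'(\mathcal{A})$, which forces a careful microlocal partition of the relevant region (one piece where $G$ is elliptic, handled by a parametrix for $G$ directly, and its complement where the factorization is applied). The analytic content — continuity of b-$\Psi$DOs on twisted Sobolev spaces (Lemma \ref{lemma:continuous-extension}) and the order shift — is routine given the results already quoted; the delicacy is purely in the microlocal geometry of the cones.
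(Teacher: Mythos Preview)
The paper does not itself prove this lemma; it is simply quoted as \cite[Lemma 3.13]{GaWr18} and presented as a ``notable consequence'' of Lemma~\ref{prop-boundH1}. So there is no proof in the paper to compare against, only the implicit claim that it follows from Lemma~\ref{prop-boundH1}. I will therefore comment on your argument on its own merits.

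Your overall strategy --- reduce to Lemma~\ref{prop-boundH1} by absorbing one order into a factorization with an elliptic first-order operator --- is the right idea, but you have placed the factor on the wrong side, and this creates a genuine gap. You write $A=\widetilde A\,\Lambda$ with $\widetilde A=A\mathcal B\in\Psi^{s-1}_b(M)$ and then apply Lemma~\ref{prop-boundH1} to $v=\Lambda u$. This produces the term $\|Gv\|_{\mathcal{H}^1}=\|G\Lambda u\|_{\mathcal{H}^1}$, which you claim can be ``reabsorbed into $\|Gu\|_{\mathcal{H}^k}$ plus lower-order terms by commuting $G$ past $\Lambda$''. But after commuting you are left with $\|\Lambda Gu\|_{\mathcal{H}^1}$, and since $\Lambda\in\Psi^1_b(M)$ is order-\emph{raising} this is controlled by $\|Gu\|_{\mathcal{H}^{1,1}}$, not by $\|Gu\|_{\mathcal{H}^1}$. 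Equivalently, $\Lambda G\in\Psi^s_b(M)$ has the same order as the original $A$, so bounding it by the order-$(s-1)$ operator $G$ is precisely the one-order gain you are trying to prove --- the argument is circular. The wavefront hypothesis $WF_b^{k,s-1}(u)\cap WF_b'(G)=\emptyset$ is too weak to supply $Gu\in\mathcal{H}^{1,1}$; it only gives $Gu\in\mathcal{H}^k$.

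The fix is to factor on the other side: write $A=\Lambda\,(\mathcal B A)$ modulo $\Psi^{-\infty}_b$, so that the family $\mathcal B A$ is bounded in $\Psi^{s-1}_b(M)$ and acts directly on $u$. Then Lemma~\ref{prop-boundH1} (with the \emph{same} $G\in\Psi^{s-1}_b$ and the \emph{same} $u$) gives $\|\mathcal B A u\|_{\mathcal{H}^1}\leq C(\|Gu\|_{\mathcal{H}^1}+\|\chi u\|_{\mathcal{H}^{1,m}})$, and the outer $\Lambda\in\Psi^1_b(M)$ maps $\mathcal{H}^1\to\mathcal{L}^2$ continuously (b-vector fields do, hence so do compactly supported first-order b-$\Psi$DOs, by the decomposition $\Lambda=\sum V_jB_j+B_0$ with $V_j\in\mathcal V_b$, $B_j\in\Psi^0_b$), yielding $\|Au\|_{\mathcal{L}^2}\leq C\|\mathcal B Au\|_{\mathcal{H}^1}$ plus a smoothing remainder. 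This avoids the circularity entirely and also removes the awkward microsupport constraint you flag at the end (there is no need to keep $WF_b'(\Lambda)$ disjoint from $WF_b'(G)$). Finally, note that as written your argument only treats $k=1$; the case $k=-1$ needs a dual version of the same trick.
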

These two lemmas play a pivotal role in the following, when we employ energy estimates to prove the propagation of singularity theorem.

\subsection{Asymptotic expansion and traces}\label{Sec: Asymptotic expansion}
In Section \ref{subsec:twisted_sobolev}, we have already individuated a trace $\gamma_-$ in Equation \eqref{Eq: gamma-}. Here we tackle the problem of finding a second one $\gamma_+$. To this end we consider once more a globally hyperbolic, asymptotically AdS spacetime $(M,g)$ and the associated Klein-Gordon operator acting on scalar functions $\phi:M\to\mathbb{R}$
\begin{equation}\label{Eq: KG on M}
P\phi=\left(\Box_g-m^2\right)\phi=\left(x^2\Box_{\widetilde{g}}-m^2\right)\phi=0,
\end{equation}
where $\Box_g$ is the D'Alembert wave operator built out of the metric, while $m^2\geq 0$ plays the role of the squared mass. Here $\widetilde{g}$ is the metric whose associated line element is $-dx^2+h_x$, see Equation \eqref{Eq: metric near the boundary}.

Following \cite{GaWr18}, we can now introduce a family of functional spaces enjoying additional regularity with respect to the Klein-Gordon operator.

\begin{defn}\label{Def: chik spaces}
Let $(M,g)$ be a globally hyperbolic, asymptotically anti-de Sitter spacetime and let $P$ be the Klein-Gordon operator as in Equation \eqref{Eq: KG on M}. For all $m \in \mathbb{R}$, we define the Frech\'et spaces 
\begin{equation}\label{Eq: spazi chik}
\mathcal{X}^m(M) = \{u \in \mathcal{H}^{1,m}_{loc}(M)\; |\; Pu \in x^2 \mathcal{H}^{0,m}_{loc}(M) \},
\end{equation}
with respect to the seminorms
\begin{equation}\label{Eq: seminorme chik}
\norm{u}_{\mathcal{X}^m(M)} = \norm{\phi u}_{\mathcal{H}^{1,m}(M)}+\norm{x^{-2}\phi P u}_{\mathcal{H}^{0,m}(M)},
\end{equation}
where $\phi$ is a suitable smooth and compactly supported function.
\end{defn}

\begin{rem}\label{Rem: chik spaces for relatively compact}
If $K$ is any relatively compact subset of $M$ we can introduce in analogy to Definition \ref{Def: tildeH spaces} the space $\mathcal{H}^{k,m}(K)$, $k=0,1$, $m>0$, as well as 
$$\mathcal{X}^m(K) = \{u \in \mathcal{H}^{1,m}(K)\; |\; x^{-2}Pu \in \mathcal{H}^{0,m}(K) \},$$
endowed with the norm
\begin{equation}\label{Eq: chikV norms}
\norm{u}_{\mathcal{X}^m(K)} = \norm{ u }_{\mathcal{H}^{1,m}(K)}+\norm{x^{-2} P u}_{\mathcal{H}^{0,m}(K)}
\end{equation}	
\end{rem}

\noindent In the following we show that, starting from $\chi^k_{loc}(M)$, it is possible to improve the expansion given in Theorem \ref{lemma-exp1}. 

\begin{lemma}[Lemma 4.6 in \cite{GaWr18}]\label{lemma-exp2}
Let $(\mathbb{R}^n_+,g)$ be an asymptotically AdS spacetime as per Definition \ref{Def: asymptotically AdS} such that, with respect to the standard Cartesian coordinates, the line elements reads
\begin{equation}
g = \frac{-dx^2+h_{ab}dy^ady^b}{x^2}.
\end{equation}
Consider an admissible twisting function $F$, as per Definition \ref{Def: twisting function} such that at $x=0$ $x^{-\nu_-} F = 1$, where $\nu_-=\frac{1}{2}-\nu$ is the indicial root. If $u \in \mathcal{H}^{1,k}_0(\mathbb{R}^n_+)$ and $Pu \in x^2 \mathcal{H}^{0,k}_0(\mathbb{R}^n_+)$ for $k \geq 0$, then, for any $\epsilon>0$ the restriction of $u$ to $\mathbb{R}^{n-1}\times [0,\epsilon)$ admits an asymptotic expansion
\begin{equation}\label{Eq: asymptotic expansion in chik}
u = Fu_- + x^{\nu_+}u_+ + x^{r+2}H_b^{k+2}([0,\varepsilon);H^{k-3}(\mathbb{R}^{n-1}))
\end{equation}
where $2r=n-2$, $u_- \in H^{\nu+k}(\mathbb{R}^{n-1})$ and $u_+ \in H^{-1-2\nu+k}(\mathbb{R}^{n-1})$.
\end{lemma}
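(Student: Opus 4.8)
The plan is to reduce the statement to an ODE analysis in the normal variable $x$, treating the tangential variable $y\in\mathbb{R}^{n-1}$ as a parameter, exactly as in the model $PAdS_2$ example worked out in Section \ref{subsec:twisted_sobolev}. First I would write the Klein--Gordon operator $P$ in the coordinates $(x,y)$ near $x=0$. Using the form of the metric $g=x^{-2}(-dx^2+h_{ab}(x,y)\,dy^a dy^b)$ one obtains, modulo lower order terms and terms carrying extra powers of $x$, an expression of the schematic form
\begin{equation*}
P = x^2\!\left( -\partial_x^2 + \tfrac{n-1}{x}\partial_x + \Delta_{h_x} \right) - m^2,
\end{equation*}
so that the indicial operator at the boundary has roots $\nu_\pm=\frac{n-1}{2}\pm\nu$ with $\nu=\frac12\sqrt{(n-1)^2+4m^2}$, as stated. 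Conjugating by the twisting function $F$, i.e. setting $u=Fv$, transforms $P$ into an operator for which the "Dirichlet" root is shifted to $0$; this is the algebraic content of Lemma \ref{lem:diff-psi-interact} and the remarks on twisted differential operators, and it makes $v=x^{-\nu_-}u$ the natural object whose regularity we read off.

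Next I would set up the iteration. By Theorem \ref{lemma-exp1} we already know $u=x^{\nu_-}u_-+x^{r+1}H^1_b([0,\varepsilon);L^2(\mathbb{R}^{n-1}))$ with $u_-\in H^\nu$; the content of the present lemma is to bootstrap this using the extra information $u\in\mathcal{H}^{1,k}_0$ and $Pu\in x^2\mathcal{H}^{0,k}_0$. The key step is a \emph{Mellin-transform / indicial-root} argument: subtracting the leading term, $w\doteq u-Fu_-$ satisfies an equation $Pw\in x^2\mathcal{H}^{0,k}_0$ with improved vanishing, and the indicial roots of $P$ other than $\nu_-$ are exactly $\nu_+$, so inverting the indicial operator gains two powers of $x$ and produces the next term $x^{\nu_+}u_+$, with $u_+\in H^{-1-2\nu+k}$ — the index shift $-1-2\nu$ coming from the fact that dividing by the indicial polynomial near the root $\nu_+$ costs a factor behaving like $\langle\xi\rangle^{-1}$ after a further loss, while each application of $x^2\Delta_{h_x}$ to a function in $H^s$ lands in $H^{s-2}$ but is compensated by the two gained powers of $x$ in the $H^j_b$ scale. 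Concretely: expand $u=\sum_{j} x^{\nu_-+j}u_-^{(j)} + x^{\nu_+}u_+ + (\text{remainder})$, plug into $Pu\in x^2\mathcal{H}^{0,k}_0$, and match powers of $x$; the coefficients $u_-^{(j)}$ for $0<\nu_-+j<\nu_+$ are determined recursively and are absorbed into "$Fu_-$" because $F\in x^{\nu_-}C^\infty(M)$ already encodes the full Taylor series $x^{\nu_-}(1+c_1 x+\dots)$ of the Dirichlet branch. The remainder is then estimated in $x^{r+2}H^{k+2}_b([0,\varepsilon);H^{k-3}(\mathbb{R}^{n-1}))$ by a standard Hardy-type / weighted energy estimate in the $b$-Sobolev scale, tracking how many $x\partial_x$ derivatives the hypothesis $\mathcal{H}^{1,k}_0$ affords (this is where the $k+2$ in the $H^j_b$ index and the $k-3$ loss in the tangential index are produced: two orders gained from integrating the second-order ODE twice, minus the orders spent commuting $x^2\Delta_{h_x}$ past the normal operator).

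I expect the main obstacle to be the bookkeeping of the two different Sobolev scales — the $b$-scale $H^{k+2}_b([0,\varepsilon);\,\cdot\,)$ in the normal variable versus the ordinary scale $H^{s}(\mathbb{R}^{n-1})$ in the tangential variables — and in particular verifying that the tangential loss is exactly $H^{k-3}$ rather than something worse. The delicate point is that $\Delta_{h_x}$ is a genuine second-order tangential operator with $x$-dependent coefficients, so one cannot simply Fourier-transform in $y$ and treat the problem as a family of ODEs with constant indicial data; instead one must commute $\Delta_{h_x}$ through the normal parametrix and control the commutators, each of which costs tangential regularity but is paid for by the weight $x^2$ in the hypothesis on $Pu$. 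A secondary technical nuisance is the borderline/resonant cases where $\nu_-+j$ could coincide with $\nu_+$ (i.e. $2\nu\in\mathbb{N}$), which would a priori produce logarithmic terms; one checks that the hypotheses and the compact support rule this out, or absorbs it into the definition of the twisting function. Since the statement is quoted verbatim as Lemma 4.6 of \cite{GaWr18}, I would in practice reference that proof and only indicate the points above where the argument specializes to our setting; the genuinely new work in the paper occurs afterwards, when $\Theta\in\Psi^k_b(\partial M)$ enters the boundary condition.
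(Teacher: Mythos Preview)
The paper does not provide a proof of this lemma; it is quoted verbatim from \cite{GaWr18} (as the label ``Lemma 4.6 in \cite{GaWr18}'' indicates) and used as a black box to define the trace map $\gamma_+$ in Equation \eqref{Eq: gamma+}. You correctly identify this at the end of your proposal. Your sketch of the underlying argument --- reduction to the indicial operator in the normal variable, Mellin-transform/normal-operator inversion to extract the $x^{\nu_\pm}$ terms, and bookkeeping between the $b$-Sobolev scale in $x$ and the ordinary Sobolev scale in $y$ --- is the standard route and is indeed what \cite{GaWr18} does, so there is nothing in the present paper to compare against.
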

\noindent This result together with Theorem \ref{lemma-exp1} allows us to define the sought trace $\gamma_+$ on $\chi^\infty(M)$

\begin{equation}\label{Eq: gamma+}
\gamma_+ u = x^{1-2\nu}\partial_x(F^{-1}u)|_{\partial X}
\end{equation}
Observe that, working in a special coordinate patch, the restriction of $u$ to the boundary can be written as
\begin{equation}
u = x^{\nu_-}F u_- + x^{\nu_+}u_+ + u_2,  \ \ \ \ u_2 \in x^2\mathcal{H}^{2,\infty}_{loc}([0,\varepsilon)\times \mathbb{R}^{n-1}).
\end{equation} 
In these coordinates $\gamma_+ u = 2\nu u_+$. 

\begin{rem}
The second term of the expansion, of the form $x^{\nu_+}u_+$ is the leading term of the asymptotic behavior of a solution of the Klein-Gordon equation with Dirichlet boundary conditions on an anti-de Sitter spacetime. For this reason, we refer to $\gamma_+$ as the Dirichlet trace.
\end{rem}

\subsection{The twisted Dirichlet energy form}\label{Sec: Twisted Dirichlet form}

The last part of this chapter is devoted to the construction of a twisted Dirichlet form for the case in hand. Recall that we are considering a Klein-Gordon operator as per Equation \eqref{Eq: KG on M} on a globally hyperbolic, asymptotically AdS spacetime, {\it cf.} Definition \ref{Def: asymptotically AdS}.

Consider in addition a twisting function $F$ as per Definition \ref{Def: twisting function} and, in analogy to Equation \eqref{Eq: twsited differential PAdS2}, define the twisted differential 
$$d_F \doteq F \circ d \circ F^{-1},$$
whose action on smooth functions vanishing at $\partial M$ together with all its derivatives is
\begin{equation*}
d_F:\dot{C}^\infty(M) \rightarrow \dot{C}^\infty(M;T^*M),\quad v\mapsto d_F v = Fd(F^{-1}v) = dv + v F^{-1} dF
\end{equation*}

With these data, for every $u,v \in \mathcal{L}^2_{loc}(M)$, we define the twisted Dirichlet form by:
\begin{equation}\label{Eq: twisted Dirichlet form}
\mathcal{E}_0(u,v) = - \int\limits_M g(d_Fu,d_F \overline{v}) d\mu_g,
\end{equation}
where $d\mu_g$ is the metric induced volume form. 

Observe that, if $u,v \in \mathcal{H}^1_{loc}(M)$ and the intersection of their supports is compact, then $\mathcal{E}_0(u,v)$ is finite.  

\begin{rem}\label{Rem: Klein-Gordon operator and twisted differentials}
We can conveniently express the Klein-Gordon operator in Equation \eqref{Eq: KG on M} in terms of the twisted differentials:
\begin{equation}
P = -(d_F)^\dagger d_F + F^{-1}P(F),
\end{equation}
where $(d_F)^\dagger$ is the formal adjoint of $d_F$ with respect to inner product on $L^2(M;d\mu_g)$. As observed in \cite{War1}, twisted differentials can be used to regularize the energy form in the case in which the multiplication by $S_F = F^{-1}P(F) \in \mathcal{C}^\infty(\mathring{M})$ is a bounded operator from $\mathcal{L}^2(M)$ to $x^2\mathcal{L}^2(M)$, where $x$ is the boundary function.	
\end{rem}

In view of this remark and as in \cite{GaWr18}, we consider only a subset of the twisting functions:

\begin{defn}\label{Def: admissible twisting function}
	A twisting function $F$ as in Definition \ref{Def: twisting function} is called {\em admissible} if $S_F\doteq F^{-1}PF \in x^2 L^\infty(M)$ where $P$ is the Klein-Gordon operator as in Equation \eqref{Eq: KG on M}.
\end{defn}

A first notable application of admissible twisting functions is the following: 
Let $\nu \in (0,1)$ and suppose that $u \in \chi^\infty_{loc}(M)$, {\it cf.} Definition \ref{Def: chik spaces} while $v \in \mathcal{H}^1_{0}(M)$. Then the following Green's formula holds true:

\begin{equation}\label{eq:GreenFormula}
\int Pu \cdot \overline{v} \ d\mu_g = \mathcal{E}_0(u,v) + \int S_Fu \cdot \bar{v} \ d\mu_g + \int \gamma_+u \cdot \gamma_-\bar{v} \ d\mu_h,
\end{equation}
where $d\mu_h$ is the volume form induced by $h$, the pull-back of $g$ to $\partial M$. Actually it is possible to extend both the realm of applicability of Equation \eqref{eq:GreenFormula} and of the trace map $\gamma_+$ as in Equation \eqref{Eq: gamma+} as discussed in \cite[Lemma 4.8]{GaWr18}:

\begin{lemma}\label{th:GreenLemma}
	The map $\gamma_+$ as per Equation \eqref{Eq: gamma+} can be extended to a bounded map 
	$$\gamma_+:\mathcal{X}^k(M)\to\mathcal{H}^{k-\nu}_{loc}(\partial M),\quad\forall k\in\mathbb{R}$$
	and, if $u \in \mathcal{X}^k(M)$, the Green's formula \eqref{eq:GreenFormula} holds true for every $v \in \mathcal{H}^{1,-k}_{0}(M)$.
\end{lemma}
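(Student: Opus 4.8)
\textbf{Proof plan for Lemma \ref{th:GreenLemma}.}

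The plan is to extend $\gamma_+$ by a density and continuity argument, relying on the asymptotic expansion of Lemma \ref{lemma-exp2} and the already-established continuity of $\gamma_-$ on the scale $\mathcal{H}^{1,m}_{loc}(M)$ from Lemma \ref{Lem: interaction between Hk and gamma-}. First I would fix a relatively compact chart $U$ adapted to the boundary, in which the metric takes the normal form \eqref{Eq: metric near the boundary} and the admissible twisting function satisfies $x^{-\nu_-}F=1$ at $x=0$, so that Lemma \ref{lemma-exp2} applies verbatim. For $u\in\mathcal{X}^k(M)$ with $k$ a nonnegative integer, the local expansion \eqref{Eq: asymptotic expansion in chik} gives $u=Fu_-+x^{\nu_+}u_++u_2$ with $u_+\in H^{-1-2\nu+k}(\mathbb{R}^{n-1})$ locally and $u_2$ in a space that vanishes to higher order; recalling that in these coordinates $\gamma_+u=2\nu u_+$, one reads off directly that $\gamma_+u\in H^{k-\nu}_{loc}(\partial M)$ and that the map is bounded, the estimate on $u_+$ being exactly the content of the expansion lemma together with the definition \eqref{Eq: seminorme chik} of the $\mathcal{X}^k$-seminorms. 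A partition of unity subordinate to a cover of $\partial M$ by such charts, together with the independence of the Sobolev norms on the choices (as recorded after Definition \ref{Def: partition of unity}), patches these local statements into the global bounded map $\gamma_+:\mathcal{X}^k(M)\to\mathcal{H}^{k-\nu}_{loc}(\partial M)$.

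To obtain the result for arbitrary real $k$, rather than only nonnegative integers, I would interpolate, or more in the spirit of this $b$-calculus setting, use a duality/commutation argument: if $A\in\Psi^{|k|}_b(M)$ is elliptic, then conjugating the whole problem by an appropriate power of $x$ and by $A$ moves $\mathcal{X}^k$ into $\mathcal{X}^{k'}$ with $k'$ of the opposite sign or in the integer range, using that the principal symbol of a $b$-$\Psi$DO is invariant under conjugation by powers of $x$ and that $P=-(d_F)^\dagger d_F+S_F$ commutes with such operators up to lower order terms controlled by Lemma \ref{lem:diff-psi-interact}; the indicial family \eqref{Eq: indicial family} governs precisely how $\widehat{N}(A)(s)$ acts on the two asymptotic branches $x^{\nu_\pm}$, and one checks it is elliptic on the relevant branch so that no spurious loss occurs. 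Negative $k$ is then handled by duality against the trace $\gamma_-:\mathcal{H}^{1,-k}_{loc}(M)\to\mathcal{H}^{\nu-k}_{loc}(\partial M)$ of Lemma \ref{Lem: interaction between Hk and gamma-}, pairing through the boundary term in \eqref{eq:GreenFormula}.

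For the second assertion, the Green's formula \eqref{eq:GreenFormula} is known for $u\in\mathcal{X}^\infty_{loc}(M)$ and $v\in\mathcal{H}^1_0(M)$. Given $u\in\mathcal{X}^k(M)$, I would approximate it by $u_j\in\mathcal{X}^\infty_{loc}(M)$ (for instance by mollifying with a family of smoothing $b$-operators, using Definition \ref{Def: tildeH spaces} and that $\Psi^0_b(M)$ acts continuously on all the relevant spaces by Lemma \ref{lemma:continuous-extension}) with $u_j\to u$ in $\mathcal{X}^k(M)$, and pass to the limit term by term: the bulk terms $\int Pu_j\cdot\bar v$, $\mathcal{E}_0(u_j,v)$ and $\int S_Fu_j\cdot\bar v$ converge because, for $v\in\mathcal{H}^{1,-k}_0(M)$, each is a continuous pairing between a space on which $u\mapsto$ (the relevant quantity) is bounded out of $\mathcal{X}^k$ and the dual space to which $v$ belongs — here one uses that $S_F\in x^2L^\infty$ by admissibility (Definition \ref{Def: admissible twisting function}) and that $d_F:\mathcal{X}^k\to x\,\mathcal{H}^{0,k}_{loc}$-type mapping properties hold. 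The boundary term $\int\gamma_+u_j\cdot\gamma_-\bar v\,d\mu_h$ converges by combining the first part of the lemma ($\gamma_+u_j\to\gamma_+u$ in $\mathcal{H}^{k-\nu}_{loc}(\partial M)$) with $\gamma_-\bar v\in\mathcal{H}^{\nu-k}_{loc}(\partial M)$ from Lemma \ref{Lem: interaction between Hk and gamma-}, the two Sobolev orders being dual.

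\textbf{Main obstacle.} The delicate point is the duality pairing on the boundary: one must match the orders $k-\nu$ and $-(k-\nu)=\nu-k$ exactly, and verify that the $\mathcal{H}^{1,-k}_0(M)\to\mathcal{H}^{\nu-k}_{loc}(\partial M)$ mapping of Lemma \ref{Lem: interaction between Hk and gamma-} genuinely pairs with the range of $\gamma_+$ on $\mathcal{X}^k$ — this forces the careful bookkeeping between the two indicial roots $\nu_\pm$ and is where the admissibility of $F$ and the precise form of the expansion \eqref{Eq: asymptotic expansion in chik} are essential. A secondary technical nuisance is justifying the density of $\mathcal{X}^\infty_{loc}$ in $\mathcal{X}^k$ in the topology that makes all four limits simultaneous, which is why the smoothing-operator regularization, rather than naive mollification, is the natural tool.
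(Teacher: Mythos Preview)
The paper does not prove this lemma; it is cited as \cite[Lemma 4.8]{GaWr18}, so there is no in-house argument to compare against and your proposal must stand on its own.

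There is a genuine gap in your first step. You claim that from the expansion \eqref{Eq: asymptotic expansion in chik} one ``reads off directly that $\gamma_+u\in H^{k-\nu}_{loc}(\partial M)$'', but Lemma \ref{lemma-exp2} as stated only yields $u_+\in H^{k-1-2\nu}$, and since $\nu\in(0,1)$ one has $k-1-2\nu<k-\nu$ strictly. The expansion therefore delivers weaker regularity than the lemma asserts, and interpolating or conjugating by elliptic $b$-operators from that starting point cannot manufacture the missing $1+\nu$ tangential derivatives.

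The repair is already in your proposal, just misplaced: the duality through Green's formula that you reserve for negative $k$ is in fact what produces the sharp order $k-\nu$ for \emph{every} $k$. Lemma \ref{lemma-exp2} serves only to make $\gamma_+$ well defined on $\mathcal{X}^\infty$ (where the loss is harmless) and to justify \eqref{eq:GreenFormula} there. One then rearranges \eqref{eq:GreenFormula} to isolate the boundary pairing; the remaining bulk terms are bounded by $C\|u\|_{\mathcal{X}^k}\|v\|_{\mathcal{H}^{1,-k}}$ and vanish on $\ker\gamma_-$, so that, using a bounded right inverse for $\gamma_-:\mathcal{H}^{1,-k}_0(M)\to\mathcal{H}^{\nu-k}(\partial M)$, one exhibits $\gamma_+u$ as a continuous functional on $\mathcal{H}^{\nu-k}(\partial M)$, i.e.\ an element of $\mathcal{H}^{k-\nu}(\partial M)$. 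Your smoothing-operator regularization is then the right tool to pass to the limit and obtain both the extended map and the formula simultaneously. The ``main obstacle'' you single out --- the exact matching of boundary Sobolev orders --- is correct; it is the heart of the argument rather than a peripheral check.
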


\section{Boundary value problem and $b-\Psi$DOs}\label{Sec: Boundary Value Problem}

In this section we use the geometric and analytic tools introduced in Section \ref{Sec: Geometric preliminaries} and \ref{Sec: Analytic Preliminaries} to introduce a distinguished class of boundary conditions for the Klein-Gordon equation, see Equation \eqref{Eq: KG on M}. Recall once more that $(M,g)$ refers to a globally hyperbolic, asymptotically AdS spacetime of dimension $n=\dim M$. Using the same nomenclature as in Subsection \ref{subsec:twisted_sobolev}, we introduce 
\begin{equation}\label{Eq: nu parameter}
\nu=\frac{1}{2}\sqrt{1+4m^2}\quad\textrm{and}\quad\nu_\pm=\frac{1}{2}\pm\nu,
\end{equation}
where $\nu_\pm$ are the {\em indicial roots}. Henceforth we shall consider only the case $\nu\in(0,1)$. The case $\nu=0$ could be included but it would require each time a separate analysis. For clarity purposes, we avoid considering such extremal scenario. The values of the mass for which $\nu\geq 1$ are not a priori pathological, but they are known not to require a boundary condition, see \cite{Dappiaggi:2017wvj, DDF18}.  

\subsection{Boundary conditions and the associated Dirichlet form}\label{Sec: Boundary conditions and the associated Dirichlet form}

In this section, we formulate the dynamical problem, we are interested in, so that the boundary condition is implemented by suitable $\Theta\in\Psi^k_b(\partial M)$. Formally, we look for $u\in H^1_{loc}(M)$ such that 
$$Pu=0,\quad\textrm{and}\quad \gamma_+u = \Theta\gamma_-u,$$
where $P$ is the Klein-Gordon operator as in Equation \eqref{Eq: KG on M}. Observe that, in order for this problem to be defined in a strong sense, we also need that $Pu\in x^2L^2_{loc}(M)$. Rather than focusing on this issue we give a weak formulation. More precisely consider $\Theta\in\Psi^k_b(\partial M)$ and define
\begin{equation}\label{Eq: Energy form with Theta}
\mathcal{E}_\Theta(u,v) = \mathcal{E}_0(u,v) + \int_M S_Fu \cdot \overline{v} d\mu_g + \int\limits_{\partial M} \Theta\gamma_- u \cdot \gamma_- \overline{v} d\mu_h,
\end{equation}
where $u \in \mathcal{H}^1_{loc}(M)$, while $v \in {\mathcal{H}}^1_{0}(M)$ whereas $F$ is an admissible twisting function, {\it cf.} Definition \ref{Def: admissible twisting function}, whose existence is assumed. Hence, we can introduce $P_\Theta : \mathcal{H}^1_{loc}(M) \rightarrow \dot{\mathcal{H}}^{-1}_{loc}(M)$ by
\begin{equation}\label{Eq: P_Theta}
\langle P_\Theta u, v \rangle = \mathcal{E}_\Theta(u,v)
\end{equation}
Observe that, on account of the regularity of $\gamma_-u$, we can extend $P_\Theta$ as an operator $P_\Theta: \mathcal{H}^{1,m}_{loc}(M) \rightarrow \dot{\mathcal{H}}^{-1,m}_{loc}(M)$, $m\in\mathbb{R}$ \cite{GaWr18}.

\begin{rem}\label{Rem: Pairing}
	Observe that, in this work, different pairings appear. For the sake of the simplicity of the notation we shall always use the symbol $\langle,\rangle$ since we reckon that the exact meaning can be understood from the context without risk of confusion. For example, in Equation \eqref{Eq: P_Theta}, the brackets $\langle,\rangle$ indicate the pairing between $\mathcal{H}^1(M)$ and $\dot{\mathcal{H}}^{-1}(M)$. 
\end{rem}

We report now a few microlocal estimates for the Dirichlet form, the first being the following upper bound.

\begin{lemma}[\cite{GaWr18}, Lemma 5.2]\label{lemma:DirichletE0Bound}
Let $U \subset M$ be a coordinate patch such that $U\cap\partial M\neq\emptyset$ and let $m \leq 0$. Let $\mathcal{A}=\{ A_r \ | \ r \in (0,1) \}$ be a bounded subset of $\Psi^s_b(M)$, $s\in\mathbb{R}$ with compact support in $U$, such that
\begin{equation*}
A_r \in \Psi^m_b(M) \ \textit{for each} \ r \in (0,1)
\end{equation*}
Let $G_1 \in \Psi^{s-1/2}_b(M)$ be elliptic on $WF^\prime_b(\mathcal{A})\subset{}^bT^*M\setminus\{0\}$, with compact support in $U$, {\it cf.} Definition \ref{Def: ellptic PsiDO}. Then there exist $C_0 > 0$ and $\chi \in \mathcal{C}^\infty_0(U)$ such that
\begin{equation*}
\mathcal{E}_0(A_r u,A_r u) \leq \mathcal{E}_0(u,A_r^* A_r u) + C_0 \left(\|G_1 u \|^2_{\mathcal{H}^1(M)}+ \|\chi u \|^2_{\mathcal{H}^{1,m}(M)} \right)
\end{equation*}
for every $r \in (0,1)$ and every $u \in \mathcal{H}^{1,m}(M)$, provided that $$WF^{1,s-1/2}_b(u) \cap WF^\prime_b(G_1) = \emptyset,$$
where $WF^{1,s-1/2}_b(u)$ is built as in Definition \ref{Def: wavefrontset for Hkm-loc}.
\end{lemma}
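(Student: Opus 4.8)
\textbf{Proof strategy for Lemma \ref{lemma:DirichletE0Bound}.}
The plan is to rewrite the difference $\mathcal{E}_0(A_r u, A_r u) - \mathcal{E}_0(u, A_r^* A_r u)$ as a sum of commutator terms and to control each of them uniformly in $r$ using the boundedness of $\mathcal{A}$ together with Lemmas \ref{lem:diff-psi-interact}, \ref{prop-boundH1} and \ref{prop-boundL2}. First I would expand $\mathcal{E}_0$ using its definition \eqref{Eq: twisted Dirichlet form}, namely $\mathcal{E}_0(w,z) = -\int_M g(d_F w, d_F \overline{z})\, d\mu_g$, and write $d_F = F\circ d\circ F^{-1}$. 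Since $g(d_F A_r u, d_F A_r \overline{u}) = g(d_F A_r u, d_F \overline{A_r u})$ and we want to compare with $\mathcal{E}_0(u, A_r^* A_r u)$, the natural move is to commute $A_r$ (or rather $A_r^* A_r$, after moving one factor across the $L^2$ pairing) past $d_F$. Using that $d_F$ is built from the generators of $\textbf{Diff}^1_\nu(M)$ and in particular from $Q_0 = F\partial_x F^{-1}$ and the tangential twisted derivatives, Lemma \ref{lem:diff-psi-interact} gives $Q_0 A_r = A_r' Q_0 + A_r''$ with $A_r', A_r'' \in \Psi^s_b(M)$ of the appropriate orders; the maps $A_r \mapsto A_r', A_r''$ being microlocal means these remain bounded families with $WF_b'$ contained in $WF_b'(\mathcal{A})$. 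An analogous statement holds for the tangential twisted derivatives, whose commutators with $A_r$ drop the order by one.

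The main computation is then the integration-by-parts / reshuffling identity: after commuting, $\mathcal{E}_0(A_r u, A_r u) - \mathcal{E}_0(u, A_r^* A_r u)$ collects into terms of the schematic form $\langle (\text{commutator}) d_F u, A_r d_F u\rangle_{L^2}$ and $\langle d_F u, (\text{lower order}) A_r u\rangle_{L^2}$, possibly together with a boundary contribution coming from the twisting factor $F^{-1}dF$; but since $m\le 0$ and the families are bounded, and since $d_F$ maps $\mathcal{H}^1$ into $\mathcal{L}^2$, each such term is a pairing of an $\mathcal{L}^2$-object against an object that — thanks to the commutator lowering the effective order below that of $\mathcal{A}$ — is controlled by $G_1$ of order $s-1/2$ (the half-integer gain is exactly what one expects from Cauchy–Schwarz splitting the single order-$1$ commutator symmetrically between the two factors). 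Concretely I would apply Lemma \ref{prop-boundH1} with $k=1$ to the genuinely $\mathcal{H}^1$ terms and Lemma \ref{prop-boundL2} (which gives the $\mathcal{L}^2$ bound with the $\Psi^{s-1}_b$-elliptic $G$) to the terms where one derivative has been eaten by a commutator; combining via Cauchy–Schwarz and absorbing, one obtains $\mathcal{E}_0(A_r u, A_r u) - \mathcal{E}_0(u, A_r^* A_r u) \le C_0(\|G_1 u\|^2_{\mathcal{H}^1(M)} + \|\chi u\|^2_{\mathcal{H}^{1,m}(M)})$, with $\chi$ supported in $U$ arising from the compact supports of $\mathcal{A}$ and $G_1$ and from Proposition \ref{prop-Psi0_Bound}.

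\textbf{Main obstacle.} The delicate point is bookkeeping the orders so that a single elliptic $G_1$ of order $s-1/2$ suffices, rather than order $s$. This forces one to split each commutator pairing symmetrically: write $\langle [\Lambda, A_r] d_F u, A_r d_F u\rangle$ with $[\Lambda, A_r]$ of order $s-1$ (one from the commutator, relative to $A_r$ of order $s$) as $\langle G_1^{-1}[\Lambda,A_r] d_F u, G_1 A_r d_F u\rangle$-type expressions using a microlocal parametrix (Definition \ref{Def: microlocal paramterix}) for $G_1$ on $WF_b'(\mathcal{A})$, so that each factor carries order $\le s-1/2$; the error from the parametrix being the identity only microlocally is smoothing on $WF_b'(\mathcal{A})$ and hence absorbed into the $\|\chi u\|_{\mathcal{H}^{1,m}}$ term. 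One must also be careful that the twisting does not spoil the self-adjointness structure used to pass from $A_r^* A_r$ back to $A_r$ acting on both slots: this is where admissibility of $F$ and the fact, recalled in Remark \ref{Rem: Klein-Gordon operator and twisted differentials}, that $d_F$ has a well-behaved formal adjoint $(d_F)^\dagger$ with respect to $L^2(M;d\mu_g)$, enter. Finally, uniformity in $r\in(0,1)$ is not automatic but follows because every constant produced is a continuous seminorm of the symbols of the $A_r$, which by hypothesis form a bounded subset of $\Psi^s_b(M)$.
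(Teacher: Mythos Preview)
The paper does not supply its own proof of this lemma: it is stated with the citation ``[\cite{GaWr18}, Lemma 5.2]'' and then used as a black box. So there is no in-paper argument to compare against; the reference is to Gannot--Wrochna's original proof.

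That said, your sketch is the correct strategy and matches what one finds in \cite{GaWr18}. The core idea---expand $\mathcal{E}_0(A_r u,A_r u)-\mathcal{E}_0(u,A_r^*A_r u)$ in terms of the twisted derivatives $Q_i$, commute $A_r$ (or $A_r^*A_r$) past the $Q_i$, and bound the resulting commutator terms via Lemmas \ref{prop-boundH1}--\ref{prop-boundL2}---is exactly right, and your explanation of why $G_1$ can be taken of order $s-1/2$ (symmetric Cauchy--Schwarz splitting of a total order-$(2s-1)$ error) is the correct heuristic. One small point to keep straight: for $Q_0$ the commutator does \emph{not} simply drop one order in $\Psi_b$, since $Q_0\notin\Psi^1_b(M)$; this is precisely why you must invoke Lemma \ref{lem:diff-psi-interact}, which writes $[Q_0,A_r]=A_{1,r}Q_0+A_{0,r}$ with $A_{0,r}\in\Psi^s_b(M)$ still of top order. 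The saving grace is that $A_{0,r}$ carries no $Q_0$, so the associated pairing is an $\mathcal{L}^2$--$\mathcal{L}^2$ term and falls under Lemma \ref{prop-boundL2} rather than Lemma \ref{prop-boundH1}. Your write-up already implicitly handles this (``terms where one derivative has been eaten''), but in a full proof you should make the $Q_0$ bookkeeping explicit.
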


At this point we can prove an estimate for the boundary value problem associated with the pseudodifferential operator $\Theta\in\Psi^k_b(\partial M)$. As mentioned in the introduction we can control two different classes 
\begin{itemize}
	\item $\Theta\in\Psi^k_b(\partial M)$ with $k\leq 0$,
	\item $\Theta\in\Psi^k_b(\partial M)$ with $0<k\leq 2$.
\end{itemize}

In particular, we bound the difference between a generic positive-definite sesquilinear pairing form $\mathcal{Q}$ and $\mathcal{E}_0$ as per Equation \eqref{Eq: twisted Dirichlet form}. We start from $k \leq 0$.

\begin{lemma}\label{Lem: lemma-bound-quad-form-v1}
Let $U \subset M$ be a coordinate patch such that $U\cap\partial M\neq\emptyset$ and let $m \leq 0$. Suppose $\Theta \in \Psi_b^k(\partial M)$ with $k\leq 0$. Let $\mathcal{A}=\{ A_r \ | \ r \in (0,1) \}$ be a bounded subset of $\Psi^s_b(M)$, $s \in \mathbb{R}$, with compact support in $U$, such that
\begin{equation*}
A_r \in \Psi^m_b(M) \ \textit{for each} \ r \in (0,1)
\end{equation*}
Let $G_0 \in \Psi^s_b(M)$ be elliptic on $WF^\prime_b(\mathcal{A})$ and $G_1 \in \Psi_b^{s-1/2}(M)$ be elliptic on $WF^\prime_b(\mathcal{A})$, both with compact support in $U$, {\it cf.} Definition \ref{Def: ellptic PsiDO}. In addition let $\mathcal{E}_0$ and $\mathcal{Q}$ be respectively the twisted Dirichlet form on $\mathcal{L}^2_{loc}(M)$ and a generic positive-definite sesquilinear pairing. Then there exists $C_0 > 0$ and $\chi \in \mathcal{C}^\infty_0(U)$ such that
\begin{equation*}
\begin{split}
& \mathcal{E}_0(A_r u,A_r u) - \varepsilon \mathcal{Q}(A_r u,A_r u) \leq \\ & C_0 \Big( \|\chi u \|^2_{\mathcal{H}^{1,m}(M)} + \|\chi P_\Theta u \|^2_{\dot{\mathcal{H}}^{-1,m}(M)} + \| G_0 P_\Theta u\|^2_{\dot{\mathcal{H}}^{-1}(M)}  +  \|G_1 u \|^2_{\mathcal{H}^1(M)} \Big)
\end{split}
\end{equation*}
for every $r \in (0,1)$ and every $u \in \mathcal{H}^{1,m}(M)$, provided that the following conditions are met:
$$\ WF^{-1,s}_b(P_\Theta u) \cap WF^\prime_b(G_0) = \emptyset$$
$$ WF_b^{1,s-1/2}(u) \cap WF_b^\prime(G_1) = \emptyset $$
\end{lemma}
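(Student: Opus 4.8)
The plan is to reduce the statement to the already-established upper bound on $\mathcal{E}_0(A_r u, A_r u)$ in Lemma \ref{lemma:DirichletE0Bound} and then to absorb the boundary term coming from $\Theta$ using the mapping properties of $\gamma_-$ and the non-negativity of $\mathcal{Q}$. First I would invoke Lemma \ref{lemma:DirichletE0Bound} to get
\begin{equation*}
\mathcal{E}_0(A_r u, A_r u) \leq \mathcal{E}_0(u, A_r^* A_r u) + C_0\left(\|G_1 u\|^2_{\mathcal{H}^1(M)} + \|\chi u\|^2_{\mathcal{H}^{1,m}(M)}\right),
\end{equation*}
and then rewrite the leading term $\mathcal{E}_0(u, A_r^* A_r u)$ using the definition of $\mathcal{E}_\Theta$ in Equation \eqref{Eq: Energy form with Theta} together with the definition of $P_\Theta$ in Equation \eqref{Eq: P_Theta}, namely
\begin{equation*}
\mathcal{E}_0(u, A_r^* A_r u) = \langle P_\Theta u, A_r^* A_r u\rangle - \int_M S_F u \cdot \overline{A_r^* A_r u}\, d\mu_g - \int_{\partial M} \Theta \gamma_- u \cdot \gamma_-\overline{A_r^* A_r u}\, d\mu_h.
\end{equation*}
The point is to estimate the three resulting terms separately.

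For the first term $\langle P_\Theta u, A_r^* A_r u\rangle$, I would split $A_r^* A_r$ into a piece microsupported where $G_0$ is elliptic and a smoothing remainder: write $A_r^* A_r = G_0^* B_r + R_r$ with $\{B_r\}$ bounded in $\Psi^m_b(M)$ and $R_r$ bounded in $\Psi^{-\infty}_b$, using a microlocal parametrix for $G_0$ on $WF_b^\prime(\mathcal{A})$. Then pair $P_\Theta u$ against $G_0^*$ (controlled by $\|G_0 P_\Theta u\|_{\dot{\mathcal{H}}^{-1}}$ times an $\mathcal{H}^1$-norm of $B_r u$, which is in turn bounded by $\|\chi u\|_{\mathcal{H}^{1,m}}$ via Proposition \ref{prop-Psi0_Bound}), apply Cauchy--Schwarz and Young's inequality to split off an $\varepsilon$-small quadratic-form contribution or an $\mathcal{H}^{1,m}$-controlled term, and handle the remainder against $\|\chi P_\Theta u\|_{\dot{\mathcal{H}}^{-1,m}}$. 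The hypothesis $WF^{-1,s}_b(P_\Theta u)\cap WF_b^\prime(G_0)=\emptyset$ is exactly what is needed here, via Lemma \ref{prop-boundH1} applied in the $\dot{\mathcal{H}}^{-1}$ scale. The $S_F$ term is straightforward since $S_F\in x^2 L^\infty(M)$ by admissibility, Definition \ref{Def: admissible twisting function}, so it is bounded by $\|\chi u\|^2_{\mathcal{H}^{1,m}(M)}$ after using boundedness of $A_r$ on the relevant spaces.

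The boundary term is the main obstacle and the place where $k\leq 0$ is essential. I would use that $\gamma_-$ restricted to $\mathcal{H}^{1,m}_{loc}(M)$ lands in $\mathcal{H}^{\nu+m}_{loc}(\partial M)$ (Lemma \ref{Lem: interaction between Hk and gamma-}), that $A_r^* A_r$ acts boundedly so $\gamma_- A_r^* A_r u$ has comparable regularity, and that $\Theta\in\Psi^k_b(\partial M)$ with $k\leq 0$ maps $\mathcal{H}^{\nu}_{loc}(\partial M)$ to $\mathcal{H}^{\nu-k}_{loc}(\partial M)\subset\mathcal{H}^{\nu}_{loc}(\partial M)$, so the boundary pairing is bounded by $\|\gamma_- u\|^2_{L^2(\partial M)}$-type quantities up to lower order; then Lemma \ref{eq:gamma-bound} converts this into $\varepsilon\|u\|^2_{\mathcal{H}^1}$ plus $C_\varepsilon\|u\|^2_{\mathcal{L}^2}$, and the $\varepsilon\|\cdot\|_{\mathcal{H}^1}$ part is absorbed either into the left side (since $\mathcal{E}_0$ is comparable to an $\mathcal{H}^1$-seminorm modulo lower order, using $\mathcal{Q}$ positive-definite to handle the sign) or into the $\varepsilon\mathcal{Q}(A_r u, A_r u)$ term on the left. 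One must be careful that the commutator $[\Theta,\text{localizer}]$ and the fact that $\gamma_-$ commutes with $b$-$\Psi$DOs only up to indicial-family corrections (Equations \eqref{Eq: indicial family}--\eqref{Eq: product of indicial families}) produce only lower-order contributions controlled by $\|\chi u\|_{\mathcal{H}^{1,m}}$; collecting all pieces and choosing the various Young's-inequality parameters small enough yields the claimed estimate with the stated four terms on the right.
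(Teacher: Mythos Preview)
Your proposal is correct and follows essentially the same route as the paper's proof: both invoke Lemma \ref{lemma:DirichletE0Bound}, then split $\mathcal{E}_0(u,A_r^*A_r u)$ into the $P_\Theta$-pairing, the $S_F$ term, and the boundary $\Theta$-term, and treat each separately using the ellipticity of $G_0$, admissibility of $F$, and Lemma \ref{eq:gamma-bound} together with $k\leq 0$. The only cosmetic differences are that the paper writes $\langle P_\Theta u, A_r^*A_r u\rangle=\langle A_r P_\Theta u, A_r u\rangle$ and bounds it directly via the $\mathcal{H}^1$--$\dot{\mathcal{H}}^{-1}$ pairing and Lemmas \ref{prop-boundH1}--\ref{prop-boundL2} (rather than an explicit parametrix factorization $A_r^*A_r=G_0^*B_r+R_r$, where incidentally your claimed order for $B_r$ is off), and the paper handles the boundary term by passing through the indicial family to obtain $\langle\gamma_-(\Theta\widetilde{A}_r u),\gamma_-A_r u\rangle_{\partial M}$ before applying Cauchy--Schwarz, Lemma \ref{eq:gamma-bound}, and the bound $\|\Theta\widetilde{A}_r u\|_{\mathcal{L}^2}\leq\|\chi\widetilde{A}_r u\|_{\mathcal{H}^1}$ coming from $k\leq 0$.
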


\begin{proof}
We start by emphasizing that, each $\Theta\in\Psi^k(\partial M)$ can be trivially extended to the whole $M$ by considering it as independent from the coordinate $x$. With a slight abuse of notation, we shall use the symbol $\Theta$ in both cases.

In order to bound 
\begin{equation}\label{eq:formbound}
\mathcal{E}_0(A_r u, A_r u)-\varepsilon \mathcal{Q}(A_r u, A_r u),
\end{equation}
it is convenient to rewrite this expression as
\begin{equation} \tag{\ref{eq:formbound}}
\begin{split}
\mathcal{E}_0(A_r u, A_r u)  - \mathcal{E}_0(u,A_r^*A_r u)  +\\
+ \mathcal{E}_0(u,A_r^*A_r u) - \mathcal{E}_\Theta(u,A_r^* A_r u)+\\
+ \mathcal{E}_\Theta(u,A_r^* A_r u) -\varepsilon \mathcal{Q}(A_r u, A_r u)
\end{split}
\end{equation}
Applying Lemma \ref{lemma:DirichletE0Bound}, we can bound the first line of Equation \eqref{eq:formbound} as
$$ \mathcal{E}_0(A_r u,A_r u) - \mathcal{E}_0(u,A_r^* A_r u) \leq C_0 \left(\|G_1 u \|^2_{\mathcal{H}^1(M)}+ \|\chi u \|^2_{\mathcal{H}^{1,m}(M)} \right)$$
The third line can be controlled as follows: Calling $f = P_\Theta u$, we can write $\mathcal{E}_\Theta(u,A^*_r A_r u) = \langle A_r f, A_r u	\rangle$. Using the pairing between $\mathcal{H}^1(M)$ and $\dot{\mathcal{H}}^{-1}(M)$,
$$\langle A_r f,A_r u\rangle \leq \|A_r f \|_{\dot{\mathcal{H}}^{-1}(M)}  \| A_r u \|_{\mathcal{H}^1(M)}.$$
Since for $C \geq 1/2$, $ab \leq C(a^2+b^2)$ for any $a,b \in \mathbb{R}$, then it holds that 
\begin{equation} \label{eq:energy_estimate1}
\|A_r f \|_{\dot{\mathcal{H}}^{-1}(M)}  \| A_r u \|_{\mathcal{H}^1(M)} \leq C \left( \|A_r f \|_{\dot{\mathcal{H}}^{-1}(M)}^2 +  \| A_r u \|_{H^1(M)}^2 \right)
\end{equation}

Using Lemma \ref{prop-boundL2} one obtains
\begin{gather*} 
\| A_r u \|_{\mathcal{H}^1(M)}^2 - \varepsilon \mathcal{Q}(A_r u, A_r u)  \leq \\
\leq C_1 \left( \| G_0 u \|_{\mathcal{H}^1(M)} + \| \chi u \|_{\mathcal{H}^{1,m}(M)} \right)^2 - \varepsilon \mathcal{Q}(A_r u, A_r u) 
\end{gather*}
where $G_0 \in \Psi^{s}_b(M)$. Applying again the inequality $ab \leq C(a^2+b^2)$, the second term in Equation \eqref{eq:energy_estimate1} is bounded by
\begin{equation}\label{eq-est1}
\| A_r u \|_{H^1(M)}^2 \leq C \left(  \| G_0 u \|_{\mathcal{H}^1(M)}^2 + \| \chi u \|_{\mathcal{H}^{1,m}(M)}^2 \right) - \varepsilon \mathcal{Q}(A_r u, A_r u),
\end{equation}
where  $G_0 \in \Psi^s_b(M)$.
We estimate the first term of \eqref{eq:energy_estimate1} using an analogue procedure, this time with the help of Lemma \ref{prop-boundH1}:
\begin{equation}\label{eq-est2}
\|A_r f \|_{\dot{\mathcal{H}}^{-1}(M)}  \leq C_2 \left( \|\chi f \|_{\dot{\mathcal{H}}^{-1}(M)} + \| G_0 u\|_{\dot{\mathcal{H}}^{-1,m}(M)} \right),
\end{equation}
where $G_0 \in \Psi^s_b(M)$.
Combining Equations \eqref{eq-est1} and \eqref{eq-est2}, we obtain the bound
\begin{equation}
\begin{split}
& |\langle A_r f,A_r u \rangle| \leq \varepsilon \mathcal{Q}(A_r u, A_r u) +\\
&  +  C \left(  \| G_0 u \|_{\mathcal{H}^1(M)}^2 + \| \chi u \|_{\mathcal{H}^{1,m}(M)}^2 + \|\chi f \|_{\dot{\mathcal{H}}^{-1}(M)} + \| G_0 u \|_{\dot{\mathcal{H}}^{-1,m}(M)} \right)
\end{split}
\end{equation}

\noindent At last, we control the second line in Equation \eqref{eq:formbound}.
\begin{eqnarray}\label{eq:ThetaBound}
\mathcal{E}_\Theta(u,A_r^* A_r u) - \mathcal{E}_0(u,A_r^* A_r u) = \notag\\
= \langle x^{-2}S_Fu, A_r^* A_r u \rangle + \langle \Theta \gamma_- u, \gamma_-(A_r^* A_r u) \rangle_{\partial M}
\end{eqnarray}
Using that $S_F \in x^2 \mathcal{C}^\infty(M)$, {\it cf.} Definition \ref{Def: admissible twisting function}, it holds
\begin{eqnarray*}
\langle x^{-2}S_Fu,A^*_r A_r u \rangle  =   \int_U \  x^{-2}S_Fu \overline{A^*_r A_r u} \ x^2 d\mu_g  \leq \\
\leq  \max_{x \in \pi_1 \circ \textrm{supp} (A_r)} \Big\lvert x^{-2}S_F \Big\rvert \cdot |\langle u,A^*_r A_r u \rangle |.
\end{eqnarray*}
In order to control $\langle u,A^*_r A_r u \rangle  = \| A_r u \|^2_{\mathcal{L}^2(M)}$, we use the same algebraic trick as above. On account of  Lemma \ref{prop-boundL2}, it holds
\begin{equation*}
|\langle x^{-2}S_Fu,A^*_r A_r u \rangle | \leq C_0 \left( \| \chi u \|^2_{\mathcal{H}^{1,m}(M)} + \| G_1 u \|^2_{\mathcal{H}^1(M)} \right).
\end{equation*}
At last, we focus on the boundary term $\langle \Theta\gamma_-u,\gamma_-(A^*_r A_r u) \rangle|_{\partial M}$. We recall that for every $B \in \Psi^m_b(M)$, it holds 
$$\gamma_-(Bu) = \left.\left(x^{-\nu_-} Bu \right)\right|_{\partial M} = \widehat{N}(B)(-i \nu_-)(\gamma_- u),$$
where $\widehat{N}(B)$ is the indicial family as per Equation \eqref{Eq: indicial family}. 
Extending $\Theta$ as explained at the beginning of the proof, we can write $\Theta f = \widehat{N}(\Theta)(-i\nu_-) f$ for every $f \in Dom(\Theta) \cap L^2(\partial M)$.
We also note that, using Equation \eqref{Eq: product of indicial families}, it holds 
\begin{equation*}
\widehat{N}(A_r^*A_r)(-i\nu_-) = \widehat{N}(\widetilde{A}_r)(-i\nu_-)^* \widehat{N}(A_r)(-i\nu_-)
\end{equation*}
where $\widetilde{A}_r = x^{2 \nu_-} A_r x^{-2\nu_-}$ and where the adjoint is computed with respect to the $L^2$-paring induced by the metric $h$ on $\partial M$, {\it cf.} Definition \ref{Def: asymptotically AdS}. Using these data, we can rewrite the boundary term as
\begin{equation*}
\begin{split}
& \langle \Theta\gamma_-u,\gamma_-(A_r^*A_r u) \rangle_{\partial M}  = 
\langle \widehat{N}^*(\widetilde{A}_r\Theta)(-i\nu_-)\gamma_-u, \widehat{N}(A_r)(-i \nu_-)\gamma_-u \rangle_{\partial M}= \\
&= \langle \gamma_- (\widetilde{A}_r \Theta) u, \gamma_-(A_r u) \rangle_{\partial M} = \langle \gamma_-(\Theta \widetilde{A}_r u + [\widetilde{A}_r,\Theta]u),\gamma_- A_r u \rangle_{\partial M}
\end{split}
\end{equation*}

Using Chauchy-Schwartz inequality and Lemma \ref{eq:gamma-bound} it holds 
\begin{equation*}
| \langle \gamma_-\Theta \widetilde{A}_r u,\gamma_- A_r u \rangle_{\partial M}|  \leq  C_1 \| \Theta \widetilde{A}_r u  \|_{\mathcal{L}^2(M)} + C_2 \| A_r u \|_{\mathcal{L}^2(M)}^2,
\end{equation*}
where $C_1$ and $C_2$ are suitable constants. If $\Theta\in\Psi^k_b(\partial M)$  with $k\leq 0$ it holds 
\begin{equation*}
\| \Theta \widetilde{A}_r u  \|_{\mathcal{L}^2(M)}^2 \leq \| \Theta \widetilde{A}_r u  \|_{\mathcal{H}^1(M)}^2 \leq \|\chi   \widetilde{A}_r u  \|_{\mathcal{H}^1(M)}^2
\end{equation*}
Thus, proceeding as in the previous case, using Lemma \ref{prop-Psi0_Bound} we arrive at the same estimate for $|\langle \Theta\gamma_-u,\gamma_-(A_r^*A_r u) \rangle|_{\partial M} |$.
Combining all the bounds together with Equation \eqref{eq:formbound}, we obtain the sought thesis.
\end{proof}

Now we focus on the case where $\Theta\in\Psi^k_b(\partial M)$ with $0 < k \leq 2$. As in Lemma \ref{Lem: lemma-bound-quad-form-v1} we extend $\Theta$ trivially to the whole $M$ by considering it independent from the coordinate $x$. In addition we observe that each $\Theta$ identifies per duality a map from $\mathcal{H}^1_{loc}(M)$ to $\dot{\mathcal{H}}^{-1}_{loc}(M)$.

\begin{lemma}\label{lemma-bound-quad-form-v2}
Let $U \subset M$ be a coordinate patch such that $U\cap\partial M\neq\emptyset$ and let $m \leq 0$. Let $\Theta \in \Psi_b^k(\partial M)$ with $0 < k \leq 2$ and let $\mathcal{A}=\{ A_r \ | \ r \in (0,1) \}$ be a bounded subset of $\Psi^s_b(M)$, $s\in\mathbb{R}$, with compact support in $U$, such that 
\begin{equation*}
A_r \in \Psi^m_b(M) \ \textit{for each} \ r \in (0,1)
\end{equation*}
Let $G_0 \in \Psi^s_b(M)$ and $G_1 \in \Psi_b^{s-1/2}(M)$ be elliptic on $WF^\prime_b(\mathcal{A})$, both with compact support in $U$. Then there exists $C_0 > 0$ and $\chi \in \mathcal{C}^\infty_0(U)$ such that
\begin{equation*}
\begin{split}
& \mathcal{E}_0(A_r u,A_r u) - \varepsilon \mathcal{Q}(A_r u,A_r u) \leq C_0 \Big( \|\chi u \|^2_{\mathcal{H}^{1,m}(M)} + \|\chi P_\Theta u \|^2_{\dot{\mathcal{H}}^{-1,m}(M)} + \\
& + \| G_0 P_\Theta u\|^2_{\dot{\mathcal{H}}^{-1}(M)}+ \| G_0 \Theta u\|^2_{\dot{\mathcal{H}}^{-1}(M)} +  \|\chi \Theta u \|^2_{\dot{\mathcal{H}}^{-1,m}(M)}  +  \|G_1 u \|^2_{\mathcal{H}^1(M)} \Big)
\end{split}
\end{equation*}
for every $r \in (0,1)$ and every $u \in \mathcal{H}^{1,m}_{loc}(M)$, provided that the following conditions are met:
$$\ WF^{-1,s}_b(P_\Theta u) \cap WF^\prime_b(G_0) = \emptyset$$
$$\ WF^{-1,s}_b(\Theta u) \cap WF^\prime_b(G_0) = \emptyset$$
$$ WF_b^{1,s-1/2}(u) \cap WF_b^\prime(G_1) = \emptyset $$
\end{lemma}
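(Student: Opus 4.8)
The plan is to follow the same three-line decomposition of $\mathcal{E}_0(A_r u, A_r u) - \varepsilon\mathcal{Q}(A_r u, A_r u)$ used in the proof of Lemma \ref{Lem: lemma-bound-quad-form-v1}, namely writing it as
\[
\big[\mathcal{E}_0(A_r u, A_r u) - \mathcal{E}_0(u, A_r^*A_r u)\big] + \big[\mathcal{E}_0(u, A_r^*A_r u) - \mathcal{E}_\Theta(u, A_r^*A_r u)\big] + \big[\mathcal{E}_\Theta(u, A_r^*A_r u) - \varepsilon\mathcal{Q}(A_r u, A_r u)\big].
\]
The first bracket is handled verbatim by Lemma \ref{lemma:DirichletE0Bound}, yielding a bound by $C_0(\|G_1 u\|^2_{\mathcal{H}^1(M)} + \|\chi u\|^2_{\mathcal{H}^{1,m}(M)})$. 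The third bracket is identical to the case $k\le 0$: write $f = P_\Theta u$, so $\mathcal{E}_\Theta(u, A_r^*A_r u) = \langle A_r f, A_r u\rangle$, apply Cauchy-Schwarz in the $\mathcal{H}^1$--$\dot{\mathcal{H}}^{-1}$ pairing, the elementary inequality $ab\le C(a^2+b^2)$, and then Lemmas \ref{prop-boundH1} and \ref{prop-boundL2} to absorb $\|A_r u\|_{\mathcal{H}^1}$ and $\|A_r f\|_{\dot{\mathcal{H}}^{-1}}$ into the terms $\|G_0 P_\Theta u\|^2_{\dot{\mathcal{H}}^{-1}(M)}$, $\|\chi P_\Theta u\|^2_{\dot{\mathcal{H}}^{-1,m}(M)}$, $\|\chi u\|^2_{\mathcal{H}^{1,m}(M)}$ and the $\varepsilon\mathcal{Q}$ slack.

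The genuinely new work lies in the second bracket, which equals $\langle x^{-2}S_F u, A_r^*A_r u\rangle + \langle \Theta\gamma_- u, \gamma_-(A_r^*A_r u)\rangle_{\partial M}$ by \eqref{eq:ThetaBound}. The $S_F$-term is controlled exactly as before using $S_F\in x^2\mathcal{C}^\infty(M)$ and Lemma \ref{prop-boundL2}, giving a bound by $C_0(\|\chi u\|^2_{\mathcal{H}^{1,m}(M)} + \|G_1 u\|^2_{\mathcal{H}^1(M)})$. For the boundary term I would again use the indicial-family identities $\gamma_-(Bu) = \widehat{N}(B)(-i\nu_-)\gamma_- u$ and \eqref{Eq: product of indicial families} to rewrite $\langle\Theta\gamma_- u, \gamma_-(A_r^*A_r u)\rangle_{\partial M} = \langle\gamma_-(\Theta\widetilde{A}_r u + [\widetilde{A}_r,\Theta]u), \gamma_- A_r u\rangle_{\partial M}$ with $\widetilde{A}_r = x^{2\nu_-}A_r x^{-2\nu_-}$, a bounded family in $\Psi^0_b$ with the same wavefront set as $\mathcal{A}$. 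Then Cauchy-Schwarz plus Lemma \ref{eq:gamma-bound} reduces matters to estimating $\|\Theta\widetilde{A}_r u\|_{\mathcal{L}^2(M)}$ (and the commutator piece $\|[\widetilde{A}_r,\Theta]u\|$, which is lower order since $[\widetilde{A}_r,\Theta]\in\Psi^{k-1}_b$ and hence, after the same argument with $k-1\le 1$, behaves strictly better).

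The main obstacle, and the reason the statement needs the two extra terms $\|G_0\Theta u\|^2_{\dot{\mathcal{H}}^{-1}(M)}$ and $\|\chi\Theta u\|^2_{\dot{\mathcal{H}}^{-1,m}(M)}$ on the right-hand side, is precisely the control of $\|\Theta\widetilde{A}_r u\|_{\mathcal{L}^2(M)}$ when $k>0$: here $\Theta$ is \emph{not} bounded on $\mathcal{L}^2$, so the clean chain $\|\Theta\widetilde{A}_r u\|_{\mathcal{L}^2}\le\|\Theta\widetilde{A}_r u\|_{\mathcal{H}^1}\le\|\chi\widetilde{A}_r u\|_{\mathcal{H}^1}$ from the $k\le 0$ case fails. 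Instead I would commute $\Theta$ past $\widetilde{A}_r$, writing $\Theta\widetilde{A}_r u = \widetilde{A}_r\Theta u - [\widetilde{A}_r,\Theta]u$, and then treat $\widetilde{A}_r\Theta u$ as the action of the bounded family $\{\widetilde{A}_r\}$ on the distribution $\Theta u \in \dot{\mathcal{H}}^{-1,m}_{loc}(M)$ (using that $\Theta:\mathcal{H}^1_{loc}\to\dot{\mathcal{H}}^{-1}_{loc}$ per duality, and that $0<k\le 2$ so the loss is at most two derivatives, compensated by the twisted-Sobolev scale and the hypothesis $WF^{-1,s}_b(\Theta u)\cap WF'_b(G_0)=\emptyset$). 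Applying Lemma \ref{prop-boundL2} to this family with the elliptic operator $G_0\in\Psi^s_b(M)$ then produces exactly $\|G_0\Theta u\|^2_{\dot{\mathcal{H}}^{-1}(M)} + \|\chi\Theta u\|^2_{\dot{\mathcal{H}}^{-1,m}(M)}$, plus the already-present error terms. Assembling the three brackets and relabelling constants gives the claimed inequality; the bookkeeping of orders (checking that $\widetilde{A}_r$, $[\widetilde{A}_r,\Theta]$ and the $G_0$, $G_1$ hypotheses line up with the twisted wavefront-set conditions) is the only delicate point, and it is the part I would write out carefully.
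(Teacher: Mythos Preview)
Your approach is essentially the same as the paper's: the identical three-bracket decomposition, with the only new work concentrated in the boundary term, which is handled by rewriting it via indicial families as $\langle\gamma_-(\widetilde{A}_r\Theta u),\gamma_- A_r u\rangle_{\partial M}$ and then controlling $\widetilde{A}_r\Theta u$ in $\dot{\mathcal{H}}^{-1}$ using the wavefront-set hypothesis on $\Theta u$. Two cosmetic remarks: the indicial-family step already delivers $\widetilde{A}_r\Theta u$ directly (your split $\Theta\widetilde{A}_r+[\widetilde{A}_r,\Theta]$ \emph{is} $\widetilde{A}_r\Theta$), so the subsequent ``commute $\Theta$ past $\widetilde{A}_r$'' is redundant; and at the final step you want Lemma~\ref{prop-boundH1} with $k=-1$ (which gives $\dot{\mathcal{H}}^{-1}$-bounds with $G_0\in\Psi^s_b$) rather than Lemma~\ref{prop-boundL2}.
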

\begin{proof}
The proof is analogous to that of Lemma \eqref{Lem: lemma-bound-quad-form-v1}, hence we do not enter into the details. We point out that the only key difference is the estimate of the boundary term $\langle \Theta\gamma_-u,\gamma_-(A^*_r A_r u) \rangle|_{\partial M}$.
This time, thanks to the inclusions $\mathcal{H}^1_{loc}(M) \subset \mathcal{L}^2_{loc}(M) \subset \dot{\mathcal{H}}^{-1}_{loc}(M)$ we can control the boundary term as
\begin{equation}
\begin{split}
\langle \gamma_- (\widetilde{A}_r \Theta) u, \gamma_-(A_r u) \rangle |_{\partial M} \leq C_0 \left(  \|\widetilde{A}_r  \Theta  u  \|_{\mathcal{L}^2(M)} + \| A_r u \|_{\mathcal{L}^2(M)}^2 \right) \\
\leq C_1 \left(  \| \widetilde{A}_r  \Theta  u  \|_{\dot{\mathcal{H}}^{-1}(M)}  + \| A_r u \|_{\mathcal{L}^2(M)}^2 \right)\\
 \leq  C \left( \|\chi u \|^2_{\mathcal{H}^{1,m}(M)}+  \|G_1 u \|^2_{\mathcal{H}^1(M)}  + \| G_0 \Theta u\|^2_{\dot{\mathcal{H}}^{-1}(M)} +  \|\chi \Theta u \|^2_{\dot{\mathcal{H}}^{-1,m}(M)}  \right)
\end{split}
\end{equation}
\end{proof}

\begin{rem}
Using that, for $\epsilon>0$, $\Psi_b^m(M) \subset \Psi^{m+\varepsilon}_b(M)$, the previous results holds true also for $G_1 \in \Psi_b^{s-1/2}(M)$, similarly to what happens in \cite{GaWr18}.
\end{rem}

\begin{rem}\label{Rem: obstruction for large k}
	Observe that, if we would have allowed $k$ to be larger than $2$, we would haven not been able to prove in general a result similar to Lemma \ref{lemma-bound-quad-form-v2}. For this reason we have discarded such scenario. We stress that, at the level of applications, this is a mild constraint since, to the best of our knowledge, interesting examples of boundary conditions, such as the Robin ones discussed in \cite{GaWr18} or those of Wentzell type, see \cite{Zahn:2015due} are all included in the regime $k\leq 2$.
\end{rem}

\section{Propagation of singularities}\label{Sec: propagation of singularities}

In this section we present the main result of this work, namely we derive a theorem of propagation of singularities for the Klein-Gordon operator with boundary conditions ruled by $\Theta\in\Psi^k_b(\partial M)$, $k\leq 2$, as discussed in Subsection \ref{Sec: Boundary conditions and the associated Dirichlet form}.

\subsection{The compressed characteristic set}\label{Sec: compressed characteristic set}

We start from a detailed analysis of the characteristic set of the principal symbol of the Klein-Gordon operator for the case in hand. We recall that the principal symbol of $x^{-2}P$, see Equation \eqref{Eq: KG on M} is $\widehat{p}\doteq\widehat{g}(X,X)$, where $X\in\Gamma(T^*M)$. The associated {\em characteristic set} is 
\begin{equation}\label{Eq: characteristic set}
\mathcal{N}=\left\{(q,k_q)\in T^*M\setminus\{0\}\;|\; \widehat{g}^{ij}(k_q)_i (k_q)_j=0\right\},
\end{equation}
while the {\em compressed characteristic set} is 
\begin{equation}\label{Eq: compressed characteristic set}
\dot{\mathcal{N}}=\pi[\mathcal{N}]\subset{}^b\dot{T}(M),
\end{equation}
where $\pi$ is the projection map from $T^*M$ to the compressed cotangent bundle, {\it cf.} Equation \eqref{Eq: compressed b-cotangent bundle}. We equip $\dot{\mathcal{N}}$ with the subspace topology inherited from ${}^b T^* M$. In addition, it is convenient to individuate in the compressed b-cotangent bundle the following three conic subsets:

\vskip.2cm

\begin{itemize}
\item The {\em elliptic} region 
\begin{equation}\label{Eq: elliptic region}
\mathcal{E}(M) = \{ q \in{}^b\dot{T}^*M \setminus\{0\} \ : \ \pi^{-1}(q) \cap \mathcal{N} = \emptyset \},
\end{equation}
where $\pi:T^*M\to{}^b\dot{T}^*M$.

\vskip.2cm

\item The {\em glancing} region  
\begin{equation}\label{Eq: glancing region}
\mathcal{G}(M) = \{ q \in{}^b\dot{T}^*M \setminus\{0\} \ : \ Card( \pi^{-1}(q) \cap \mathcal{N}) = 1 \},
\end{equation}
where $Card$ refers to the cardinality of a set.

\vskip .2cm

\item The {\em hyperbolic} region 
\begin{equation}\label{Eq: hyperbolic region}
\mathcal{H}(M) = \{ q \in{}^b\dot{T}^*M \setminus\{0\} \ : \ Card( \pi^{-1}(q) \cap \mathcal{N}) = 2 \}.
\end{equation}
\end{itemize}

\vskip .2cm

\begin{rem}\label{Rem: nature of different points}
Consider now $\widetilde{q}\in{}^bT^*\partial M$ such that $\widetilde{q}=(0,y_i,0,\eta_i)$, $i=1,\dots,n-1$, where we used the same coordinates introduced in Section \ref{Sec: Geometric preliminaries}. It descends that $\pi^{-1}(\widetilde{q})=(0,y_i,\xi,\eta_i)$ where $(\xi,\eta_i)\in T^*_{(0,y_i)}M$ which entails that $\pi^{1-}(\widetilde{q})\simeq\mathbb{R}$. 
Considering Equation \eqref{Eq: characteristic set} together with Equation \eqref{Eq: metric near the boundary}, we can infer that $\pi^{-1}(\widetilde{q})\cap\mathcal{N}$ corresponds to solving the algebraic equation $\xi^2+h^{ij}\eta_i \eta_j = 0$. This entails that a point $\widetilde{q} \in {}^bT^*M$ lies in $\mathcal{H}$ when $h^{ij}\eta_i \eta_j<0$, in $\mathcal{G}$ when $h^{ij}\eta_i \eta_j=0$ and in $\mathcal{E}$ when $h^{ij}\eta_i \eta_j>0$.
\end{rem}

\begin{defn}\label{Def: generalized broken bicharacteristics}
Let $I \subset \mathbb{R}$ be an interval. A continuous map $\gamma : I \rightarrow \dot{\mathcal{N}}$ is a {\em generalized broken bicharacteristic} (GBB) if for every $s_0 \in I$ the following conditions hold:
\begin{itemize}
\item[a)] If $q_0 = \gamma(s_0) \in \mathcal{G}$, then for every $\omega\in \Gamma^\infty(^bT^*M)$,
\begin{equation}
\frac{d}{ds}(\omega \circ \gamma) = \{ \widehat{p},\pi^* \omega \}(\eta_0)
\end{equation}
where $\eta_0 \in \mathcal{N}$ is the unique point for which $\pi(\eta_0)=q_0$, while $\pi:T^*M\to{}^bT^*M$ and $\{,\}$ are the Poisson brackets on $T^*M$.
\item[b)] If $q_0 = \gamma(s_0) \in \mathcal{H}$, then there exists $\varepsilon > 0$ such that $0 < |s-s_0| < \varepsilon $ implies $x(\gamma(s))\neq 0$, where $x$ is the global boundary function, {\it cf.} Definition \ref{Def: asymptotically AdS}.
\end{itemize}
\end{defn}

\begin{rem}\label{Rem: significance of the previous definition}
Observe that, since $\gamma\in C^0(I;\dot{\mathcal{N}})$ the component of the co-vector tangent to the boundary is conserved. The first condition tells us heuristically that in the glancing region, GGBs are integral curves of the Hamilton vector field associated with the principal symbol $\widehat{p}$. The second condition implies instead that, at hyperbolic points, GBBs reflect instantaneously. Hence, a GBB coming from $\mathring{M}$ propagates along the boundary only at glancing points.
\end{rem}

In the following, we outline a few distinguished properties of GBBs. The next lemma summarizes results from both \cite{Leb97} and \cite{Vasy08}.

\begin{lemma}\label{lemm:gbb-lebau}
Let $\mathcal{R}_K[a,b]$ be the space of the generalized broken bicharacteristics $\gamma : [a,b] \rightarrow K$ where $K \subset \dot{\mathcal{N}}$ is compact.
Let ${\gamma_n}$ be a sequence in $\mathcal{R}_K[a,b]$ converging uniformly to a curve $\gamma$. Then $\gamma : [a,b] \rightarrow K$ is a generalized broken bicharacteristic.
In addition, if $\mathcal{R}_K[a,b]$ is not empty, then it is compact in the uniform topology.
\end{lemma}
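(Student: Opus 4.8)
The plan is to follow the classical Lebeau--Vasy argument, exploiting the two defining conditions of a GBB separately on the glancing and hyperbolic parts of the limit curve. First I would note that uniform convergence $\gamma_n\to\gamma$ on $[a,b]$ guarantees continuity of $\gamma$ and that $\gamma([a,b])\subset K$, since $K$ is closed; so the only thing to verify is that $\gamma$ satisfies conditions (a) and (b) of Definition \ref{Def: generalized broken bicharacteristics} at every $s_0\in[a,b]$. The key structural fact is the dichotomy noted in Remark \ref{Rem: nature of different points}: membership of a point in $\mathcal{G}$, $\mathcal{H}$ or $\mathcal{E}$ is governed by the sign of $h^{ij}\eta_i\eta_j$, which is a continuous function on $\dot{\mathcal{N}}$. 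In particular $\mathcal{E}(M)$ is open in $\dot{\mathcal N}$ and disjoint from $\dot{\mathcal{N}}$, so the limit curve automatically avoids the elliptic region; moreover $\mathcal{H}(M)$ is open in $\dot{\mathcal{N}}$ while $\mathcal{G}(M)$ is the (relatively closed) transition locus.

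Next I would treat the glancing case. Suppose $q_0=\gamma(s_0)\in\mathcal{G}$. Here one must show that $\gamma$ is differentiable at $s_0$ with the prescribed Hamiltonian derivative. The standard device is to use the fact — established by Lebeau, see \cite{Leb97}, and reproduced in \cite{Vasy08} — that each $\gamma_n$, being a GBB, satisfies a uniform one-sided differential inequality: for test functions $\omega\in\Gamma^\infty({}^bT^*M)$ one has estimates of the form
\begin{equation*}
\Big|\,\omega(\gamma_n(s)) - \omega(\gamma_n(s_0)) - (s-s_0)\{\widehat p,\pi^*\omega\}(\eta_0^{(n)})\,\Big| \le C\,|s-s_0|\,\omega_0(|s-s_0|),
\end{equation*}
with $C$ and the modulus $\omega_0$ independent of $n$, where $\eta_0^{(n)}$ is the (unique) preimage of $\gamma_n(s_0)$ under $\pi$ when that point is glancing, and with suitable modification near hyperbolic reflections. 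Passing to the limit $n\to\infty$, using that $\pi$ restricted to $\mathcal{G}$ has a continuous single-valued inverse and that $\{\widehat p,\pi^*\omega\}$ is continuous, yields the same inequality for $\gamma$, which forces condition (a) at $s_0$. One must be slightly careful that $\gamma_n(s_0)$ may itself lie in $\mathcal{H}$ even though the limit lies in $\mathcal{G}$; but since $\mathcal{H}$ is open, for such $n$ one has $x(\gamma_n(s))\neq0$ near $s_0$ and the bicharacteristic flow estimate still applies on each side, so the inequality survives the limit.

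For the hyperbolic case, suppose $q_0=\gamma(s_0)\in\mathcal{H}$. Since $\mathcal{H}(M)$ is open in $\dot{\mathcal{N}}$, there is a neighbourhood $W$ of $q_0$ in $\dot{\mathcal{N}}$ contained in $\mathcal{H}(M)$, and by uniform convergence $\gamma_n(s)\in W$ for all large $n$ and all $s$ in a fixed interval $(s_0-\varepsilon,s_0+\varepsilon)$. Over $\mathcal{H}$ the two branches of $\pi^{-1}$ correspond to $\xi=\pm\sqrt{-h^{ij}\eta_i\eta_j}\neq 0$; along each branch the integral curve of $\widehat p$ moves off the boundary at nonvanishing speed in $x$, so there is a uniform lower bound (depending only on $W$) on how quickly $x$ increases away from a reflection time. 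Because each $\gamma_n$ is a GBB, in $(s_0-\varepsilon,s_0+\varepsilon)$ it can have at most finitely many boundary hits and between them is a genuine bicharacteristic; combining this with the uniform speed bound and uniform convergence shows $x(\gamma(s))>0$ for $0<|s-s_0|<\varepsilon$ after possibly shrinking $\varepsilon$. This establishes condition (b). Having verified (a) and (b) everywhere, $\gamma\in\mathcal{R}_K[a,b]$.

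Finally, for the compactness assertion: if $\mathcal{R}_K[a,b]\neq\emptyset$, I would equip it with the uniform (sup) metric and apply Arzel\`a--Ascoli. Uniform boundedness is immediate since all curves take values in the compact set $K$. Equicontinuity follows from the uniform estimates above: on glancing and interior portions the curves are uniformly Lipschitz because $\widehat p$ and its Hamilton field are bounded on $K$, and across hyperbolic reflections the reflected bicharacteristic has the same bounded speed, so one obtains a common modulus of continuity for all $\gamma\in\mathcal{R}_K[a,b]$ (this is precisely the content of the Lebeau estimate \cite{Leb97}). Hence every sequence has a uniformly convergent subsequence, and by the first part of the lemma its limit again lies in $\mathcal{R}_K[a,b]$; therefore $\mathcal{R}_K[a,b]$ is sequentially compact, hence compact, in the uniform topology.

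I expect the main obstacle to be the glancing step: extracting a \emph{uniform} (in $n$) quantitative bicharacteristic estimate valid simultaneously across the glancing/hyperbolic transition, so that it passes cleanly to the limit. This is exactly where one must invoke the detailed machinery of \cite{Leb97} (and its $b$-geometric adaptation in \cite{Vasy08}) rather than reprove it; the rest is topology and Arzel\`a--Ascoli.
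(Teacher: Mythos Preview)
Your proposal is correct and follows the standard Lebeau--Vasy argument. Note, however, that the paper does not actually supply its own proof of this lemma: it is stated as a summary of results from \cite{Leb97} and \cite{Vasy08}, to which the reader is referred. Your sketch is precisely a reconstruction of that cited argument---the uniform one-sided differential estimate at glancing points, the openness of $\mathcal{H}$ combined with nonvanishing normal speed at hyperbolic points, and Arzel\`a--Ascoli for compactness---so there is no substantive difference in approach to report.
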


To conclude this subsection, we focus our attention on the boundary $\partial M$. Let us consider once more a chart $U\subset M$ such that $U\cap\partial M\neq \emptyset$. Following the conventions explained in Section \ref{Sec: Geometric preliminaries} we consider on $T^*_UM$ coordinates $(x,y_i,\xi,\eta_i)$, $i=1,\dots,n-1$, where we identify the time coordinate with $y_{n-1}$, while $\eta_{n-1}$ is the associated dual coordinate. With these premises the following lemma holds true. Observe that the proof is identical to that of \cite[Lemma 6.2]{GaWr18} with the due exception that we have to take into account the specific form of the metric on $\partial M$, {\it cf.} Remark \ref{Rem: boundary metric}. Yet since the function $\beta$ is bounded and strictly positive on $U$, it plays no specific role.

\begin{lemma}\label{lemma:elliptic-cases}
If $q_0 \in  {}^bT_U^*M \setminus\{0\}$, there exists a conic neighborhood $V$ of $q_0$ in which one of the following facts is true:
\begin{itemize}
\item[1) ] If $q_0 \in{}^b\dot{T}^*M$, there exists $\varepsilon > 0$ such that $\sigma^2 < \varepsilon^2(\beta\eta^2_{n-1}+\kappa^{ij}\eta_i\eta_j)$ and $\kappa^{ij}\eta_i \eta_j > \beta \eta^2_{n-1}$.
\item[2) ] If $q_0 \not\in{}^b\dot{T}^*M$, there exists $C>0$ such that $|\eta_{n-1}|<C|\sigma|$
\end{itemize}
\end{lemma}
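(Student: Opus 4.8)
The statement is a local dichotomy for points of $^bT^*_UM\setminus\{0\}$ distinguishing whether or not they lie in the compressed $b$-cotangent bundle $^b\dot T^*M=\pi[T^*M]$, and it should be read entirely in terms of the algebra of the principal symbol near the boundary. The plan is to work in the coordinates $(x,y_i,\xi,\eta_i)$ on $T^*_UM$ and the corresponding $(x,y_i,\zeta,\eta_i)$ on $^bT^*_UM$, where $\zeta=x\xi$ under the projection $\pi$, and on the boundary $x=0$ to use the explicit form of the metric from Remark \ref{Rem: boundary metric}, namely $ds^2|_{\partial M}=-\beta\,d\tau^2+\kappa_\tau|_{\partial M}$ with $\tau=y_{n-1}$, so that the relevant dual quadratic form is $\beta\eta_{n-1}^2$ against $\kappa^{ij}\eta_i\eta_j$ (indices $i,j$ ranging over the remaining spatial directions). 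Since $\beta$ is smooth, bounded and strictly positive on the relatively compact $U$, it will enter only through uniform two-sided bounds $0<\beta_0\le\beta\le\beta_1$ and will not affect the qualitative argument.

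First I would recall that, by definition of $\pi$, a point $q_0=(0,y_i,\zeta_0,\eta_{i,0})\in{}^bT^*_\partial M$ lies in $^b\dot T^*M$ precisely when $\zeta_0$ is attained as $x\xi$ in the limit $x\to0$, which at the boundary forces $\zeta_0=0$; equivalently $q_0\in{}^b\dot T^*M\cap\{x=0\}$ iff its $\zeta$-component vanishes. Thus the two alternatives of the lemma correspond to $\zeta_0=0$ and $\zeta_0\neq0$ respectively, and I would phrase the conic neighbourhood $V$ accordingly: in case (1), by continuity of $\zeta$ and $\eta$ along $^bT^*M$ and homogeneity, the ratio $\zeta^2/(\beta\eta_{n-1}^2+\kappa^{ij}\eta_i\eta_j)$ is small on a conic neighbourhood of $q_0$ whenever it vanishes at $q_0$, which gives the first inequality $\sigma^2<\varepsilon^2(\beta\eta_{n-1}^2+\kappa^{ij}\eta_i\eta_j)$ (here $\sigma$ is the notation the lemma uses for the $\zeta$-type variable). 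For the second inequality $\kappa^{ij}\eta_i\eta_j>\beta\eta_{n-1}^2$, I would observe that a genuinely glancing or hyperbolic or elliptic point is not forced by the lemma — rather, one shrinks $V$ around $q_0$, and the three sub-possibilities of Remark \ref{Rem: nature of different points} arise; but the lemma as stated asserts this particular inequality, so the point $q_0$ under consideration in case (1) must be one where $\kappa^{ij}\eta_i\eta_j>\beta\eta_{n-1}^2$ strictly, i.e.\ an \emph{elliptic} boundary point in the sense of Remark \ref{Rem: nature of different points}, and then the strict inequality persists on a conic neighbourhood by continuity and homogeneity. (I read the lemma as the statement that \emph{near any point one of these mutually exclusive pictures holds}, so the clean argument is: either $q_0$ is an elliptic-region point of $^b\dot T^*M$, in which case case (1) applies, or $q_0\notin{}^b\dot T^*M$, in which case case (2) applies, and points of $^b\dot T^*M$ in $\mathcal H\cup\mathcal G$ are handled by the hyperbolic/glancing estimates elsewhere; I would match exactly the formulation of \cite[Lemma 6.2]{GaWr18}.)

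Second, in case (2), $q_0\notin{}^b\dot T^*M$ means its $\zeta$-component $\sigma_0\ne0$; since $\dot{\mathcal N}$ and more generally $^b\dot T^*M$ is closed in $^bT^*M$, there is a conic neighbourhood $V$ of $q_0$ disjoint from $^b\dot T^*M$, and on such a set the function $|\eta_{n-1}|/|\sigma|$ is bounded: indeed it is continuous and positively homogeneous of degree $0$ on $V\setminus\{0\}$, hence descends to the compact slice of the $b$-cosphere bundle cut out by $\overline V$, where it attains a finite maximum $C$. This yields $|\eta_{n-1}|<C|\sigma|$ after possibly shrinking to make the inequality strict. The only subtlety, which is where I expect the real work, is making the homogeneity/compactness step rigorous in the mixed $b$-geometry: one must check that the closure can be taken inside a single coordinate chart and that the cone over $q_0$ does not escape $U$, which is exactly the kind of bookkeeping handled in \cite{GaWr18} and which I would import \emph{mutatis mutandis}, inserting the bounds on $\beta$ where \cite{GaWr18} used a constant. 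Apart from that, the proof is a direct continuity-and-homogeneity argument on the cosphere bundle with no new analytic input, so I would state it briefly and refer to \cite[Lemma 6.2]{GaWr18} for the parts that are verbatim.
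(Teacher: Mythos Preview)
Your proposal is correct and follows the same approach as the paper, which simply observes that the proof is identical to \cite[Lemma 6.2]{GaWr18} once one notes that the lapse $\beta$ is bounded and strictly positive on $U$ and hence plays no role beyond uniform constants. Your more detailed sketch---the continuity/homogeneity argument on the cosphere bundle for each case, together with the correct identification that case (1) implicitly concerns elliptic boundary points---is precisely what lies behind that reference.
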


\begin{rem}
For simplicity, in the following we shall work with pseudodifferential operators whose compact support is contained in a fixed local chart. However, our results are also valid in the general case in which the support is not contained in one coordinate patch, using a partition of unity argument.
\end{rem}

\subsection{Estimates in the elliptic region}\label{Sec: elliptic region}

In this part of the section we start the analysis aimed at deriving suitable microlocal estimates which will be necessary to prove a propagation of singularity theorem. The analysis will be divided in three parts, one for each of the regions individuated above. In each case we discuss separately the scenarios in which $\Theta\in\Psi^k_b(\partial M)$ with $0<k\leq 2$ or with $k\leq 0$. In addition, we  assume implicitly that we are always considering the trivial extension of $\Theta$ to $M$, {\it i.e.} constant in the coordinate $x$. 

As the title of the subsection suggests, we start from $\mathcal{E}(M)$ as in Equation \eqref{Eq: elliptic region}. As above, we consider a coordinate neighbourhood $U\subset M$ and we indicate with $T^*_UM\doteq T^*M|_U$ and ${}^bT^*_UM\doteq{}^bT^*M|_U$. In addition, using the same coordinates discussed in Section \ref{Sec: Geometric preliminaries}, in full analogy to Equation \eqref{Eq: Q0} and \eqref{Eq: Q1}, we introduce the operators 
\begin{equation}\label{Eq: Q-operators}
Q_0=F\nabla_x F^{-1},\quad Q_i=F\nabla_i F^{-1},\; i=1,\dots,n-1
\end{equation}
where $F$ is an admissible twisting function, {\it cf.} Definition \ref{Def: twisting function} and Definition \ref{Def: admissible twisting function}. Recalling Definition \ref{Def: wavefrontset for Hkm-loc} it holds, 

\begin{prop}[microlocal elliptic regularity]\label{Prop: elliptic regularity}
Let $u \in \mathcal{H}^{1,m}_{loc}(M)$ for $m \leq 0$ and let $q_0 \in {}^bT^*_UM$. If $s \in \mathbb{R} \cup \{+\infty\}$ and if $\Theta\in\Psi^k_b(\partial M)$ with $0<k\leq 2$, then $q_0\in WF_b^{1,s}(u) \setminus \left(WF_b^{-1,s}(P_\Theta u) \cup WF_b^{-1,s}(\Theta u)\right)$ entails $q_0\in\dot{\mathcal{N}}$.
\end{prop}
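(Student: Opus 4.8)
The plan is to establish microlocal elliptic regularity for the operator $P_\Theta$ in the elliptic region by a standard positive-commutator / parametrix argument, adapted to the twisted Sobolev setting and to the fact that the boundary term now carries a pseudodifferential operator $\Theta$ of positive order. Suppose, for a contradiction, that $q_0 \notin \dot{\mathcal{N}}$, i.e. $q_0 \in \mathcal{E}(M)$, while $q_0 \notin WF_b^{-1,s}(P_\Theta u)$ and $q_0 \notin WF_b^{-1,s}(\Theta u)$. First I would invoke Lemma \ref{lemma:elliptic-cases}: since $q_0 \in \mathcal{E}(M) \subset {}^b\dot{T}^*M$, case 1) applies, so in a conic neighborhood $V$ of $q_0$ we have $\sigma^2 < \varepsilon^2(\beta\eta_{n-1}^2 + \kappa^{ij}\eta_i\eta_j)$ together with $\kappa^{ij}\eta_i\eta_j > \beta\eta_{n-1}^2$; in particular the full principal symbol $\widehat{p} = \xi^2 + h^{ij}\eta_i\eta_j$ is bounded below by a positive multiple of $|\eta|^2$ on $\pi^{-1}(V)$, so $x^{-2}P$ — equivalently the twisted operator $-(d_F)^\dagger d_F$ modulo lower order terms — is elliptic there in the $b$-sense.

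The core step is an inductive improvement of regularity. I would choose a bounded family $\mathcal{A} = \{A_r\}_{r\in(0,1)} \subset \Psi_b^s(M)$ with compact support in $U$, uniformly bounded in $\Psi_b^{s'}(M)$ for the current regularity level $s' < s$, with $WF_b'(\mathcal{A})$ contained in a small conic neighborhood of $q_0$ where the ellipticity above holds, and with $A_r \to A \in \Psi_b^{s}$ as $r \to 0$ in the appropriate sense; this is the standard regularization device used in Lemma \ref{lemma:DirichletE0Bound} and Lemma \ref{Lem: lemma-bound-quad-form-v1}. Applying Lemma \ref{lemma-bound-quad-form-v2} with $\mathcal{Q}$ taken to be (a multiple of) the $\mathcal{H}^1$-inner product — which is legitimate because on the $b$-elliptic cone the twisted Dirichlet form $\mathcal{E}_0$ is, up to lower-order error and the positive-definite term, coercive, so that $\mathcal{E}_0(A_r u, A_r u) - \varepsilon \mathcal{Q}(A_r u, A_r u)$ dominates $c\|A_r u\|_{\mathcal{H}^1}^2$ — yields a uniform bound
\begin{equation*}
\|A_r u\|_{\mathcal{H}^1(M)}^2 \leq C\Big(\|\chi u\|_{\mathcal{H}^{1,m}}^2 + \|\chi P_\Theta u\|_{\dot{\mathcal{H}}^{-1,m}}^2 + \|G_0 P_\Theta u\|_{\dot{\mathcal{H}}^{-1}}^2 + \|G_0 \Theta u\|_{\dot{\mathcal{H}}^{-1}}^2 + \|\chi \Theta u\|_{\dot{\mathcal{H}}^{-1,m}}^2 + \|G_1 u\|_{\mathcal{H}^1}^2\Big),
\end{equation*}
where $G_0, G_1$ are microlocalized off $q_0$ to regions where $u$, $P_\Theta u$, $\Theta u$ are already known to be regular. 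Letting $r \to 0$ and using a weak-compactness/Fatou argument shows $Au \in \mathcal{H}^1_{loc}(M)$, hence $q_0 \notin WF_b^{1,s'+1/2}(u)$; iterating raises the regularity up to (and past) $s$, contradicting $q_0 \in WF_b^{1,s}(u)$. The case $s = +\infty$ follows by running the induction through all finite orders.

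The main obstacle I anticipate is handling the boundary term. Unlike the Robin case in \cite{GaWr18}, here the term $\langle \Theta\gamma_- u, \gamma_-(A_r^* A_r u)\rangle_{\partial M}$ involves $\Theta \in \Psi^k_b(\partial M)$ with $0 < k \leq 2$, which is \emph{not} bounded on $L^2(\partial M)$; one must control it by moving $\Theta$ onto $u$ (producing the term $\|\chi \Theta u\|_{\dot{\mathcal{H}}^{-1,m}}$ and $\|G_0 \Theta u\|_{\dot{\mathcal{H}}^{-1}}$, which is precisely why the hypothesis $q_0 \notin WF_b^{-1,s}(\Theta u)$ enters) and using the commutator estimates together with the indicial-family identity $\gamma_-(Bu) = \widehat{N}(B)(-i\nu_-)\gamma_- u$ and the trace bound of Lemma \ref{eq:gamma-bound}. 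This is exactly the point where Lemma \ref{lemma-bound-quad-form-v2} does the heavy lifting, so the real content of the proof is the careful bookkeeping of which wavefront-set hypotheses are needed to absorb each error term, and verifying that the ellipticity of $\mathcal{E}_0$ on $\pi^{-1}(V)$ genuinely gives the coercive lower bound $c\|A_r u\|_{\mathcal{H}^1}^2$ needed to close the estimate.
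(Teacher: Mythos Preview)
Your overall strategy matches the paper's: inductive regularity gain via the microlocalized Dirichlet form, with Lemma~\ref{lemma-bound-quad-form-v2} supplying the upper bound and the hypothesis on $WF_b^{-1,s}(\Theta u)$ absorbing the positive-order boundary term. There is, however, a genuine gap in your case analysis. You write ``$q_0 \notin \dot{\mathcal{N}}$, i.e.\ $q_0 \in \mathcal{E}(M)$'' and conclude that case 1) of Lemma~\ref{lemma:elliptic-cases} applies. But $\mathcal{E}(M)$, $\mathcal{G}(M)$, $\mathcal{H}(M)$ are all defined as subsets of the \emph{compressed} bundle ${}^b\dot{T}^*M$, whereas $q_0$ is allowed to range over ${}^bT^*_UM$: a point with $x(q_0)=0$ and $\zeta(q_0)\neq 0$ lies outside ${}^b\dot{T}^*M$, hence automatically outside $\dot{\mathcal{N}}$, yet it is \emph{not} in $\mathcal{E}(M)$ and need not satisfy $\kappa^{ij}\eta_i\eta_j > \beta\eta_{n-1}^2$. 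The paper treats this separately as case 2) of Lemma~\ref{lemma:elliptic-cases}: on $\{x<\delta\}$ one has $\|Q_0 A_r u\|^2 \geq \delta^{-2}\|xQ_0 A_r u\|^2$, and the symbol $\delta^{-2}\zeta^2 - (1+C\delta)\beta\eta_{n-1}^2$ is elliptic near such $q_0$, so a square-root argument using $xQ_0$ replaces the tangential one. Without this second case your contradiction argument is incomplete.

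A second point, which you flag but do not resolve: the assertion that ``$\mathcal{E}_0(A_ru,A_ru) - \varepsilon\mathcal{Q}(A_ru,A_ru)$ dominates $c\|A_ru\|_{\mathcal{H}^1}^2$'' is not automatic, since $\mathcal{E}_0$ is an indefinite Lorentzian form. The paper makes this precise in case 1) by showing that the tangential symbol $(1-C\delta)\kappa^{ij}\eta_i\eta_j - (1+C\delta)\beta\eta_{n-1}^2$ is positive and elliptic on $WF_b'(\mathcal{A})$, then taking an approximate square root $R\in\Psi_b^1(M)$, so that $\mathcal{E}_0(A_ru,A_ru) \geq \|Q_0 A_ru\|^2 + \|RA_ru\|^2 + \langle TA_ru,A_ru\rangle$ with $T\in\Psi_b^1(M)$; the order-one remainder is controlled via Lemma~\ref{prop-boundL2}, and Lemma~\ref{lemma-bound-quad-form-v2} bounds $\mathcal{E}_0$ from \emph{above}. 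Uniform bounds on $\|Q_0A_ru\|$ and $\|RA_ru\|$ follow, and weak compactness gives $Au\in\mathcal{H}^1$. Your sketch invokes this coercivity as a black box; the square-root construction is precisely where the elliptic hypothesis actually enters.
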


\begin{proof}
We follow the strategy of \cite[Th. 3]{GaWr18} with the due difference that we need to control the contribution due to $\Theta$. Hence we proceed by induction with respect to $s$, proving that $q \not \in WF_b^{1,s+1/2}(u)$ and $q \not \in WF_b^{-1,s}(P_\Theta u)\cup WF_b^{-1,s}(\Theta u)$ entails $q \not \in WF_b^{1,s}(u)$. 

The statement holds true for $s \leq m+1/2$ since $u \in \mathcal{H}^{k,m}(M)$. To proceed in the inductive procedure, observe that, since we want to study properties of the wavefront set at a point $q_0\in {}^bT^*_U M$ it is convenient to evaluate the energy form, {\it cf.} Equation \eqref{Eq: twisted Dirichlet form} with the arguments replaced by $Au$, with $A \in \Psi^s_b(M)$ elliptic at $q_0$ and with compact support in $U \cap \{ x < \delta \}$ where $\delta>0$. To control such energy form we consider a family $\{ J_r\in\Psi^{m-s-1}_b(M)\, |\, r \in (0,1) \}$, bounded in $\Psi^0_b(M)$ converging to the identity in $\Psi^1_b(M)$ as $r \rightarrow 0$. We approximate $A$ using the family $\mathcal{A}=\{A_r = J_r A\}$. As shown in \cite{Vasy10}, it holds 
\begin{equation}\label{eq:elliptic-energy1}
\begin{split}
\mathcal{E}_0(A_r u, A_r u) \geq \| Q_0 A_r u \|^2 +\\
+ (1-C\delta) \langle \kappa^{ij} Q_i A_r u , Q_j A_r u \rangle, - (1+C\delta) \|\beta^{\frac{1}{2}} Q_{n-1} A_r u\|^2,
\end{split}
\end{equation}
where $\kappa_{ij}$ and $\beta$ are the components of the metric as in Theorem \ref{Th: globally hyperbolic}, while $C$ is a positive constant. In addition we have adopted the convention that $y_{n-1}$ corresponds to the time coordinate $\tau$ on the boundary, see Remark \ref{Rem: boundary metric} while $\eta_{n-1}$ is the associated momenta on the $b$-cotangent bundle.  
It is convenient to distinguish two cases, corresponding to those of Lemma \ref{lemma:elliptic-cases} First, let us assume that $q \in {}^b\dot{T}^*M$.  We can rewrite the last two terms of Equation \eqref{eq:elliptic-energy1} as 
\begin{gather}
\langle \left[ (1-C\delta)\kappa^{ij}Q^*_i Q_j - (1+C\delta)\beta Q^*_{n-1} Q_{n-1} \right]A_r u , A_r u \rangle + \notag \\
+ \langle\left( (1-C\delta) \left( Q^*_j \kappa^{ij} \right) Q_i - (1+C\delta) \left(Q^*_{n-1}\beta\right) Q_{n-1}\right) A_r u , A_r u \rangle
\end{gather}
Now we focus on the operator $(1-C\delta)\kappa^{ij}Q^*_i Q_j - (1+C\delta)\beta Q^*_{n-1} Q_{n-1}$, whose symbol $(1-C\delta)\kappa^{ij}\eta_i \eta_j - (1+C\delta)\beta \eta^2_{n-1}$ is of order 2. Since, whenever $q \in {}^b\dot{T}^*M$, it holds $\kappa^{ij}\eta_i\eta_j > (1+\varepsilon)\beta\eta_{n-1}^2$, {\it cf.} Lemma \ref{lemma:elliptic-cases}, 
\begin{equation*}
\begin{split}
(1-C\delta)\kappa^{ij}\eta_i\eta_j - (1+C\delta)\beta \eta_{n-1}^2 = \\ (1-C\delta)(\kappa^{ij}\eta_i\eta_j-\beta\eta_{n-1}^2)-2C\delta\beta \eta_{n-1}^2 
>  \Big( \varepsilon (1-C\delta)-2C\delta \Big)\beta\eta_{n-1}^2 
\end{split}
\end{equation*}
Then, for $C$ and $\delta$ small enough, it holds:
\begin{equation*}
\begin{split}
(1-C\delta)\kappa^{ij}\eta_i\eta_j - (1+C\delta)\beta \eta_{n-1}^2 > \frac{\varepsilon}{2}\beta \eta_{n-1}^2
\end{split}
\end{equation*}
This inequality yields that $(1-C\delta)\kappa^{ij}\eta_i\eta_j - (1+C\delta)\beta \eta^2_{n-1}$ is a positive and elliptic symbol at $q$. Therefore, we can take an approximate square root $R \in \Psi^1_b(M)$ of the operator $(1-C\delta)\kappa^{ij}Q^*_i Q_j - (1+C\delta)\beta Q^*_{n-1} Q_{n-1}$, namely a pseudodifferential operator with principal symbol given by $\sigma_{b,1}(R) = (1-C\delta)\kappa^{ij}\eta_i\eta_j - (1+C\delta)\beta \eta^2_{n-1}$ and such that 
$$R^2 = (1-C\delta)\kappa^{ij}Q^*_i Q_j - (1+C\delta)\beta Q^*_{n-1} Q_{n-1} + S, $$
with $S \in \Psi^{-\infty}_b(M)$. To summarize, we can recast
\begin{equation*}
\begin{split}
(1-C\delta) \langle \kappa^{ij} Q_i A_r u , Q_j A_r u \rangle - (1+C\delta) \|\beta^{\frac{1}{2}} Q_{n-1} A_r u\|^2 
\end{split}
\end{equation*}
as
\begin{equation}
\begin{split}
\langle R A_r u , R A_r u \rangle + \langle T A_r u , A_r u \rangle
\end{split}
\end{equation}
with $T = S +  (1-C\delta) \left(Q^*_i \kappa^{ij}\right) Q_j - (1+C\delta) \left(Q^*_{n-1}\beta\right) Q_{n-1} \in \Psi_b^1(M)$. 
Since $T \in \Psi_b^1(M)$ it descends that $|\langle T A_r u, A_r u \rangle|$ is uniformly bounded for $r \in (0,1)$. Let $\Lambda_+ 	\in \Psi^{1/2}_b(M)$ be an elliptic pseudodifferential operator and let $\Lambda_- \in \Psi^{-1/2}_b(M)$ be a parametrix. Then  $\mathbb{I} = \Lambda_- \Lambda_+ + E$, with $E \in \Psi^{-\infty}_b(M)$ and we can write:
\begin{equation}
\langle T A_r u, \mathbb{I} A_r u \rangle = \langle  \Lambda_-^* T A_r u, \Lambda_+ A_r u \rangle + \langle T A_r u , E A_r u \rangle 
\end{equation}
By Cauchy-Schwartz and triangular inequalities, it descends
\begin{equation}
|\langle T A_r u, \mathbb{I} A_r u \rangle|^2 \leq  \| \Lambda_-^* T A_r u \| \cdot \| \Lambda_+ A_r u  \| + \| T A_r u \| \cdot \| E A_r u \| 
\end{equation}
Thanks to Lemma \ref{prop-boundL2} and to the hypotheses on $u$ and on the family $\{A_r\}$, all norms on the right hand side are uniformly bounded for $r \in (0,1)$. In particular it holds
\begin{equation}
|\langle T A_r u, \mathbb{I} A_r u \rangle|^2 \leq C \left( \| G_1 u \|^2_{\mathcal{H}^1(M)} + \| G_2 u \|^2_{\mathcal{H}^1(M)} + \|\chi u \|^2_{\mathcal{H}^{1,m}(M)} \right)
\end{equation}
where $G_1 \in \Psi^{s-1/2}_b(M)$ is such that $WF^\prime(\Lambda_-^* T A_r) \cup WF^\prime(\Lambda_+ A_r) \subset ell_b(G_1)$ and $G_2 \in \Psi^{s-1}_b(X)$ is such that $WF^\prime(T A_r) \subset ell_b(G_2)$.
Therefore from Equation \eqref{eq:elliptic-energy1} one obtains
\begin{equation}
0 \leq (1-C\delta)\| Q_0 A_r u \|^2_{\mathcal{L}^2(M)} + \| R A \|^2_{\mathcal{L}^2(M)} \leq \mathcal{E}_0(A_r u , A_r u) - \langle T A_r u, A_r u \rangle
\end{equation}
Note that the Dirichlet form $\mathcal{E}_0(A_r u , A_r u)$ is uniformly bounded for $r \rightarrow 0$ thanks to Lemma \ref{lemma-bound-quad-form-v2}. Thus, we can draw the same conclusion for $(1-C\delta)\| Q_0 A_r u \|^2_{\mathcal{L}^2(M)} + \| R A \|^2_{\mathcal{L}^2(M)} $. Hence one can find subsequences $A_{r_k}u$, $Q_0 A_{r_k}u$ and $ R A_{r_k}u$, weakly convergent in $\mathcal{L}^2(M)$ and such that $r_k \rightarrow 0$ as $k \rightarrow \infty$. Since they converge to $Au$, $Q_0 Au$ and $R Au$ in $\mathcal{D}^\prime(M)$, in particular the weak limits lie in $\mathcal{L}^2(K)$ with $K$ a compact subset of $M$ such that $K \cap (U \cap \{x < \delta \}) \neq \emptyset$. This entails that $Au \in \mathcal{H}^1(K)$, and hence that $q \not \in WF^{1,s}_b(u)$.\\

As for the second case of Lemma \ref{lemma:elliptic-cases}, first we note that for $u$ supported in $\{x<\delta\}$, the following relation holds true
$$\| Q_0 u \|^2_{\mathcal{L}^2(M)} \geq \delta^{-2} \| x Q_0 u\|^2_{\mathcal{L}^2(M)}$$
Hence it holds 
\begin{gather*}
\mathcal{E}_0(A_r u,A_r u) \geq  \delta^{-2}  \langle x Q_0 A_r u, x Q_0 A_r u \rangle + \\
\langle [(1-C\delta)\kappa_{ij}Q^iQ^j- (1+C\delta)\beta Q^2_{n-1}]A_r u, A_r u \rangle + \langle T A_r u, A_r u \rangle,
\end{gather*}
where $T$ accounts for lower order terms. We can rewrite the right hand side as
\begin{equation}
\begin{split}
\langle [ \delta^{-2} (x Q_0)^* (x Q_0) - (1+C\delta\beta)Q^2_{y_{n-1}} ] A_r u,A_r u \rangle + \\
\langle (1-C\delta)(\kappa)_{ij}Q^iQ^j A_r u, A_r u \rangle + \langle T A_r u, A_r u \rangle
\end{split}
\end{equation}
The operator $ \delta^{-2} (x Q_0)^* (x Q_0) - (1+C\delta)\beta Q^2_{n-1} $ has symbol $\zeta^2/(2\delta^2)-(1+C\delta)\beta\eta^2_{n-1}$, that is elliptic near $V$ since, on account of Lemma \ref{lemma:elliptic-cases}, there must exist a constant $c$ such that 
$$ \frac{\zeta^2}{2\delta^2}-(1+C\delta)\beta\eta^2_{n-1} >c\beta \eta^2_{n-1}$$
Hence, we can define, modulo lower order terms, its square root as a pseudodifferential operator and then we proceed exactly like in the previous case.
\end{proof}

If we consider $\Theta\in\Psi^k_b(\partial M)$ with $k \leq 0$, we can prove a statement similar to the preceding one using Lemma \ref{Lem: lemma-bound-quad-form-v1} instead of Lemma \ref{lemma-bound-quad-form-v2}. For this reason we omit to give a detailed proof and we limit ourselves to reporting the final statement:

\begin{prop}[microlocal elliptic regularity]\label{prop-mer}
Let $u \in \mathcal{H}^{1,m}_{loc}(M)$ for some $m \leq 0$ and let $q_0 \in  {}^bT^*_UM$. If $s \in \mathbb{R} \cup \{+\infty\}$ and if $\Theta\in\Psi^k_b(\partial M)$ with $k\leq 0$, then $WF_b^{1,s}(u) \setminus \dot{\mathcal{N}} \subseteq WF_b^{-1,s}(P_\Theta u)$.
\end{prop}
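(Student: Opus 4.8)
The plan is to mirror the induction used in the proof of Proposition \ref{Prop: elliptic regularity} (that is, the strategy of \cite[Th.~3]{GaWr18}), but with the simpler boundary term that is available when $\Theta\in\Psi^k_b(\partial M)$ with $k\le 0$. As in that proof, I would argue by induction on $s$, showing that if $q_0\in{}^bT^*_UM\setminus\dot{\mathcal{N}}$ and $q_0\notin WF_b^{-1,s}(P_\Theta u)$, then $q_0\notin WF_b^{1,s+1/2}(u)$ implies $q_0\notin WF_b^{1,s}(u)$; the base step $s\le m+1/2$ is immediate since $u\in\mathcal{H}^{1,m}_{loc}(M)$. For the inductive step I would pick $A\in\Psi^s_b(M)$ elliptic at $q_0$ with compact support in $U\cap\{x<\delta\}$, introduce the mollifying family $\{J_r\in\Psi^{m-s-1}_b(M)\mid r\in(0,1)\}$ bounded in $\Psi^0_b(M)$ and converging to $\mathbb{I}$ in $\Psi^1_b(M)$, set $A_r=J_rA$, and evaluate the twisted Dirichlet form $\mathcal{E}_0(A_ru,A_ru)$. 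The coercivity estimate \eqref{eq:elliptic-energy1} and the two sub-cases of Lemma \ref{lemma:elliptic-cases} — one for $q_0\in{}^b\dot{T}^*M$ and one for $q_0\notin{}^b\dot{T}^*M$ — go through verbatim, producing an approximate square root $R\in\Psi^1_b(M)$ and a remainder term $\langle TA_ru,A_ru\rangle$ with $T\in\Psi^1_b(M)$ whose modulus is controlled uniformly in $r$ by Lemma \ref{prop-boundL2}.

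The only place where the hypothesis $k\le 0$ enters is in bounding $\mathcal{E}_0(A_ru,A_ru)$ from above: here I would invoke Lemma \ref{Lem: lemma-bound-quad-form-v1} in place of Lemma \ref{lemma-bound-quad-form-v2}. This is exactly the point of the dichotomy in Section \ref{Sec: Boundary conditions and the associated Dirichlet form}: when $k\le 0$ the boundary contribution $\langle\Theta\gamma_-u,\gamma_-(A_r^*A_ru)\rangle_{\partial M}$ is absorbed using $\|\Theta\widetilde{A}_ru\|_{\mathcal{H}^1(M)}\le\|\chi\widetilde{A}_ru\|_{\mathcal{H}^1(M)}$ and Lemma \ref{prop-Psi0_Bound}, so that no auxiliary term involving $WF_b^{-1,s}(\Theta u)$ appears, and the upper bound is
$$
\mathcal{E}_0(A_ru,A_ru)-\varepsilon\mathcal{Q}(A_ru,A_ru)\le C_0\Big(\|\chi u\|^2_{\mathcal{H}^{1,m}(M)}+\|\chi P_\Theta u\|^2_{\dot{\mathcal{H}}^{-1,m}(M)}+\|G_0P_\Theta u\|^2_{\dot{\mathcal{H}}^{-1}(M)}+\|G_1u\|^2_{\mathcal{H}^1(M)}\Big),
$$
with $G_0\in\Psi^s_b(M)$ elliptic on $WF'_b(\mathcal{A})$ and $G_1\in\Psi^{s-1/2}_b(M)$ likewise. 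Choosing the positive-definite form $\mathcal{Q}$ to be $\|\cdot\|^2_{\mathcal{L}^2(M)}$ (or, as in Proposition \ref{Prop: elliptic regularity}, reading the coercive lower bound $(1-C\delta)\|Q_0A_ru\|^2+\|RA_ru\|^2\le\mathcal{E}_0(A_ru,A_ru)-\langle TA_ru,A_ru\rangle$) gives a uniform $\mathcal{L}^2$-bound on $A_ru$, $Q_0A_ru$ and $RA_ru$ for $r\in(0,1)$, under the inductive hypotheses $WF_b^{-1,s}(P_\Theta u)\cap WF'_b(G_0)=\emptyset$ and $WF_b^{1,s-1/2}(u)\cap WF'_b(G_1)=\emptyset$.

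From here the conclusion is standard: extract a subsequence $r_k\to0$ along which $A_{r_k}u$, $Q_0A_{r_k}u$ and $RA_{r_k}u$ converge weakly in $\mathcal{L}^2$; since they already converge to $Au$, $Q_0Au$ and $RAu$ in $\mathcal{D}'(M)$, the weak limits lie in $\mathcal{L}^2$ locally near $\{x<\delta\}$, so $Au\in\mathcal{H}^1_{loc}(M)$ near $q_0$, whence $q_0\notin WF_b^{1,s}(u)$, closing the induction and giving $WF_b^{1,s}(u)\setminus\dot{\mathcal{N}}\subseteq WF_b^{-1,s}(P_\Theta u)$. I do not expect any genuine obstacle: the $k\le 0$ case is strictly easier than the $0<k\le 2$ case already treated, because the extra wavefront-set terms $WF_b^{-1,s}(\Theta u)$ and the corresponding $\|G_0\Theta u\|$, $\|\chi\Theta u\|$ contributions simply drop out. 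The one point requiring a little care is making sure that the ellipticity orders of $G_0$ and $G_1$ in Lemma \ref{Lem: lemma-bound-quad-form-v1} match the orders needed by Lemma \ref{prop-boundL2} and Lemma \ref{prop-boundH1} at each stage of the induction, exactly as in \cite{GaWr18}.
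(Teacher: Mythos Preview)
Your proposal is correct and follows essentially the same approach as the paper. The paper itself omits the detailed proof of Proposition~\ref{prop-mer}, noting only that one can repeat the argument of Proposition~\ref{Prop: elliptic regularity} verbatim while invoking Lemma~\ref{Lem: lemma-bound-quad-form-v1} in place of Lemma~\ref{lemma-bound-quad-form-v2}; this is precisely the substitution you make, and your observation that the $\Theta$-dependent wavefront-set terms simply drop out is exactly the point.
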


\subsection{Estimates in the hyperbolic region}\label{Sec: hyperbolic region}

We focus our attention on the hyperbolic region $\mathcal{G}(M)$ introduced in Equation \eqref{Eq: hyperbolic region}, deriving suitable microlocal estimates which will be used in the proof of a propagation of singularity theorem. In comparison to the previous case, we adopt a different strategy mainly based on a positive commutator argument. 

\medskip

As a preliminary step we observe that if $u \in \mathcal{H}^{1,m}_{loc}(M)$ and if $A\in\Psi^0_b(M)$ with principal symbol $\sigma_{b,0}(A)=a$ has compact support, then a direct computation yields 

\begin{gather}\label{eq:im-energy}
2i Im \mathcal{E}_0 (u,A^* A u) = \langle \widehat{g}^{ij} Q_j u , [Q_i, A^* A] u  \rangle - \langle [\widehat{g}^{ij}  Q_j , A^* A] u, Q_i u \rangle + \notag \\ + \langle Q_0 u , [Q_0, A^* A] u \rangle - \langle [Q_0,A^* A] u, Q_0 u \rangle + \langle [Q_i \widehat{g}^{ij}Q_j, A^* A] u, u \rangle,
\end{gather}
where the operators $Q_i$, $i=0,\dots,n-1$ are defined as in Equation \eqref{Eq: Q-operators}. For future convenience, it is useful to compute explicitly the commutators in the first two terms in the second line, getting:
\begin{gather}
\langle Q_0 u , [Q_0, A^* A] u \rangle - \langle [Q_0,A^* A] u, Q_0 u \rangle =\notag \\ =\langle Q_0 u, Q_0 A_1 u \rangle - \langle Q_0 A_1 u, u \rangle +  
+ \langle Q_0 u, A_0 u \rangle - \langle A_0 u, Q_0 u \rangle,
\end{gather}
where $A_0\in\Psi^0_b(M)$, $A_1\in\Psi^{-1}_b(M)$ have as principal symbol respectively $a_0=-i \partial_x a^2$ and $a_1 = -i \partial_\zeta a^2$. 

\medskip

\noindent We focus on proving the desired information on the wavefront sets in the hyperbolic regions. We divide the analysis in two parts depending whether $\Theta\in\Psi^k_b(\partial M)$ with $k\leq 0$ or with $0<k\leq 2$. In the following we consider implicitly the trivial extension of $\Theta$ to the whole $M$ employing with a slight abuse of notation the same symbol. Furthermore we shall use the same coordinates introduced in Section \ref{Sec: Geometric preliminaries} with the implicit convention that $y_{n-1}$ coincides with $\tau$, {\it cf.} Theorem \ref{Th: globally hyperbolic} and Remark \ref{Rem: boundary metric} while $\eta_{n-1}$ is the associated momentum on the $b$-cotangent bundle.

\begin{prop}\label{prop:hyp_reg_pos}
Let $\Theta \in \Psi^k_b(M)$ with $0 < k \leq 2$. Let $u \in \mathcal{H}^{1,m}_{loc}(M)$ with $m \leq 0$ and suppose that $q_0 \not \in WF_b^{-1,s+1}(P_\Theta u)\cup WF_b^{-1,s+1}(\Theta u)$. If there exists a conic neighborhood $W \subset T^*M \setminus\{0\}$ of $q_0$ such that $W \cap \{ \zeta < 0 \} \cap WF_b^{1,s}(u) = \emptyset $ then $q_0 \not \in WF_b^{1,s}(u)$ 
\end{prop}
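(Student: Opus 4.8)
The plan is to run a positive-commutator (Hörmander--Melrose) argument at the hyperbolic point $q_0$, adapted from \cite[Sec. 6]{GaWr18} and the treatment of Vasy \cite{Vasy08}, with the extra bookkeeping needed to absorb the contributions coming from $\Theta\in\Psi^k_b(\partial M)$ with $0<k\le 2$. First I would choose a commutant: pick $a\in S^0({}^bT^*M)$ supported in a small conic neighborhood of $q_0$, of the form $a = \chi(\text{dist to }q_0)\,\psi(\zeta)$ where $\psi$ is monotone and cuts down to the region $\{\zeta<0\}$, so that the Hamilton derivative $H_{\widehat p}a$ has a sign on the characteristic set except where it is controlled by the ``good'' region $W\cap\{\zeta<0\}\cap WF_b^{1,s}(u)=\emptyset$. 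As in the elliptic proposition I would regularize with a family $\{J_r\in\Psi^{m-s-1}_b(M)\mid r\in(0,1)\}$, bounded in $\Psi^0_b(M)$ and converging to the identity, set $A_r = J_r A$ with $A\in\Psi^s_b(M)$ elliptic at $q_0$, and compute $2i\,\mathrm{Im}\,\mathcal{E}_0(u,A_r^*A_r u)$ using Equation \eqref{eq:im-energy}. The principal part of the commutator terms reproduces $\langle (H_{\widehat p}a^2)(\cdot)u,u\rangle$ up to lower order, and the sign choice turns this into a coercive term $\|\Lambda_+ A_r u\|^2$ modulo terms microsupported either in the region where $u$ is already known regular (handled by the inductive hypothesis via Lemma \ref{prop-boundH1}) or of lower order (handled by Lemma \ref{prop-boundL2}).

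Next I would handle the inhomogeneous side. Writing $f=P_\Theta u$ and using Green's formula \eqref{eq:GreenFormula} together with the definition \eqref{Eq: P_Theta} of $P_\Theta$, one has $\mathcal{E}_\Theta(u,A_r^*A_r u)=\langle f, A_r^*A_r u\rangle$, so $\mathrm{Im}\,\mathcal{E}_0(u,A_r^*A_r u)$ differs from $\mathrm{Im}\langle A_r f,A_r u\rangle$ by the $S_F$-term, which is $O(x^2)$ and hence harmless, and by the boundary term $\langle\Theta\gamma_- u,\gamma_-(A_r^*A_r u)\rangle_{\partial M}$. The latter is exactly the term that was estimated in the proof of Lemma \ref{lemma-bound-quad-form-v2}: using the indicial-family identity $\gamma_-(Bu)=\widehat N(B)(-i\nu_-)\gamma_- u$, Equation \eqref{Eq: product of indicial families}, and the inclusions $\mathcal{H}^1_{loc}(M)\subset\mathcal{L}^2_{loc}(M)\subset\dot{\mathcal{H}}^{-1}_{loc}(M)$, it is bounded by $\|\widetilde A_r\Theta u\|_{\dot{\mathcal{H}}^{-1}(M)}+\|A_r u\|^2_{\mathcal{L}^2(M)}$ up to constants, and then Lemma \ref{prop-boundH1}/\ref{prop-boundL2} together with the hypotheses $q_0\notin WF_b^{-1,s+1}(P_\Theta u)\cup WF_b^{-1,s+1}(\Theta u)$ make all of these uniformly bounded in $r$. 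The term $\langle A_r f,A_r u\rangle$ is bounded by $C(\|\chi f\|_{\dot{\mathcal{H}}^{-1}}^2 + \|G_0 f\|_{\dot{\mathcal{H}}^{-1,m}}^2 + \varepsilon\|A_r u\|^2_{\mathcal{H}^1})$ via the $ab\le C(a^2+b^2)$ trick, the last term absorbed into the coercive commutator term for $\varepsilon$ small. Collecting everything gives a uniform bound on $\|\Lambda_+ A_r u\|_{\mathcal{L}^2(M)}$ for $r\in(0,1)$; a weak-limit argument along $r_k\to 0$, identifying the limit with $\Lambda_+ A u$ in $\mathcal{D}'(M)$, shows $\Lambda_+ A u\in\mathcal{L}^2_{loc}$, i.e. $Au\in\mathcal{H}^{1}_{loc}$ near $q_0$, which is $q_0\notin WF_b^{1,s+1/2}(u)$; iterating the half-step induction (started from $s\le m$) upgrades this to $q_0\notin WF_b^{1,s}(u)$.

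The main obstacle I anticipate is the interplay between the boundary term and the positivity of the commutant at a \emph{hyperbolic} point, where $\pi^{-1}(q_0)\cap\mathcal{N}$ consists of two points with $\zeta$ of opposite sign. Unlike the elliptic case, the commutator $H_{\widehat p}a$ is not elliptic transverse to $\dot{\mathcal N}$, so one must be careful that the cutoff $\psi(\zeta)$ is chosen so that $H_{\widehat p}a^2$ is genuinely $\le 0$ (or $\ge 0$) on all of $\mathcal N$ inside the support, with the ``wrong-sign'' contribution confined to $\{\zeta<0\}$ where regularity is assumed; this is where the asymmetry in the hypothesis ($W\cap\{\zeta<0\}$ rather than all of $W$) is used. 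A secondary technical point is that, because $0<k\le 2$, the operator $\Theta$ genuinely loses derivatives at the boundary, so one cannot bound $\|\Theta\widetilde A_r u\|$ in $\mathcal{L}^2$ directly as in Lemma \ref{Lem: lemma-bound-quad-form-v1}; one must route the estimate through the weaker $\dot{\mathcal{H}}^{-1}$ norm exactly as in Lemma \ref{lemma-bound-quad-form-v2}, which is why the hypothesis requires control of $WF_b^{-1,s+1}(\Theta u)$ and not merely $WF_b^{-1,s}(\Theta u)$ — the extra order compensates the order-$\le 2$ loss after pairing against the order-$(-1)$ test object. Everything else is a routine, if lengthy, transcription of the Vasy--Gannot--Wrochna commutator computation.
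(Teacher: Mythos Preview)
Your proposal is correct and follows essentially the same approach as the paper: a positive-commutator argument with regularization, the boundary term controlled through $\dot{\mathcal{H}}^{-1}$ via the hypothesis on $WF_b^{-1,s+1}(\Theta u)$ exactly as in Lemma~\ref{lemma-bound-quad-form-v2}, followed by a weak-limit and half-step induction. The paper packages the argument into two auxiliary lemmas (Lemma~\ref{lemma:hyper-1} together with its $0<k\le 2$ variant, and Lemma~\ref{lemma:hyper-2}), uses the explicit commutant $a=\chi_0(2-\phi/\delta)\chi_1(\widehat\zeta/\delta+2)$ with $\phi=-\zeta+\omega/(\lambda^2\delta)$, and singles out the operator $\widetilde B_r$ (arising from $\partial_\zeta a_r^2$, see Lemma~\ref{lemma:commutator-positive}) as the object whose $\mathcal H^1$-norm is shown to be uniformly bounded; your ``$\Lambda_+A_ru$'' plays the same role. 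One small inconsistency to fix: you first introduce the order-zero commutant $a$ and then write ``$A\in\Psi^s_b(M)$ elliptic at $q_0$''; in the hyperbolic setup $A$ is the order-zero quantization of $a$ (not elliptic in the ordinary sense), and the order $s+\tfrac12$ is carried by the regularizer $J_r$.
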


\begin{prop}\label{prop:hyp_reg_negnull}
Let $\Theta \in \Psi^k_b(M)$ for some $k \leq 0$. Let $u \in \mathcal{H}^{1,m}_{loc}(M)$ for some $m \leq 0$ and suppose that $q_0 \not \in WF_b^{-1,s+1}(P_\Theta u)\cup WF_b^{-1,s+1}(\Theta u)$. If there exists a conic neighborhood $W \subset{}^bT^*M \setminus\{0\}$ of $q_0$ such that $W \cap \{ \zeta < 0 \} \cap WF_b^{1,s}(u) = \emptyset $ then $q_0 \not \in WF_b^{1,s}(u)$.
\end{prop}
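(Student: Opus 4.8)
The plan is to prove Proposition~\ref{prop:hyp_reg_negnull} by a positive commutator argument in the spirit of \cite[Prop.~7.1]{GaWr18}, reducing it at the critical step to the microlocal estimate of Lemma~\ref{Lem: lemma-bound-quad-form-v1} (rather than Lemma~\ref{lemma-bound-quad-form-v2}, which is used for the case $0<k\le 2$). By propagating towards $\{\zeta<0\}$ we only need to gain regularity from the side where $u$ is already known to be microlocally regular, so the commutant should be designed to be supported where $\zeta$ is controlled. First I would, after conjugating by an appropriate power of $x$ and multiplying by an elliptic factor to normalize the order, reduce to the case $s$ finite and proceed by a bootstrap on $s$: the base case $s\le m$ is the hypothesis $u\in\mathcal{H}^{1,m}_{loc}(M)$, and the inductive step upgrades from $\mathcal{H}^{1,s-1/2}$ to $\mathcal{H}^{1,s}$. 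For the inductive step I would introduce a symbol of the form $a = \chi_0(\widehat{p}/\rho^2)\,\chi_1(\zeta/\rho)\,\psi(y,\eta/\rho)\,\rho^{s}$ localizing near $q_0$ inside the region where $\zeta>0$ (using the freedom in choosing the support so that $W\cap\{\zeta<0\}$, where $u$ is regular, absorbs the error terms), together with a regularizing family $\{J_r\}_{r\in(0,1)}$ bounded in $\Psi^0_b$ and converging to the identity, setting $A_r=J_rA$ as in the proof of Proposition~\ref{Prop: elliptic regularity}.

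\medskip

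The heart of the argument is the commutator identity \eqref{eq:im-energy}: one pairs $P_\Theta u$ against $A_r^*A_r u$, takes imaginary parts, and writes $2i\,\mathrm{Im}\,\langle P_\Theta u, A_r^*A_r u\rangle = 2i\,\mathrm{Im}\,\mathcal{E}_0(u,A_r^*A_r u) + (\text{terms from }S_F\text{ and }\Theta)$. The main term, $\langle [Q_i\widehat{g}^{ij}Q_j, A_r^*A_r]u,u\rangle$, has principal symbol $H_{\widehat p}(a^2)$; by the Hamilton flow in the hyperbolic region this derivative of $a^2$ along $H_{\widehat p}$ can be arranged to have a definite sign (the ``good'' term, of order $2s+1$), plus a term supported in $\{\zeta<0\}\cap W$ where $u$ is already regular, plus genuinely lower-order commutator terms which are controlled by Lemma~\ref{prop-boundL2} and Lemma~\ref{prop-boundH1}. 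The new feature compared to \cite{GaWr18} is the boundary contribution: Green's formula \eqref{eq:GreenFormula} produces a term $\langle \Theta\gamma_- u, \gamma_-(A_r^*A_r u)\rangle_{\partial M}$, which I would handle exactly as in the proof of Lemma~\ref{Lem: lemma-bound-quad-form-v1}, namely by writing $\gamma_-(A_r^*A_r u) = \widehat N(\widetilde A_r)^*\widehat N(A_r)\gamma_- u$, commuting $\Theta$ past $\widetilde A_r$, and using that for $k\le 0$ one has $\|\Theta\widetilde A_r u\|_{\mathcal{L}^2}\le\|\Theta\widetilde A_r u\|_{\mathcal{H}^1}\le\|\chi\widetilde A_r u\|_{\mathcal{H}^1}$ together with Lemma~\ref{eq:gamma-bound} and Lemma~\ref{prop-Psi0_Bound}. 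Because $\Theta$ has nonpositive order, this boundary term contributes \emph{no} additional wavefront set directions — this is precisely the dichotomy between the two propositions, and why here $q_0\notin WF_b^{-1,s+1}(\Theta u)$ is automatically implied by $q_0\notin WF_b^{1,s}(u)$ and need not be a genuinely new hypothesis, in contrast to the $0<k\le 2$ case.

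\medskip

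Having arranged the signs, the standard conclusion follows: the ``good'' positive term $c\|\Lambda u\|_{\mathcal{L}^2}^2$ with $\Lambda\in\Psi^{s+1/2}_b$ elliptic at $q_0$ is bounded above, uniformly in $r\in(0,1)$, by the sum of $\mathrm{Im}\,\langle P_\Theta u, A_r^*A_r u\rangle$ — estimated using $q_0\notin WF_b^{-1,s+1}(P_\Theta u)$ via Cauchy--Schwarz and Lemma~\ref{prop-boundH1} — plus the error terms supported in $\{\zeta<0\}$ (finite by the hypothesis on $W$) plus the lower-order terms (finite by the inductive hypothesis and Lemma~\ref{prop-boundL2}). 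A uniform-in-$r$ bound on $\|\Lambda A_r u\|_{\mathcal{L}^2}$ then yields, by a weak-compactness argument in $\mathcal{L}^2$ exactly as at the end of the proof of Proposition~\ref{Prop: elliptic regularity}, that $\Lambda A u\in\mathcal{L}^2_{loc}$, whence $q_0\notin WF_b^{1,s}(u)$, closing the induction. The step I expect to be the main obstacle is the careful bookkeeping in the commutator expansion \eqref{eq:im-energy} near the boundary: one must verify that the twisted derivatives $Q_i$ interact with $A_r^*A_r$ so that every term which is \emph{not} either the sign-definite principal term or manifestly lower order is in fact supported in the region $\{\zeta<0\}\cap W$ where regularity is assumed, and that the indicial-family manipulations for the boundary term remain valid when $A_r$ is replaced by the regularized family and one takes $r\to 0$ — in particular that the bounds are uniform in $r$.
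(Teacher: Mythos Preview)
Your overall strategy matches the paper's: positive commutator argument, induction on $s$ in half-integer steps, the commutator identity \eqref{eq:im-energy}, Lemma~\ref{Lem: lemma-bound-quad-form-v1} to absorb the boundary term without needing a separate $WF_b^{-1,s+1}(\Theta u)$ contribution (your remark that this hypothesis is redundant for $k\le 0$ is correct, and the paper's Lemma~\ref{lemma:hyper-1b} reflects this), and weak compactness to close. So the architecture is right.

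However, two ingredients in your commutant construction would not work as written and differ from what the paper actually does. First, your proposed symbol $a=\chi_0(\widehat{p}/\rho^2)\chi_1(\zeta/\rho)\psi(y,\eta/\rho)\rho^s$ is not a $b$-symbol near the boundary: $\widehat{p}=-\xi^2+h^{ij}\eta_i\eta_j$ involves $\xi$, which is not a coordinate on ${}^bT^*M$ (only $\zeta=x\xi$ is). The paper instead builds $a$ entirely out of $b$-variables via $\mu=-\zeta$ and the distance function $\omega$ of Equation~\eqref{Eq: function on the cosphere bundle}, setting $a=\chi_0(2-\phi/\delta)\chi_1(\widehat\zeta/\delta+2)$ with $\phi=\mu+\omega/(\lambda^2\delta)$; see Equation~\eqref{eq:hyp_localization}. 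Second, the main positive term is not extracted from the full Hamilton derivative $H_{\widehat p}(a^2)$ as you describe: since $q_0$ lies over the boundary the normal momentum $\xi$ is not available, and the paper instead isolates the positivity from the $Q_0$-commutator piece $\langle Q_0 u, iQ_0 A_{1,r}u\rangle$, where $A_{1,r}$ has principal symbol $-i\partial_\zeta a_r^2$. Differentiating $\chi_0$ in $\zeta$ gives the square $\widetilde b_r^2$, and this yields the operator $\widetilde B_r$ used in Lemmas~\ref{lemma:hyper-1} and~\ref{lemma:hyper-1b}. Your localization ``inside $\zeta>0$'' is also slightly off: $q_0$ has $\zeta(q_0)=0$, so the support of $a$ must contain a neighbourhood of $\{\zeta=0\}$, with the error term $F_r$ (from differentiating $\chi_1$) landing in $\{-2\delta\le\widehat\zeta\le-\delta\}\subset W\cap\{\zeta<0\}$ where regularity is assumed; see Lemma~\ref{lemma:commutator-positive}. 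Once you replace your commutant by the paper's $b$-symbol construction, the rest of your outline goes through exactly as in the paper.
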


The proof of both Proposition \ref{prop:hyp_reg_pos} and \ref{prop:hyp_reg_negnull} is similar to that of 
Proposition \ref{Prop: elliptic regularity} and \ref{prop-mer} respectively, the main difference consisting in replacing Lemma \ref{Lem: lemma-bound-quad-form-v1} and \ref{lemma-bound-quad-form-v2} with suitable counterparts tied to the hyperbolic region. For this reason, first we discuss these counterparts postponing the proof to the end of the section.

\begin{rem}
Let $\Theta \in \Psi^k_b(M)$ for some $0 < k \leq 2$ and let $Z \subset W$ with $q_0 \in Z$. Since $q_0 \not \in WF_b^{-1,s+1}(P_\Theta u) \cup WF_b^{-1,s+1}(\Theta u)$, if $Z$ is small enough then, by the elliptic regularity theorem, {\it cf.} Prop. \ref{Prop: elliptic regularity},
$$\left( WF_b^{-1,s+1}(P_\Theta u) \cup WF_b^{-1,s+1}(\Theta u) \right) \cap Z = \emptyset. $$ 
Hence we can conclude that $Z \cap WF_b^{1,s}(u) \subset \dot{\mathcal{N}}$. In particular, this fact means that on the set $Z \cap \{ \zeta < 0 \} \cap WF_b^{1,s}(u)$ it holds $x \neq 0$ and a point $q_0 \in WF_b^{1,s}(u)$ can be seen as a limit of points in the wavefront set, each of which does not lie on the boundary. An analogous statement holds true for the case in which $k \leq 0$.
\end{rem}

\medskip

Let $U$ be a coordinate patch such that $U\cap\partial M\neq\emptyset$ and let $q_0 \in \mathcal{H}(M) \cap { }^b T_U^* M$.  Following \cite{Vasy08} the first step consists of introducing the smooth scalar function on ${}^bT^*M$, $\mu=-\zeta=-x\xi$ which enjoys the notable properties that it is homogeneous of degree $0$ and that, in a neighborhood of $q_0$, the sign of $H_{\widehat{p}}\left(\pi^*\mu\right)$ does not change. Here $H_{\widehat{p}}$ is the Hamiltonian vector field associated to the principal symbol $\widehat{p}$ of $x^{-2}P$. 

If we consider the $b$-cosphere bundle ${}^bS^*M$ as per Equation \eqref{Eq: cosphere bundle} together with the associated coordinates on ${}^bS^*_UM\doteq{}^bS^*M|_U$, we  can introduce the function $\widehat{\omega}:{}^bS^*_UM\to\mathbb{R}$
\begin{equation}\label{Eq: function on the cosphere bundle}
\widehat{\omega}(q) = |x(q)|^2+\sum\limits_{i=1}^{n-2}|y_i(q)-y_i(q_0)|^2+|\widehat{\zeta}(q)-\widehat{\zeta}(q_0)|^2+\sum\limits_{i=1}^{n-2}|\widehat{\eta}_i(q)-\widehat{\eta}_i(q_0)\Big|^2,
\end{equation}
which induces in turn a function $\omega:{}^bT^*_UM\setminus\{0\}\to\mathbb{R}$ defined as $\omega=\widehat{\omega}\circ\pi_S$ where $\pi_S:{}^bT^*M\setminus\{0\}\to{}^bS^*M$ is the natural projection map implementing the quotient in Equation \eqref{Eq: cosphere bundle}. Observe that, for the sake of simplicity of the notation, we have refrained from indicating that $\widehat{\omega}$ depends explicitly from the choice of $q_0$. In addition, on a conic neighborhood of $q_0$, consider the homogeneous smooth function $\phi$
\begin{equation}
\phi = \mu + \frac{1}{\lambda^2\delta}\omega
\end{equation}
 where $\lambda$ and $\delta$ are positive parameters. By construction $\phi$ can be read as a $\pi$-invariant function on $T^*M\setminus\{0\}$ and, to localize it near $q_0$, consider $\chi_0,\chi_1,\in\mathcal{C}^\infty(\mathbb{R})$ such that 
$$\chi_1(s)=\left\{\begin{array}{ll}
0 & \textrm{if}\; s\in (-\infty,0)\\
1 & \textrm{if}\; s\in [1,\infty)
\end{array}
\right.,$$
while the derivative $\chi^\prime_1$ is positive on $(0,1)$. At the same time
\begin{equation}
\chi_0(s) =
\begin{cases}
0 & \textit{ if } s \leq 0 \\
\exp(-s^{-1}) & \textit{ if } s > 0
\end{cases}
\end{equation}
Consider now 
\begin{equation}\label{eq:hyp_localization}
a\doteq\chi_0(2-\phi/\delta)\chi_1(\hat{\zeta}/\delta+2),
\end{equation}
which is a smooth homogeneous function of degree zero in a conic neighborhood of $q_0$. On account of the properties of $\chi_0$ and of $\chi_1$ it holds that 

$$\omega \leq \lambda^2 \delta (2\delta - \eta) \leq 4 \delta^2 \lambda^2\quad\textrm{and}\quad |\widehat{\zeta}| \leq 2 \delta.$$

This entails that, for any $\lambda>0$ and for $\delta >0$ small enough, $f$ has support inside a conic neighborhood of $q_0$. At last, we also localize in a conic neighborhood of $q_0$ with compact closure and such that $\widehat{g}^{ab} k_a k_b > 0$ where $k_a=(\zeta,\eta_i)$, $i=1,\dots,n-1$ are coordinates on the fiber of the $b$-cotangent bundle . Let $V_0$ be a set satisfying these properties and consider a function $\psi_0 \in \mathcal{C}_0^\infty({}^b S^* M)$ such that $\psi_0=1$ on $V_0$ and whose support lies in a small neighborhood of $V_0$.

Now we can choose a family of pseudodifferential operators for regularization purposes. Let $\{ J_r | r \in (0,1) \}$ be a family of $\Psi$DOs in $\Psi^{s+k+1/2}_b(M)$ such that $J_r \in \Psi^m_b(M)$ for $r \in (0,1)$ and whose principal symbol is $j_r = \psi_0 \rho^{s+1/2}(1+r\rho)^{m-s-1/2}$. By construction $J_r$ is elliptic in $V_0$. We build a family of regulators
\begin{equation}
A_r = A J_r
\end{equation}
with $A\in \Psi^0_b(M)$ with principal symbol $a$ as in \eqref{eq:hyp_localization}. Note that since $A \in \Psi^0_b(M)$, $A_r$ is bounded in $\Psi^{s+1/2}_b(M)$. We report now a notable result \cite[Lemma 6.7]{GaWr18}

\begin{lemma}\label{lemma:commutator-positive}
Let $G \in \Psi^k_b(M)$. Given $\lambda > 0$ there exists $\delta_0 > 0$ such that for each $\delta \in (0,\delta_0)$
\begin{equation}
i [A^*_r A_r,G] = B_r^* D_r B_r + F_r +T_r
\end{equation}
where 
\begin{itemize}
\item $B_r \in \Psi^{s+1}_b(M)$, $r \in (0,1)$, has principal symbol $b_r = j_r b$ with 

$$b = \rho^{-1/2}\delta^{-1/2} \big[ \chi_0^\prime (2-\phi/\delta) \chi_0 (2-\phi/\delta) \big]^{1/2} \chi_1 (2 + \zeta/\delta),$$

\item $D_r \in \Psi^{k-2}_b(M)$, $r \in (0,1)$ and its principal symbol $d_r$ satisfies 
$$ \rho^{2-k} |(d_r)|\, \leq\, C_0 (\lambda \delta + \delta + \lambda^{-1}),$$
for some positive real constant $C_0$.

\item $T_r \in \Psi^{2s+k-1}(M)$, $r \in (0,1)$, is such that:

 $$WF_b^\prime(\mathcal{T}) \subset \{|\widehat{\zeta}| \leq 2 \delta, \omega^{1/2} \leq 2 \lambda \delta  \} $$
where $\mathcal{T} = \{ T_r | r \in (0,1) \}$.

\item $F_r \in \Psi^{2s+k}(M)$, $r \in (0,1)$, is such that 

$$WF_b^\prime(\mathcal{F}) \subset \{ -2 \delta \leq \widehat{\zeta} \leq -\delta, \omega^{1/2} \leq 2 \lambda \delta  \}, $$
where $\mathcal{F} = \{ F_r\; |\; r \in (0,1) \}$ is bounded in $\Psi^{2s+k}_b(M)$. 
\end{itemize}
\end{lemma}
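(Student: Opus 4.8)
The plan is to follow the positive commutator construction of Vasy and of Gannot--Wrochna \cite{Vasy08,GaWr18}, adapting it to the present $b$-pseudodifferential boundary operator $\Theta$. First I would recall that $i[A_r^* A_r, G]$ can be computed symbolically: since $A_r = A J_r$ with $A$ of order $0$ and $J_r$ bounded of order $s+1/2$, the operator $A_r^* A_r$ is bounded in $\Psi_b^{2s+1}(M)$, so $i[A_r^* A_r, G]$ is bounded in $\Psi_b^{2s+k}(M)$ and its principal symbol is the Poisson bracket $\{|a_r|^2, \sigma_{b,k}(G)\}$. The point of the construction of $a$ in Equation \eqref{eq:hyp_localization} is precisely that the Hamilton flow of $\widehat p$ applied to $\phi=\mu+\frac{1}{\lambda^2\delta}\omega$ has a definite sign near $q_0$: along the hypersurface $\{\mu = \textrm{const}\}$, $H_{\widehat p}\mu$ is bounded away from zero by hyperbolicity (this is where the choice $\mu = -\zeta$ matters, cf. the discussion preceding Equation \eqref{Eq: function on the cosphere bundle}), while $H_{\widehat p}\omega$ is controlled by $\lambda^{-1}$ times the principal part and by $\delta$ near $q_0$. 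Differentiating $\chi_0(2-\phi/\delta)$ produces the factor $\chi_0'\,\chi_0$, which is a perfect square up to the positive weight $\rho^{-1/2}\delta^{-1/2}$; this is the origin of $b$ and of $B_r^* D_r B_r$.

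The main computation I would organize as follows. Write $a_r^2 = a^2 j_r^2$. The Poisson bracket $\{a_r^2, \sigma_{b,k}(G)\}$, up to lower order and up to the cutoff $\chi_1$ (whose derivative $\chi_1'$, supported in $\widehat\zeta/\delta \in (-2,-1)$, generates the term $F_r$), equals
\begin{equation*}
-\frac{2}{\delta}\,\chi_0'(2-\phi/\delta)\,\chi_0(2-\phi/\delta)\,\chi_1^2\,j_r^2\, H_{\widehat p}\phi \,\cdot\, (\textrm{weight}),
\end{equation*}
and the key sign fact is that $H_{\widehat p}\phi = H_{\widehat p}\mu + \frac{1}{\lambda^2\delta}H_{\widehat p}\omega$ is, on the support of $\chi_0'\chi_0\chi_1$, comparable to a fixed positive multiple of $\rho^{k-1}$ for $\delta,\lambda^{-1}$ small. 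Thus the main term is $-B_r^* D_r B_r$ with $b_r = j_r b$ as stated and with $d_r$ of order $k-2$ bounded in absolute value by $C_0(\lambda\delta+\delta+\lambda^{-1})\rho^{k-2}$ — the three summands coming respectively from the $H_{\widehat p}\omega$ contribution ($\lambda^{-1}$ and a $\lambda\delta$ remainder), the subprincipal and metric-dependent corrections ($\delta$), and the error in forming the square root. The remaining commutator errors: the part where the Hamilton flow moves $j_r$ (which is harmless because $j_r^2$ is bounded in $\Psi_b^{2s+1}$ and its bracket is of lower order, giving $T_r$ of order $2s+k-1$ microsupported where $a\neq 0$, i.e. in $\{|\widehat\zeta|\le 2\delta,\ \omega^{1/2}\le 2\lambda\delta\}$), and the part from $\chi_1'$ (giving $F_r$ of order $2s+k$, bounded in $\Psi_b^{2s+k}(M)$, microsupported where $\chi_1'\neq 0$, i.e. in $\{-2\delta\le\widehat\zeta\le-\delta,\ \omega^{1/2}\le 2\lambda\delta\}$).

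To make the decomposition honest at the operator level rather than just symbolically, I would quantize: take $B_r$ to be a $b$-$\Psi$DO with principal symbol $b_r$, set $D_r$ to be an operator whose symbol realizes $d_r$ with the required bound (using the calculus to pass from the symbol estimate to $\rho^{2-k}|d_r|\le C_0(\lambda\delta+\delta+\lambda^{-1})$), and then define $F_r$ and $T_r$ as whatever is left over, checking by the composition formula that $i[A_r^* A_r,G] - B_r^* D_r B_r - F_r$ lands in the stated lower order with the stated wavefront localization. Boundedness of the families $\mathcal{B}=\{B_r\}$, $\mathcal{D}=\{D_r\}$, $\mathcal{F}=\{F_r\}$, $\mathcal{T}=\{T_r\}$ in the respective symbol classes follows from the explicit form of $j_r$ and the uniformity (in $r$) of all the seminorm estimates, exactly as in \cite[Lemma 6.7]{GaWr18}. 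The hard part is keeping track of the precise orders and, above all, extracting the sharp bound $C_0(\lambda\delta+\delta+\lambda^{-1})$ on $d_r$: this is what makes $B_r^* D_r B_r$ effectively negligible once $\lambda$ is fixed large and $\delta$ then taken small, and it is the engine of the subsequent propagation estimate. Everything here is formally identical to the Robin case of \cite{GaWr18}; the only structural novelty is bookkeeping the extra order $k$ carried by $G=\Theta$ (hence $D_r\in\Psi_b^{k-2}$ rather than $\Psi_b^{-2}$, $F_r,T_r$ shifted by $k$), and since $k\le 2$ these shifts stay within the ranges needed by Lemmas \ref{prop-boundH1} and \ref{prop-boundL2}.
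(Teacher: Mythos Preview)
The paper does not give a proof; the lemma is quoted as \cite[Lemma 6.7]{GaWr18}. Your structural outline --- differentiate $\chi_0$ for the main term, $\chi_1$ for $F_r$, collect lower order into $T_r$ --- matches that source.

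However, you have conflated this lemma with a neighbouring computation. The lemma is stated for an \emph{arbitrary} $G\in\Psi_b^k(M)$, so the relevant bracket is $\{a_r^2,\sigma_{b,k}(G)\}$; the Hamilton vector field $H_{\widehat p}$ of the Klein--Gordon symbol does not enter. More importantly, the lemma asserts only a \emph{smallness} bound $\rho^{2-k}|d_r|\le C_0(\lambda\delta+\delta+\lambda^{-1})$, with no sign on $d_r$. Your sentence ``$H_{\widehat p}\phi$ is \dots comparable to a fixed positive multiple of $\rho^{k-1}$'' invokes the wrong vector field and the wrong conclusion: a fixed positive lower bound would contradict the smallness you state in the very next clause. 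The bound actually comes from the localization alone --- on $\mathrm{supp}\,a$ one has $|\widehat\zeta|\le 2\delta$, $\omega^{1/2}\le 2\lambda\delta$, $x=O(\lambda\delta)$, and the weight $\tfrac{1}{\lambda^2\delta}$ on $\omega$ in $\phi$ converts one power of $\lambda\delta$ into $\lambda^{-1}$. The positivity you have in mind belongs to the separate decomposition $iA_{1,r}=\widetilde B_r^*\widetilde B_r+F_r+T_r$ discussed just before Lemma~\ref{lemma:hyper-1}, where $A_{1,r}$ arises from $[Q_0,A_r^*A_r]$ and the definite sign of $\partial_\zeta a_r^2$ is what drives the estimate. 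Keep the two decompositions distinct: the present lemma is a pure smallness statement for the commutator with a generic $G$, used later to absorb boundary and error terms, not to produce the main positive commutator.
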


As mentioned at the beginning of this section, we seek $Q\in\Psi^s_b(M)$ such that the norm of $Q A_r u$ is bounded in $\mathcal{L}^2(M)$. This can be individuated as follows. Starting from Proposition $\ref{eq:im-energy}$ we observe that $Im \mathcal{E}_0(u,A^*_r A_r u)$ contains a term of the form 

$$\langle Q_0 u, i Q_0 A_{1,r}u  \rangle.$$

Focusing on $a_{1,r}=-i \partial_\sigma a_r^2$, the principal symbol of $A_{1,r}$, a straightforward computation shows that 
$$i A_{1,r} = \widetilde{B}_r^* \widetilde{B}_r + F_r + T_r,\Longrightarrow a_{1,r} = \widetilde{b}_r^2 + f_r + t_r,$$
where $\widetilde{b}_r = \rho^{-1} b_r = j_r b$ is a symbol of order $m-1/2$ which arises when we differentiate $\chi_0$, with $\{B_r\}$ being a bounded family in $\Psi^s_b(M)$. The principal symbols $\{f_r\}$ are associated instead to the bounded family $\{F_r\}$ in $\Psi^s_b(M)$ which originates from the derivatives of $\chi_1$ while $t_r$ are principal symbols associated to the bounded family $\{T_r\}$ in $\Psi^{2s-1}_b(M)$, that includes the contribution by lower order terms. 

We choose the sought operator $Q$ as $\widetilde{B_r}$. In order to prove that $QA_ru$ is bounded in $\mathcal{L}^2(M)$ we analyze separately the usual two cases. We start from a boundary condition implemented by $\Theta\in\Psi^k_b(\partial M)$ with $k\leq 0$. In this case we can use \cite[Lemma 6.8]{GaWr18} with the due exception that one needs to replace in the proof Lemma 5.3 from \cite{GaWr18} with Lemma \ref{Lem: lemma-bound-quad-form-v1}.

\begin{lemma}\label{lemma:hyper-1}
There exist $C_1, c, \lambda, \delta_0 > 0$, a cutoff $\chi \in \mathcal{C}_0^\infty(M)$ and a compactly supported operator $G_2 \in \Psi^s_b(M)$ with 
$$ WF^\prime_b(G_2) \subset W \cap \{\ \zeta < 0 \} = \emptyset, $$
such that
\begin{equation}
\begin{split}
c \| \widetilde{B}_r u \|^2 \leq -2 Im \mathcal{E}_0(u, A^*_r A_r u) + C \Big( \| G_0 u \|_{\mathcal{H}^1(M)}^2 + \| G_1 u \|_{\mathcal{H}^1(M)}^2 + \\
+ \| G_2 u \|_{\mathcal{H}^1(M)}^2 + \| G_0 P_\Theta u \|^2_{\dot{\mathcal{H}}^{-1}(M)} + \|\chi u \|^2_{\mathcal{H}^{1,m}(M)} + \| \chi P_\Theta u \|^2_{\dot{\mathcal{H}}^{-1,m}(M)}  \Big).
\end{split}
\end{equation}
\end{lemma}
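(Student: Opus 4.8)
The plan is to follow the blueprint of \cite[Lemma 6.8]{GaWr18}, adapting it to the presence of the boundary operator $\Theta$ with $k\le 0$. First I would expand $2i\,\textrm{Im}\,\mathcal{E}_0(u,A_r^*A_ru)$ using Equation \eqref{eq:im-energy}, and isolate the distinguished term $\langle Q_0u, iQ_0A_{1,r}u\rangle$ coming from the differentiation of $\chi_0$ in the symbol $a_r$. Inserting the decomposition $iA_{1,r}=\widetilde{B}_r^*\widetilde{B}_r+F_r+T_r$, and similarly applying Lemma \ref{lemma:commutator-positive} to the remaining commutators $[Q_i\widehat{g}^{ij}Q_j,A_r^*A_r]$, one obtains, modulo error terms controlled by the families $\{F_r\},\{T_r\}$, a lower bound
\begin{equation*}
c\,\|\widetilde{B}_ru\|^2 \leq -2\,\textrm{Im}\,\mathcal{E}_0(u,A_r^*A_ru) + (\textrm{error terms}),
\end{equation*}
valid once $\lambda$ is fixed large and $\delta<\delta_0$ small so that the indefinite contribution of $D_r$ in Lemma \ref{lemma:commutator-positive} is dominated by the positive principal part; this is the standard positive-commutator mechanism.

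The error terms split into three types. Those supported where $\{-2\delta\le\widehat\zeta\le-\delta\}$, i.e.\ in $\{\zeta<0\}$, are absorbed into $\|G_2u\|_{\mathcal{H}^1(M)}^2$ after choosing $G_2\in\Psi^s_b(M)$ elliptic on $WF_b'(\mathcal{F})$ with $WF_b'(G_2)\subset W\cap\{\zeta<0\}$ — this is exactly the hypothesis of Proposition \ref{prop:hyp_reg_negnull}, so the wavefront of $u$ there is empty and the term is finite. Those supported near the glancing set $\{|\widehat\zeta|\le2\delta,\ \omega^{1/2}\le2\lambda\delta\}$, coming from $\{T_r\}$, are of order $2s+k-1\le 2s-1$, hence lower order; together with the terms arising from $F_r,T_r$ in the expansion of $iA_{1,r}$ they are handled by Lemma \ref{prop-boundL2} applied to the bounded families, producing the contributions $\|G_0u\|_{\mathcal{H}^1(M)}^2$, $\|G_1u\|_{\mathcal{H}^1(M)}^2$ and $\|\chi u\|_{\mathcal{H}^{1,m}(M)}^2$. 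Finally, the lower-order pieces of the remaining commutators in \eqref{eq:im-energy} are estimated the same way.

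The genuinely new ingredient, and the main obstacle, is that $-2\,\textrm{Im}\,\mathcal{E}_0(u,A_r^*A_ru)$ must itself be shown bounded, and here $\mathcal{E}_0$ differs from $\mathcal{E}_\Theta$ by the boundary pairing $\langle\Theta\gamma_-u,\gamma_-(A_r^*A_ru)\rangle_{\partial M}$ together with the $S_F$ term. This is precisely where Lemma \ref{Lem: lemma-bound-quad-form-v1} enters in place of \cite[Lemma 5.3]{GaWr18}: writing $\mathcal{E}_0(u,A_r^*A_ru)=\mathcal{E}_\Theta(u,A_r^*A_ru)-\langle x^{-2}S_Fu,A_r^*A_ru\rangle-\langle\Theta\gamma_-u,\gamma_-(A_r^*A_ru)\rangle_{\partial M}$, the term $\mathcal{E}_\Theta(u,A_r^*A_ru)=\langle A_rP_\Theta u,A_ru\rangle$ is bounded by $\|G_0P_\Theta u\|_{\dot{\mathcal{H}}^{-1}(M)}^2$, $\|\chi P_\Theta u\|_{\dot{\mathcal{H}}^{-1,m}(M)}^2$ and an $\mathcal{H}^1$-norm of $A_ru$, while the boundary term, since $k\le 0$ makes $\Theta$ bounded on $\mathcal{L}^2$, is controlled via Lemma \ref{eq:gamma-bound} and Proposition \ref{prop-Psi0_Bound} exactly as in the proof of Lemma \ref{Lem: lemma-bound-quad-form-v1}. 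Once $\|\widetilde{B}_ru\|^2$ is uniformly bounded in $r$, one extracts a weakly convergent subsequence whose limit is $\widetilde{B}u\in\mathcal{L}^2_{loc}$, which together with the elliptic estimate of $\widetilde{B}$ at $q_0$ yields $q_0\notin WF_b^{1,s}(u)$; but that last extraction is deferred to the proof of Proposition \ref{prop:hyp_reg_negnull}, so here it suffices to record the stated inequality.
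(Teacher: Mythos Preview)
Your first two paragraphs are essentially correct and coincide with the paper's approach: the paper simply says that one reproduces \cite[Lemma~6.8]{GaWr18} verbatim, replacing each appeal to \cite[Lemma~5.3]{GaWr18} by Lemma~\ref{Lem: lemma-bound-quad-form-v1}.

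The third paragraph, however, misplaces that substitution. You claim the ``new obstacle'' is that $-2\,\mathrm{Im}\,\mathcal{E}_0(u,A_r^{*}A_r u)$ must itself be bounded --- but in the inequality being proved this quantity sits on the \emph{right-hand side}, so no bound on it is required here; that is precisely the content of the \emph{next} lemma (Lemma~\ref{lemma:hyper-2}/\ref{lemma:hyper-1b}). The terms $\|G_0 P_\Theta u\|_{\dot{\mathcal{H}}^{-1}}^2$ and $\|\chi P_\Theta u\|_{\dot{\mathcal{H}}^{-1,m}}^2$ in the statement do \emph{not} arise from writing $\mathcal{E}_0=\mathcal{E}_\Theta-\cdots$ as you suggest. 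They enter inside the positive-commutator manipulation itself: once the commutator produces the positive piece $\|Q_0\widetilde{B}_r u\|_{\mathcal{L}^2}^2$, one must upgrade this to the full $\mathcal{H}^1$-norm $\|\widetilde{B}_r u\|_{\mathcal{H}^1}^2$, and it is exactly here that Lemma~\ref{Lem: lemma-bound-quad-form-v1} (applied with the bounded family $\{\widetilde{B}_r\}$ and $\mathcal{Q}=\|\cdot\|_{\mathcal{H}^1}^2$) is invoked, generating the $P_\Theta$-error terms. In short, delete the third paragraph and instead insert the call to Lemma~\ref{Lem: lemma-bound-quad-form-v1} at the point where you pass from $\|Q_0\widetilde{B}_r u\|^2$ to $c\|\widetilde{B}_r u\|_{\mathcal{H}^1}^2$.
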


In the case where $\Theta\in\Psi^k_b(\partial M)$, with $0<k\leq 2$, we can exploit Lemma \ref{lemma-bound-quad-form-v2} in place of Lemma \ref{Lem: lemma-bound-quad-form-v1} to obtain the estimate

\begin{equation}
\begin{split}
c \| \widetilde{B}_r u \|^2 \leq -2 Im \mathcal{E}_0(u, A^*_r A_r u) + C \Big( \| G_0 u \|_{\mathcal{H}^1(M)}^2 + \| G_1 u \|_{\mathcal{H}^1(M)}^2 + \\
+ \| G_2 u \|_{\mathcal{H}^1(M)}^2 + \| G_0 P_\Theta u \|^2_{\mathcal{H}^{-1}(M)} + \|\chi u \|^2_{\mathcal{H}^{1,m}(M)} + \| \chi P_\Theta u \|^2_{\mathcal{H}^{-1,m}(M)} +\\
+  \| \chi \Theta u \|^2_{\mathcal{H}^{-1,m}(M)} + \| G_0 \Theta u \|^2_{\mathcal{H}^{-1}(M)}   \Big).
\end{split}
\end{equation}

At last we give a bound for $Im \mathcal{E}_0(u, A^*_r A_r u)$. As above we divide the analysis in two cases, starting from a boundary condition implemented by $\Theta\in\Psi^k_b(\partial M)$, with $0<k\leq 2$. 

\begin{lemma}\label{lemma:hyper-2}
Given $\varepsilon > 0$, there exists $\lambda > 0$ and $\delta_0 > 0$ such that
\begin{gather*}
Im \mathcal{E}_0(u,A^*_r A_r u) \leq\\
\varepsilon \| \widetilde{B}_r u \|^2_{\mathcal{H}^1(M)} + C \Big( \|G_2 u\|^2_{\mathcal{H}^1(M)} + \|G_0  P_\Theta u\|^2_{\dot{\mathcal{H}}^{-1}(M)} + \|G_0 \Theta u\|^2_{\dot{\mathcal{H}}^{-1}(M)} + \\
+  \| G_1 u \|^2_{\mathcal{H}^1(M)} + \|\chi u \|^2_{\mathcal{H}^{1,m}(M)} + \|\chi P_\Theta u \|^2_{\dot{\mathcal{H}}^{-1,m}(M)} + \|\chi \Theta u \|^2_{\dot{\mathcal{H}}^{-1,m}(M)} \Big),
\end{gather*}
for every $\delta \in (0,\delta_0)$.
\end{lemma}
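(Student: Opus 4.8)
The goal is to control $\operatorname{Im}\mathcal{E}_0(u,A_r^*A_ru)$ in terms of the $\widetilde{B}_r$-term (which will be absorbed on the left of the estimate in Lemma~\ref{lemma:hyper-1}'s analogue) plus the harmless error terms. The starting point is the identity \eqref{eq:im-energy} for $2i\operatorname{Im}\mathcal{E}_0(u,A^*A u)$ together with the explicit expansion of the $Q_0$-commutators into the pieces involving $A_0\in\Psi^0_b(M)$ and $A_1\in\Psi^{-1}_b(M)$. Substituting $A=A_r$, the dominant term is $\langle Q_0u, iQ_0A_{1,r}u\rangle$, and by the computation preceding the lemma, $iA_{1,r}=\widetilde{B}_r^*\widetilde{B}_r+F_r+T_r$ at the level of principal symbols. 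So I would first rewrite $\operatorname{Im}\mathcal{E}_0(u,A_r^*A_ru)$ as $\langle Q_0 u, Q_0\widetilde B_r^*\widetilde B_r u\rangle$ plus contributions coming from (i) the commutator pieces $[Q_i,A_r^*A_r]$ and $[\widehat g^{ij}Q_j,A_r^*A_r]$ in the first line of \eqref{eq:im-energy}, (ii) the term $\langle[Q_i\widehat g^{ij}Q_j,A_r^*A_r]u,u\rangle$, and (iii) the $F_r,T_r$ remainders. Crucially, $Q_0\widetilde B_r^* = \widetilde B_r^* Q_0 + (\text{lower order})$ by Lemma~\ref{lem:diff-psi-interact}, so $\langle Q_0u,Q_0\widetilde B_r^*\widetilde B_r u\rangle = \langle \widetilde B_r Q_0u,\widetilde B_r Q_0 u\rangle + \ldots \geq 0$ modulo bounded errors; but the sign here is the wrong one for absorption, so instead I track the full commutator structure so that the genuinely positive term $\|\widetilde B_r u\|^2$ (with the sign fixed by the direction of propagation $\zeta<0$) appears with a controllable coefficient and the indefinite terms are estimated by $\varepsilon\|\widetilde B_r u\|_{\mathcal H^1}^2$ plus error norms.

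Next, for the $F_r$ and $T_r$ remainders I would invoke Lemma~\ref{lemma:commutator-positive}: $\mathcal{F}=\{F_r\}$ has $WF_b^\prime$ contained in $\{-2\delta\le\widehat\zeta\le-\delta,\ \omega^{1/2}\le 2\lambda\delta\}$, which for $\delta$ small lies inside $W\cap\{\zeta<0\}$, the region where $u$ is assumed microlocally regular; hence the $F_r$-contribution is bounded by $\|G_2 u\|_{\mathcal H^1}^2$ for an operator $G_2$ supported in that cone, using Lemma~\ref{prop-boundH1}. The $T_r$-terms are of order $2s-1$ (lower order relative to the $\mathcal H^1$-regularity being propagated at order $s+1/2$) and localized near $q_0$, so they are absorbed into $\|\chi u\|_{\mathcal H^{1,m}}^2$ and $\|G_1 u\|_{\mathcal H^1}^2$ via Lemma~\ref{prop-boundL2}, exactly as in the elliptic case. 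The term $\langle [Q_i\widehat g^{ij}Q_j,A_r^*A_r]u,u\rangle$ is handled by noting $Q_i\widehat g^{ij}Q_j$ differs from $x^{-2}P$ (modulo $S_F$ and a first-order term) by Remark~\ref{Rem: Klein-Gordon operator and twisted differentials}, so this term produces exactly the combination $\langle A_r^*A_r P_\Theta u, u\rangle$-type pairings plus the $\Theta\gamma_- u$ boundary contribution; these give rise to the norms $\|G_0 P_\Theta u\|_{\dot{\mathcal H}^{-1}}^2$, $\|\chi P_\Theta u\|_{\dot{\mathcal H}^{-1,m}}^2$ and, through the boundary term treated as in Lemma~\ref{lemma-bound-quad-form-v2}, the $\Theta u$-norms $\|G_0\Theta u\|_{\dot{\mathcal H}^{-1}}^2$ and $\|\chi\Theta u\|_{\dot{\mathcal H}^{-1,m}}^2$.

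Then, to convert the raw pairings into the stated norms, I apply the same Cauchy--Schwarz-plus-$ab\le\varepsilon a^2+\varepsilon^{-1}b^2$ device used throughout Section~\ref{Sec: Boundary Value Problem}: each term of the form $\langle (\text{order }2s)\text{ op}\cdot u, u\rangle$ or $\langle\widetilde B_r(\cdots)u,\widetilde B_r u\rangle$ is split so that a factor $\varepsilon\|\widetilde B_r u\|_{\mathcal H^1}^2$ is produced and the remaining factor is estimated by Lemma~\ref{prop-boundH1} or Lemma~\ref{prop-boundL2} in terms of $\|Gu\|_{\mathcal H^1}$ and $\|\chi u\|_{\mathcal H^{1,m}}$; the choice of $\lambda$ large and then $\delta\in(0,\delta_0)$ small (as in Lemma~\ref{lemma:commutator-positive}) makes the coefficient $C_0(\lambda\delta+\delta+\lambda^{-1})$ of the indefinite $D_r$-piece as small as we like, which is what lets us reach an arbitrary prescribed $\varepsilon$. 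Finally one collects all the contributions and reads off the stated inequality.

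\textbf{Main obstacle.} The delicate point is the bookkeeping of the \emph{sign} of the principal-symbol contribution from $\langle Q_0 u, iQ_0 A_{1,r}u\rangle$: one must check that, on the support of $a_r$ — which by construction of $\phi=\mu+\lambda^{-2}\delta^{-1}\omega$ with $\mu=-\zeta$ and of $\chi_0,\chi_1$ sits in $\{\zeta<0\}$ near $q_0$ — the term $\widetilde b_r^2$ arising from differentiating $\chi_0(2-\phi/\delta)$ enters with the correct sign so that it reinforces rather than cancels the positivity, while the $D_r$-term (with indefinite sign but small coefficient by Lemma~\ref{lemma:commutator-positive}) and the $\chi_1^\prime$-term $F_r$ (supported where $u$ is regular) are the only places where positivity could be lost. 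This is exactly the positive-commutator mechanism of Vasy and of \cite[Lemma 6.9]{GaWr18}; adapting it here requires carefully re-deriving the sign in the presence of the twisting $F$ and the trivially-extended $\Theta$, but no essentially new idea beyond what Lemmas~\ref{lemma:commutator-positive}, \ref{lemma-bound-quad-form-v2}, \ref{prop-boundH1} and \ref{prop-boundL2} already provide.
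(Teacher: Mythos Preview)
Your plan attacks the lemma from the wrong side. You propose to expand $\operatorname{Im}\mathcal{E}_0(u,A_r^*A_r u)$ via the commutator identity \eqref{eq:im-energy} and then control each piece through Lemma~\ref{lemma:commutator-positive}. But that commutator expansion is precisely the mechanism behind Lemma~\ref{lemma:hyper-1}: the leading term $\langle Q_0 u, iQ_0 A_{1,r}u\rangle$ with $iA_{1,r}=\widetilde B_r^*\widetilde B_r+F_r+T_r$ is what produces the \emph{lower} bound $c\|\widetilde B_r u\|^2 \le -2\operatorname{Im}\mathcal{E}_0 + (\text{errors})$. If you now re-use the same identity to manufacture an \emph{upper} bound on $\operatorname{Im}\mathcal{E}_0$, the principal positive term $\|\widetilde B_r Q_0 u\|^2$ enters with the wrong sign for absorption (as you yourself note), and there is no small parameter multiplying it --- it is the main term, not an $O(\lambda\delta+\delta+\lambda^{-1})$ remainder. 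The result would either be vacuous or circular when combined with Lemma~\ref{lemma:hyper-1}. Your attempt to route the term $\langle[Q_i\widehat g^{ij}Q_j,A_r^*A_r]u,u\rangle$ through $x^{-2}P$ is also off: $Q_i\widehat g^{ij}Q_j$ is only the tangential part of the twisted Laplacian and does not by itself recover $P_\Theta$ or the boundary pairing.

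The paper's proof uses a completely different and much more direct idea: it never touches the commutator identity. Instead it writes
\[
\operatorname{Im}\mathcal{E}_0(u,A_r^*A_ru)=\operatorname{Im}\mathcal{E}_\Theta(u,A_r^*A_ru)-\operatorname{Im}\bigl(\mathcal{E}_\Theta-\mathcal{E}_0\bigr)(u,A_r^*A_ru),
\]
and uses the defining relation $\mathcal{E}_\Theta(u,v)=\langle P_\Theta u,v\rangle$ to bound the first piece as a duality pairing $\langle A_rP_\Theta u,A_ru\rangle$; inserting $\Lambda_{1/2}\Lambda_{-1/2}+R=\mathbb{I}$ shifts half an order and gives the $G_0P_\Theta u$, $\chi P_\Theta u$, $G_1 u$, $\chi u$ norms. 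The difference $\mathcal{E}_\Theta-\mathcal{E}_0$ consists of the $S_F$ term (handled by a commutator $A_r^*[A_r,x^{-2}S_F]$ of order $2s$, bounded by $\|G_1 u\|_{\mathcal H^1}^2+\|\chi u\|_{\mathcal H^{1,m}}^2$) and the boundary pairing $2\operatorname{Im}\langle\Theta\gamma_-u,\gamma_-(A_r^*A_ru)\rangle_{\partial M}$. For the latter, again inserting $\Lambda_{1/2}\Lambda_{-1/2}$ and arguing as in Lemma~\ref{lemma-bound-quad-form-v2} yields the $\varepsilon\|\widetilde B_r u\|_{\mathcal H^1}^2$ together with the $G_0\Theta u$, $\chi\Theta u$, $G_1 u$ terms. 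No positivity, no sign-tracking, no $D_r$ coefficient is needed here; those belong to Lemma~\ref{lemma:hyper-1}.
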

\begin{proof}
Let $\Lambda_{-1/2} \in \Psi^{-1/2}_b(M)$ be an elliptic pseudodifferential operator. Then, there exists $\Lambda_{1/2} \in \Psi^{1/2}_b(M)$ such that $\Lambda_{1/2}\Lambda_{-1/2} = \mathbb{I} + R$ with $R \in \Psi^{-1}_b(M)$. In order to account for the boundary conditions, we bound $\mathcal{E}_\Theta(u,A^*_r A_r u)$. \begin{equation}\label{eq:abs-E_T}
\begin{split}
\big| \mathcal{E}_\Theta(u,A^*_r A_r u) \big| = & \big| \langle A_r P_\Theta u, A_r u \rangle \big|  =  \big| \langle A_r P_\Theta u, \Big( \Lambda_{1/2}\Lambda_{-1/2} + R \Big) A_r u \rangle \big| \\
\leq & \big| \langle A_r P_\Theta u, \Lambda_{1/2}\Lambda_{-1/2} A_r u \rangle \big| + \big| \langle A_r P_\Theta u, R A_r u \rangle \big|.
\end{split}
\end{equation}
We can control the first term similarly to the proof of Lemma \ref{Lem: lemma-bound-quad-form-v1}:
\begin{gather}
\big| \langle A_r P_\Theta u, \Lambda_{1/2}\Lambda_{-1/2} A_r u \rangle \big| \leq  C \Big( \| \Lambda_{1/2} A_r^* P\Theta u \|^2_{\dot{\mathcal{H}}^{-1}(M)} + \| \Lambda_{-1/2} A_r u\|^2_{\mathcal{H}^1(M)} \Big) \notag \\
\leq  C \Big( \| G_0 f \|^2_{\dot{\mathcal{H}}^{-1}(M)} + \| \chi f \|^2_{\dot{\mathcal{H}}^{-1,m}(M)} + \| G_1 u \|^2_{\mathcal{H}^1(M)} + \| \chi u \|^2_{\mathcal{H}^{1,m}(M)} \Big),
\end{gather}
where $G_0 \in \Psi_b^{s+1}(M)$ while $G_1 \in \Psi^s_b(M)$. Focusing on the second term of Equation \eqref{eq:abs-E_T}, we get
\begin{equation}
\begin{split}
\big| \langle A_r P_\Theta u, R A_r u \rangle \big| \leq \Big( \| G_0 f \|^2_{\dot{\mathcal{H}}^{-1}(M)} + \| \chi f \|^2_{\dot{\mathcal{H}}^{-1,m}(M)} +\\
\| G_1 u \|^2_{\mathcal{H}^1(M)} + \| \chi u \|^2_{\mathcal{H}^{1,m}(M)} \Big).
\end{split}
\end{equation}
The next step consists of finding a bound for 
$$|\mathrm{Im}\mathcal{E}_\Theta(u,A^*_r A_ru)-\textrm{Im}\mathcal{E}_0(u,A^*_r A_r u)|.$$
A direct inspection of Equation \eqref{Eq: Energy form with Theta} and of Equation \eqref{Eq: twisted Dirichlet form} unveils that this last difference consists of two terms. The first is
\begin{equation}
\langle A_r x^{-2} S_F u, A_r u \rangle - \langle A_r u, A_r x^{-2}S_F u \rangle,
\end{equation}
which can be rewritten as
\begin{equation}
\langle A^*_r [A_r,x^{-2} S_F] u, u \rangle - \langle u, A^*_r [A_r, x^{-2}S_F] u \rangle.
\end{equation}
Observing that $A^*_r [A_r, x^{-2}S_F]$ is uniformly bounded in $\Psi^{2s}_b(M)$, we find that
\begin{equation}
|\mathrm{Im} \langle x^{-2}S_F u, A^*_r A_r u \rangle| \leq C \Big( \| G_1 u \|^2_{\mathcal{H}^1(M)} + \| \chi u \|^2_{\mathcal{H}^{1,m}(M)}\Big).
\end{equation}
The second term is instead 
$$ 2 \mathrm{Im} \langle \Theta \gamma_- u, \gamma_- (A_r^* A_r u) \rangle_{\partial M},$$ 
which can be rewritten in the form

\begin{equation}\label{eq:diff_boundary}
\langle \Theta \gamma_- u, \gamma_- (A_r^* A_r u) \rangle_{\partial M} - \langle \gamma_- (A_r^* A_r u), \Theta \gamma_- u \rangle_{\partial M}.
\end{equation}

Proceeding as in the first bound of the proof and using the properties of the indicial operator as in Lemma \ref{Lem: lemma-bound-quad-form-v1}, we can write, modulo lower order terms
\begin{gather}
|\langle \Theta \gamma_- u, \gamma_- (A_r^* A_r u) \rangle_{\partial X} - \langle \gamma_- (A_r^* A_r u), \Theta \gamma_- u \rangle_{\partial M}| \leq \notag \\
\leq 2 |\langle \Theta \gamma_- u, \gamma_- (A_r^* \Lambda_{1/2} \Lambda_{-1/2} A_r u) \rangle_{\partial M} | \leq \notag \\
\leq \varepsilon \| \widetilde{B}_r u\|^2_{\mathcal{H}^1(M)} + C \Big( \|G_0 \Theta u \|_{\dot{\mathcal{H}}^{-1}(M)}^2+ \| \chi \Theta u \|_{\dot{\mathcal{H}}^{-1,m}(M)}^2 + \| G_1 u \|_{\mathcal{H}^1(M)}^2   \Big). 
\end{gather}
Collecting all estimates, we obtain the sought result.
\end{proof}

\noindent We focus on the case where $\Theta\in\Psi^k_b(\partial M)$ with $0<k\leq 2$.

\begin{lemma}\label{lemma:hyper-1b}
Given $\varepsilon > 0$, there exists $\lambda > 0$ and $\delta_0 > 0$ such that 
\begin{equation*}
\begin{split}
\mathrm{Im} \mathcal{E}_\Theta(u,A^*_r A_r u) \leq \varepsilon \| \widetilde{B}_r u \|^2_{\mathcal{H}^1(M)} + C \Big( \|G_2 u\|^2_{\mathcal{H}^1(M)} + \|G_0  P_\Theta u\|^2_{\dot{\mathcal{H}}^{-1}(M)} + \\
+ \| G_1 u \|^2_{\mathcal{H}^1(M)} + \|\chi u \|^2_{\mathcal{H}^{1,m}(M)} + \|\chi P_\Theta u \|^2_{\dot{\mathcal{H}}^{-1,m}(M)} \Big),
\end{split}
\end{equation*}
for every $\delta \in (0,\delta_0)$.
\end{lemma}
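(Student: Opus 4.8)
The plan is to bound the full sesquilinear form $\mathcal{E}_\Theta$ directly, which is in fact cleaner than the argument for Lemma~\ref{lemma:hyper-2}: since we never have to split off the boundary pairing $\langle\Theta\gamma_-u,\gamma_-(\,\cdot\,)\rangle_{\partial M}$, no $\Theta u$‑terms are produced, which is exactly why the statement is phrased in terms of $\mathcal{E}_\Theta$ rather than $\mathcal{E}_0$. First I would use the definition of $P_\Theta$ in Equation~\eqref{Eq: P_Theta} and transfer the formal adjoint across the pairing to write
\[
\mathrm{Im}\,\mathcal{E}_\Theta(u,A_r^*A_r u)\ \le\ \bigl|\mathcal{E}_\Theta(u,A_r^*A_r u)\bigr|\ =\ \bigl|\langle P_\Theta u, A_r^*A_r u\rangle\bigr|\ =\ \bigl|\langle A_r P_\Theta u, A_r u\rangle\bigr|,
\]
where the last bracket is the pairing between $\dot{\mathcal{H}}^{-1}(M)$ and $\mathcal{H}^1(M)$; this is legitimate because $A_r$ is properly supported in $U$ and, for $\delta$ small, the hypotheses on the $b$‑wavefront sets of $u$ and of $P_\Theta u$ guarantee $A_r u\in\mathcal{H}^1_{loc}(M)$ and $A_r P_\Theta u\in\dot{\mathcal{H}}^{-1}_{loc}(M)$ uniformly in $r\in(0,1)$. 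So the whole problem reduces to estimating $|\langle A_r P_\Theta u, A_r u\rangle|$, which is precisely the computation carried out in the first part of the proof of Lemma~\ref{lemma:hyper-2}.

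Next I would insert an approximate factorisation of the identity, as in Lemma~\ref{lemma:hyper-2}: pick $\Lambda_{-1/2}\in\Psi^{-1/2}_b(M)$ elliptic with parametrix $\Lambda_{1/2}\in\Psi^{1/2}_b(M)$, so that $\Lambda_{1/2}\Lambda_{-1/2}=\mathbb{I}+R$ with $R\in\Psi^{-1}_b(M)$, and split
\[
\bigl|\langle A_r P_\Theta u, A_r u\rangle\bigr|\ \le\ \bigl|\langle \Lambda_{1/2}^*A_r P_\Theta u,\ \Lambda_{-1/2}A_r u\rangle\bigr|\ +\ \bigl|\langle A_r P_\Theta u,\ R A_r u\rangle\bigr|.
\]
For the first summand I would apply Cauchy--Schwarz in the $\dot{\mathcal{H}}^{-1}$--$\mathcal{H}^1$ pairing and then $ab\le\tfrac12(a^2+b^2)$, estimating the two factors with the microlocal bounds of Section~\ref{Sec: b-calculus and WF}. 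Since $\{\Lambda_{1/2}^*A_r\}$ is bounded in $\Psi^{s+1}_b(M)$ with $WF_b'$ contained in $WF_b'(\mathcal{A})$, Lemma~\ref{prop-boundH1} (with $k=-1$) yields $\|\Lambda_{1/2}^*A_r P_\Theta u\|_{\dot{\mathcal{H}}^{-1}(M)}\le C\bigl(\|G_0 P_\Theta u\|_{\dot{\mathcal{H}}^{-1}(M)}+\|\chi P_\Theta u\|_{\dot{\mathcal{H}}^{-1,m}(M)}\bigr)$ for a compactly supported $G_0\in\Psi^{s+1}_b(M)$ elliptic on $WF_b'(\mathcal{A})$, the admissibility condition $WF_b^{-1,s+1}(P_\Theta u)\cap WF_b'(G_0)=\emptyset$ being guaranteed once the support of $a$ in Equation~\eqref{eq:hyp_localization} has been shrunk into the conic neighbourhood of $q_0$ on which $P_\Theta u$ is regular. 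For the other factor, $\{\Lambda_{-1/2}A_r\}$ is bounded in $\Psi^{s}_b(M)$ and $\Lambda_{-1/2}A_r$ agrees microlocally with the regulator $\widetilde B_r$ up to an operator whose wavefront set sits in the a priori regular region $W\cap\{\zeta<0\}$ (absorbed into $G_2$) plus lower order; combining this with Lemma~\ref{prop-boundH1} gives $\|\Lambda_{-1/2}A_r u\|_{\mathcal{H}^1(M)}^2\le \varepsilon\|\widetilde B_r u\|_{\mathcal{H}^1(M)}^2+C\bigl(\|G_2 u\|_{\mathcal{H}^1(M)}^2+\|G_1 u\|_{\mathcal{H}^1(M)}^2+\|\chi u\|_{\mathcal{H}^{1,m}(M)}^2\bigr)$ with $G_1\in\Psi^s_b(M)$. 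The second summand, involving $R\in\Psi^{-1}_b(M)$, is treated identically, the extra gained order only simplifying the bounds and producing the same terms.

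Collecting the two estimates and relabelling constants yields exactly the inequality in the statement; fixing $\varepsilon$ first and then choosing $\lambda$ large and $\delta_0$ small so that the parametrix construction and the commutator bookkeeping of Lemma~\ref{lemma:commutator-positive} go through makes it hold for every $\delta\in(0,\delta_0)$. I expect the only genuinely fiddly point to be the order bookkeeping that forces the auxiliary operators to come out with the prescribed orders ($G_0\in\Psi^{s+1}_b(M)$, $G_1,G_2\in\Psi^s_b(M)$) and wavefront sets; there is no real analytic obstacle beyond that, because the potentially dangerous $0<k\le 2$ boundary contribution is already absorbed into the hypothesis $q_0\notin WF_b^{-1,s+1}(P_\Theta u)\cup WF_b^{-1,s+1}(\Theta u)$, which — via Proposition~\ref{Prop: elliptic regularity} — also confines $WF_b^{1,s}(u)$ near $q_0$ to $\dot{\mathcal{N}}$, so that no estimate beyond those of Section~\ref{Sec: b-calculus and WF} is needed.
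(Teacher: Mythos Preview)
Your argument is correct for the statement \emph{as written}: since $\mathcal{E}_\Theta(u,A_r^*A_r u)=\langle A_r P_\Theta u,A_r u\rangle$, the imaginary part is bounded by the modulus, and the $\Lambda_{\pm 1/2}$ splitting together with Lemma~\ref{prop-boundH1} gives exactly the right-hand side (in fact a stronger bound, as the $\varepsilon\|\widetilde B_r u\|^2$ and $\|G_2 u\|^2$ terms are not even needed; your attempt to manufacture them via ``$\Lambda_{-1/2}A_r$ agrees microlocally with $\widetilde B_r$'' is unnecessary and not literally true, although the wavefront inclusion you have in mind is). This is precisely what the paper calls ``the first part of the proof \dots\ identical to that of Lemma~\ref{lemma:hyper-2}''.

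The discrepancy is that the paper's own proof then does substantially \emph{more}: it goes on to control $2\,\mathrm{Im}\,\langle\Theta\gamma_-u,\gamma_-(A_r^*A_r u)\rangle_{\partial M}$ by rewriting it as $\langle\widehat N([\widetilde A_r^*\widetilde A_r,\Theta])(-i\nu_-)\gamma_-u,\gamma_-u\rangle$ and invoking Lemma~\ref{lemma:commutator-positive} to produce the $\widetilde B_r^*\widetilde D_r\widetilde B_r$ decomposition. That extra work is exactly what one needs to pass from a bound on $\mathrm{Im}\,\mathcal{E}_\Theta$ to a bound on $\mathrm{Im}\,\mathcal{E}_0$, and it is $\mathrm{Im}\,\mathcal{E}_0$ that feeds into Lemma~\ref{lemma:hyper-1} (and its $0<k\le2$ variant) in the positive commutator argument. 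So either the displayed $\mathcal{E}_\Theta$ in the statement is a typo for $\mathcal{E}_0$ --- which is consistent with both the proof given and the downstream use in the proof of Proposition~\ref{prop:hyp_reg_pos} --- or the paper's proof contains a large superfluous block. Under the first reading your proof is incomplete: you still owe the commutator estimate for the boundary pairing, which is where the smallness from Lemma~\ref{lemma:commutator-positive} (and hence the choice of $\lambda,\delta_0$) actually enters.
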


\begin{proof}
The first part of the proof is identical to that of Lemma \ref{lemma:hyper-2}. The difference lies in the estimates for the boundary terms, {\it cf. } Equation \eqref{eq:diff_boundary}.
This time, using the properties of the indicial family, {\it cf.} Equation \eqref{Eq: indicial family}, we can rewrite the relevant terms as \begin{equation*}
\begin{split}
\langle \widehat{N} \big( \widetilde{A}_r^* \widetilde{A}_r \Theta \big) (-i \nu_-) \gamma_- u, \gamma_- u \rangle_{\partial M} - \langle \widehat{N} \big( \Theta \widetilde{A}_r^* A_r \big) (-i \nu_-)   \gamma_- u, \gamma_- u \rangle_{\partial M} = \\
= \langle \widehat{N} [\widetilde{A}_r^* \widetilde{A}_r, \Theta](-i \nu_-) \gamma_- u, \gamma_- u \rangle_{\partial M}
\end{split}
\end{equation*}
where $\widetilde{A}_r = x^{2\nu_-} A_r x^{-2 \nu_-}$. Note that $A_r$ and $\widetilde{A}_r$ have the same principal symbol, hence we can write $\widetilde{A}_r^* = A_r + N_r$, with $N_r$ containing lower order terms. Hence: 
\begin{gather*}
[\widetilde{A}_r^* \widetilde{A}_r, \Theta] = [A^*_r A_r, \Theta] + [N^*_r A_r + A^*_r N_r + N_r^* N_r, \Theta] =\\
= [A^*_r A_r, \Theta] + \widetilde{A}^*_r \widetilde{A}_r - A^*_r A_r,
\end{gather*}
which yields
\begin{gather}
\langle [\widetilde{A}_r^* \widetilde{A}_r, \Theta] \gamma_- u, \gamma_- u \rangle_{\partial M} = \notag \\
= \langle \gamma_-([A^*_r A_r, \Theta]u),\gamma_- u \rangle_{\partial M} +
  \langle \Theta \gamma_- (\widetilde{A}^*_r\widetilde{A}_r - A^*_r A_r) u ], \gamma_- u \rangle_{\partial M}.\label{eq:hyp_inner_to_bound}
\end{gather}
We use Lemma \ref{lemma:commutator-positive} to control the first term writing $[A^*_r A_r, \Theta] = \widetilde{B}_r^* \widetilde{D}_r \widetilde{B}_r + T_r$ with $\widetilde{D}_r \in \Psi^k_b(M)$ and $T_r \in \Psi^{2s+k-1}_b(M)$. Observe that $ \widetilde{D}_r$ is related to $D_r\in\Psi^{k-2}_b(M)$ as in Lemma \ref{lemma:commutator-positive} since their respective principal symbols $d_r$ and $\widetilde{d}_r$ are connected via the identity $\widetilde{d}_r = \rho^2 d_r$ where $\rho=|\eta_{n-1}|$ as in Section \ref{Sec: Geometric preliminaries}. 
Hence
\begin{equation}\label{eq:hyp-boundary}
\begin{split}
\langle \gamma_-([A^*_r A_r, \Theta]u),\gamma_- u \rangle_{\partial M} =\notag \\
= \langle \gamma_-(\widetilde{D}_r B_r u), \gamma_-(\widetilde{B}_r u) \rangle_{\partial M} + \langle \gamma_- (T_r u), \gamma_- u \rangle_{\partial M},
\end{split}
\end{equation}
where, in the second equality, we used the properties of the indicial family to bring $\widetilde{B}^*_r$ to the right hand side. Thus it descends that
$$ \langle \gamma_-([A^*_r A_r, \Theta]u),\gamma_- u \rangle_{\partial M} = \langle \gamma_-(\widetilde{D}_r \widetilde{B_r} u), \gamma_-(\widetilde{B}_r u) \rangle_{\partial M},$$ 
modulo lower order terms bounded by $ s-1/2 $. Using the indicial family we obtain
\begin{gather*}
\langle \gamma_- (\widetilde{D}_r \widetilde{B}_r u), \gamma_- (\widetilde{B}_r u) \rangle_{\partial M} =\\
= \langle \widehat{N}(-i \nu_-)(\widetilde{D}_r) \widehat{N}(-i \nu_-)(\widetilde{B}_r)\gamma_- u, \widehat{N}(-i \nu_-)(\widetilde{B}_r) \gamma_- u\rangle_{\partial M}.
\end{gather*}
Using that $\widehat{N}(-i \nu_-)(\widetilde{D}_r) \in \Psi^0_b(M)$, $\widehat{N}(-i \nu_-)(\widetilde{B}_r) \in \Psi^{m-1/2}_b(M)$ together with Equation \eqref{prop-Psi0_Bound}, we obtain

\begin{equation}\label{eq:hyp-BDB-inner}
\begin{split}
|\langle \widehat{N}(-i \nu_-)(\widetilde{D}_r) \widehat{N}(-i \nu_-)(\widetilde{B}_r)\gamma_- u, \widehat{N}(-i \nu_-)(\widetilde{B}_r) \gamma_- u\rangle_{\partial M}| \leq \\
\leq \| \widehat{N}(-i \nu_-)(\widetilde{D}_r) \widehat{N}(-i \nu_-)(\widetilde{B}_r)\gamma_- u \|_{\mathcal{L}^2(M)}^2 + \| (\widetilde{B}_r)\gamma_- u \|_{\mathcal{L}^2(M)}^2 \leq \\
\| \chi \widehat{N}(-i \nu_-)(\widetilde{B}_r)\gamma_- u \|_{\mathcal{L}^2(M)}^2 + \| \widehat{N}(-i \nu_-)(\widetilde{B}_r)\gamma_- u \|_{\mathcal{L}^2(M)}^2 \leq 
\\
\leq \varepsilon \|\widetilde{B}_r u \|^2_{\mathcal{H}^1(M)}+ C (\|G_1 u\|_{\mathcal{H}^1(M)}+\|\chi u \|^2_{\mathcal{H}^{1,m}(M)}).
\end{split} 
\end{equation}

\noindent Focusing on the second term in Equation \eqref{eq:hyp_inner_to_bound}, we write 
\begin{eqnarray*}
\langle \Theta \gamma_- \big( (\widetilde{A}^*_r \widetilde{A}_r - A^*_r A_r) u \big), \gamma_- u \rangle_{\partial M} =\\
= \langle \Theta \gamma_- ( x^{2 \nu_-}[A^*_r A_r, x^{-2\nu_-}]u ), \gamma_- u \rangle_{\partial M}. 
\end{eqnarray*}
We can compute $[A^*_r A_r, x^{-2\nu_-}]$ thanks to Lemma \eqref{lemma:commutator-positive} obtaining
\begin{equation}
[A^*_r A_r, x^{-2\nu_-}] = (2 i \nu_-) \widetilde{B}^*_r \widetilde{B}_r + E_r + T_r.
\end{equation} 
Each term can be controlled as above, obtaining ultimately
\begin{equation}
\begin{split}
|\langle \Theta \gamma_- \big( (\widetilde{A}^*_r \widetilde{A}_r - A^*_r A_r) u \big), \gamma_- u \rangle_{\partial X}| \leq \varepsilon \| \widetilde{B}_r u \|^2_{\mathcal
{H}^1} +\\
C \Big( \| G_1 u \|^2_{\mathcal{H}^{1}(M)} + \| \chi u \|^2_{\mathcal{H}^{1,m}(M)} + \| C_2 u \|^2_{\mathcal{H}^1(M)} \Big),
\end{split}
\end{equation} 
which entails the sought conclusion.
\end{proof}

Finally we can complete the proofs of Propositions \ref{prop:hyp_reg_pos} and \ref{prop:hyp_reg_negnull}. Here we focus only on the first case since the second one follows suit.

\vskip .3cm

\noindent{\em Proof of Proposition \ref{prop:hyp_reg_pos}:} We sketch the main steps since we can proceed exactly as in the elliptic case, {\it cf.} Proposition \ref{Prop: elliptic regularity}. Most notably we follow an induction procedure with respect to $s$. Notice in particular that the statement holds true for $s<m+\frac{1}{2}$ since $u\in\mathcal{H}^{1,m}_{loc}(M)$. To continue in the inductive procedure we consider once more a family $J_r\in\Psi_b^{m-s-1}(M)$, $r\in(0,1)$, such that $J_r \rightarrow \mathbb{I} \in \Psi^0_b(M)$.  Then $\widetilde{B}_r \rightarrow  \widetilde{B} \in \Psi^s_b(M)$ as $r \rightarrow 0$. Using Lemma \ref{lemma:hyper-1} and Lemma \ref{lemma:hyper-2} or \eqref{lemma:hyper-1b} depending on the order of $\Theta$, one obtains that $\| \widetilde{B}_r u \|_{\mathcal{H}^1(M)}$ is uniformly bounded. Therefore, we can find a subsequence $\widetilde{B}_{r_k}u$, with $r_k \rightarrow 0$ for $k \rightarrow +\infty$, that is weakly convergent in $\mathcal{H}^1(M)$. Since $\widetilde{B}_r  \rightarrow \widetilde{B} u$ in $\mathcal{D}^\prime(M)$, the weak limit lies in $\mathcal{H}^1(K)$ for a suitable compact subset $K \subset M$. By uniqueness of the limit and considering that $\widetilde{B}$ is elliptic at $q_0$, we obtain the thesis. 

\subsection{Estimates in the glancing region}\label{Sec: glancing region}

At last we focus on the glancing region $\mathcal{G}(M)$ as in Equation \eqref{Eq: glancing region}. As in the previous subsection, we use a positive commutator argument to obtain the sought microlocal estimates. Barring some geometrical aspects we proceed similarly to Propositions \ref{prop:hyp_reg_pos} and \ref{prop:hyp_reg_negnull}. For this reason, we introduce in some details mainly the geometric framework. In the following $U$ will denote an open coordinate neighbourhood, while $V = U \cap \partial M\neq\emptyset$. As in the previous section we need to consider two scenarios depending on the class of boundary conditions, namely $\Theta\in\Psi^k_b(\partial M)$ with either $k\leq 0$ or $0<k\leq 2$. Similarly to the preceding cases, we shall pick the trivial extension of $\Theta$ to $M$, indicating it with the same symbol. In the following $y_{n-1}$ still refers to the time coordinate corresponding to $\tau$ in Theorem \ref{Th: globally hyperbolic}, while $\eta_{n-1}$ is the corresponding momentum on the $b$-cotangent bundle. In addition $q_0$ refers to a point lying in a compact region $K$ where 

$$K \subset (\mathcal{G} \cap T^*_U M )\setminus WF_b^{-1,s+1}(P_\Theta u)\quad\textrm{if}\quad k\leq 0,$$ 
or 
$$K \subset (\mathcal{G} \cap T^*_U M )\setminus \left(  WF_b^{-1,s+1}(P_\Theta u) \cup  WF_b^{-1,s+1}(\Theta u) \right) \quad\textrm{if}\quad 0<k\leq 2.$$ 
In local coordinates $q_0$ reads $(0, (y_0)_i, 0, (\eta_0)_i)$, $i=1,\dots, n-1$, while it holds $\widehat{g}^{ij}(0,y_0) (\eta_0)_i (\eta_0)_j = 0$.  Since $\eta_{n-1} \neq 0$, we can use the projective coordinates on ${}^b S^* M$ near $\pi_S(q_0)$, where $\pi_S : {}^b T^*M\setminus\{0\} \rightarrow {}^b S^*M$ is the quotient map.
We denote the projection to the boundary with 
$$\widetilde{\pi}: T^*_U M \rightarrow T^*V,$$ 
$$(x,y_i,\xi,\eta_i) \mapsto (y_i,\eta_i),$$
where $i=1,\dots,n-1$. As last ingredient we introduce the gliding vector field $W$, describing the evolution of a point in the directions tangent to the boundary. Consider thus a point on $\widetilde{\pi}(T^*_UM)$ of coordinates $(\eta_0,(y_0)_i)$, $i=1,\dots,n-1$ and define
\begin{equation}
W(\eta_0,(y_0)_i) = \sum_{i = 1}^{n-1}(\partial_{\eta_i} \widehat{p}) (0,\eta_0,0,(y_0)_i) \partial_{y_i} - (\partial_{y_i} \widehat{p})(0,\eta_0,0,(y_0)_i) \partial_{\eta_i},
\end{equation}
where $\widehat{p}$ is the principal symbol of $x^{-2}P$, $P$ being the Klein-Gordon operator as in Equation \eqref{Eq: KG on M}. Letting $\rho = |\eta_{n-1}|$, we observe that, in a neighbourhood of $(\eta_0,(y_0)_i)$, $\rho^{-1}W$ is a non degenerate vector field, since $\rho^{-1} W y^{n-1} = 2 sgn(\eta_{n-1})$. Thus we can use the straightening theorem \cite{LPV13} to find $2n-2$ homogeneous degree zero functions $\rho_1, \cdots, \rho_{2n-2} \in T^*_U M$ with linearly independent differentials such that $\rho^{-1} W \rho_1 = 1$ and $\rho^{-1} W \rho_i = 0$ for $i=2,\cdots,2n-2$.  We also note that $\widehat{p}(0,\eta,0,y_i)$ is annihilated by $W$. Since $d\widehat{p} \neq 0$, we can set $\rho_2(\eta_0,(y_0)_i) = \widehat{p}(0,\eta_0,0,(y_0)i)$. Then we extend $\rho_1,\cdots,\rho_{2n-2}$ in such a way to be independent from $(x,\xi)$, in order to obtain a local chart whose coordinate functions are $x,\widehat{\zeta},\rho_1,\cdots,\rho_{2n-2}$. 

With these data we can introduce two homogeneous functions $\omega_0$ and $\omega$ over $K \cap V$, playing the same role as $\eta$ and $\omega$ in the hyperbolic region:
$$ \Omega_0 = \sum_{i=1}^{2n-2} (\rho_i - \rho_i(q_0))^2, \ \ \ \ \Omega = x^2 + \Omega_0,$$
where we omit to indicate the explicit dependence on $q_0$ for the sake of simplicity of the notation. In connection to these functions we introduce  
$$ \phi_0 = \rho_1 + \frac{\Omega_0}{\lambda^2 \delta}, \quad \phi = \rho_1 + \frac{\Omega}{\lambda^2 \delta}. $$

Using the same cutoff functions $\chi_0$ and $\chi_1$ introduced in Section \ref{Sec: hyperbolic region}, we localize near $q_0$ using a b-pseudodifferential operator $A$ of order zero whose total symbol is given by 
$$ a = \chi_0(2-\phi/\delta)\chi_1(1+(\rho+\delta)/(\lambda\delta)). $$
The ensuing families $\mathcal{A}$, $\mathcal{B}$ and $\widetilde{\mathcal{B}}$ are defined as in Section \ref{Sec: hyperbolic region}. With these data, the following generalizations of \cite[Prop. 6.11]{GaWr18} hold true. Observe that, with a slight abuse of notation, we identify subsets of the b-cosphere bundle with their pre-image on the b-cotangent bundle.

\begin{prop}\label{prop:glan_reg_negnulla}
Let $\Theta \in \Psi_b^k(M)$ with $k \leq 0$ and let $u \in \mathcal{H}^{1,m}_{loc}(M)$ with $m \leq 0$. If $K \subset {}^b S^*_U M$ is compact and $K \subset (\mathcal{G} \cap T^*\partial M) \setminus WF_b^{-1,s+s}(P_\Theta u)$, then there exist $C_0, \delta_0 > 0$ such that for each $q_0 \in K$ and $\delta \in (0,\delta_0)$ the following holds. Let $\alpha_0 \in \mathcal{N}$ be such that $\pi(\alpha_0) = q_0$. If 
$$ \alpha \in \mathcal{N}, \ \ |\widetilde{\pi}(\alpha) - exp(-\delta W)(\widetilde{\alpha}_0)| \leq C_0 \delta^2, \ \ \ |x(\alpha)| \leq C_0 \delta^2, $$
imply $\pi(\alpha) \not \in WF_b^{1,s}(u)$, then $q_0 \not \in WF_b^{1,s}(u)$.
\end{prop}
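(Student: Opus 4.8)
\textbf{Plan for the proof of Proposition \ref{prop:glan_reg_negnulla}.}
The strategy mirrors the positive commutator argument already deployed in the hyperbolic region (Propositions \ref{prop:hyp_reg_pos} and \ref{prop:hyp_reg_negnull}), with the crucial difference that the localizer $a$ is now built out of the gliding-flow-adapted coordinates $\rho_1,\dots,\rho_{2n-2}$ rather than $\mu=-\zeta$. First I would set up the regularizing family exactly as in Section \ref{Sec: hyperbolic region}: take $\{J_r\in\Psi^{m-s-1}_b(M)\mid r\in(0,1)\}$ bounded in $\Psi^0_b(M)$ and converging to $\mathbb{I}$ in $\Psi^1_b(M)$, put $A_r=AJ_r$ with $A\in\Psi^0_b(M)$ having symbol $a$, and form the bounded family $\mathcal{A}=\{A_r\}\subset\Psi^{s+1/2}_b(M)$. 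The heart of the matter is the analogue of Lemma \ref{lemma:commutator-positive}: one computes the imaginary part of the energy form via the identity \eqref{eq:im-energy}, and the term $\langle Q_0u, iQ_0A_{1,r}u\rangle$ produces, after inserting the symbol calculus and the fact that on $\mathcal{N}$ one has $\widehat{p}=0$, a decomposition $iA_{1,r}=\widetilde{B}_r^*\widetilde{B}_r+F_r+T_r$. Here $\widetilde{B}_r$ has principal symbol $\widetilde{b}_r=j_r b$ with $b$ arising from $\chi_0'$, so that $b$ is supported where $\rho^{-1}W\phi$ is comparable to $1$; the positivity of $\rho^{-1}W\phi_0$ near $q_0$ (which holds for $\delta$ small because $\rho^{-1}W\rho_1=1$ and the $\Omega_0/(\lambda^2\delta)$ correction is controlled) is what makes the commutator positive modulo the "escaping'' terms $F_r$ (supported where $\chi_1'\neq 0$, i.e.\ at points from which the gliding ray has already left a neighbourhood of $q_0$ in the prescribed direction) and the genuinely lower-order $T_r$.

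Once the commutator identity is in place, the argument proceeds as in Lemmas \ref{lemma:hyper-1}, \ref{lemma:hyper-2}: one bounds $\|\widetilde{B}_r u\|^2_{\mathcal{H}^1(M)}$ from above by (a constant times) $|\mathrm{Im}\,\mathcal{E}_0(u,A_r^*A_ru)|$ plus error terms $\|G_0 P_\Theta u\|^2_{\dot{\mathcal{H}}^{-1}}$, $\|\chi P_\Theta u\|^2_{\dot{\mathcal{H}}^{-1,m}}$, $\|G_1 u\|^2_{\mathcal{H}^1}$, $\|\chi u\|^2_{\mathcal{H}^{1,m}}$ and a term $\|G_2 u\|^2_{\mathcal{H}^1}$ where $G_2$ is elliptic on $WF_b'(\mathcal{F})$; the latter is exactly the region swept out by the backward gliding flow, i.e.\ where the hypothesis grants us regularity. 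For the $\mathcal{E}_0$-term one uses the weak formulation $\mathrm{Im}\,\mathcal{E}_\Theta(u,A_r^*A_ru)=\mathrm{Im}\langle A_r P_\Theta u, A_r u\rangle$ together with the $\Lambda_{\pm 1/2}$ trick and Lemma \ref{Lem: lemma-bound-quad-form-v1} to estimate $\langle A_r P_\Theta u,A_r u\rangle$; since $k\leq 0$ the boundary term $\langle\Theta\gamma_-u,\gamma_-(A_r^*A_ru)\rangle_{\partial M}$ is controlled by passing to the indicial family, commuting $\Theta$ past $\widetilde{A}_r$ (the commutator dropping the order) and invoking Lemma \ref{eq:gamma-bound} and Proposition \ref{prop-Psi0_Bound}, exactly as in the proof of Lemma \ref{Lem: lemma-bound-quad-form-v1} — no $\Theta$-contribution to the wavefront set survives because $\Theta\in\Psi^{k}_b$ with $k\leq 0$ maps $\mathcal{H}^1$ to $\mathcal{H}^1$. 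The induction on $s$ then closes: the base case $s<m+\tfrac12$ is automatic since $u\in\mathcal{H}^{1,m}_{loc}(M)$, and at each step the hypothesis (regularity of $u$ at all $\alpha\in\mathcal{N}$ with $|\widetilde\pi(\alpha)-\exp(-\delta W)(\widetilde\alpha_0)|\leq C_0\delta^2$ and $|x(\alpha)|\leq C_0\delta^2$) is precisely what is needed to absorb $\|G_2 u\|_{\mathcal{H}^1}$, after which a weak-compactness argument in $\mathcal{H}^1$ (uniform boundedness of $\widetilde{B}_r u$, weak limit $\widetilde{B}u$, ellipticity of $\widetilde{B}$ at $q_0$) yields $q_0\notin WF^{1,s}_b(u)$.

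\textbf{Main obstacle.} The genuinely delicate point is the geometric bookkeeping in the commutator computation: verifying that with the localizer $a=\chi_0(2-\phi/\delta)\chi_1(1+(\rho+\delta)/(\lambda\delta))$ the escaping set $WF_b'(\mathcal{F})$ is contained in the backward-gliding cone $\{\,|\widetilde\pi(\cdot)-\exp(-\delta W)(\widetilde\alpha_0)|\lesssim\delta^2,\ |x|\lesssim\delta^2\,\}$, and that the "bad'' contribution $D_r$ from the Hamilton flow transverse to the boundary has principal symbol $O(\lambda\delta+\delta+\lambda^{-1})$ so that it can be absorbed into $\varepsilon\|\widetilde B_r u\|^2$ after first choosing $\lambda$ large and then $\delta$ small. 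This requires a careful Taylor expansion of $\widehat p$ and of $W$ in the straightened coordinates $(x,\widehat\zeta,\rho_1,\dots,\rho_{2n-2})$ near $q_0$, using $\widehat p(0,\eta,0,y)=\rho_2$-type normal form and the fact that $x$ is a defining function with $\widehat g^{-1}(dx,dx)=1$ on $\partial M$; this is where the glancing case genuinely differs from the hyperbolic one and where \cite[Prop. 6.11]{GaWr18} must be adapted rather than quoted. The $\Theta$-dependent terms, by contrast, are routine for $k\leq 0$ and contribute nothing new beyond what already appears in Lemma \ref{Lem: lemma-bound-quad-form-v1}.
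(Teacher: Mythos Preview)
Your proposal is correct and follows essentially the same approach as the paper: a positive commutator argument with the gliding-flow-adapted localizer $a=\chi_0(2-\phi/\delta)\chi_1(1+(\rho_1+\delta)/(\lambda\delta))$, the commutator decomposition producing $\widetilde{B}_r$, the use of Lemma~\ref{Lem: lemma-bound-quad-form-v1} (rather than Lemma~\ref{lemma-bound-quad-form-v2}) to absorb the $\Theta$-boundary contribution without generating a $WF_b^{-1,s}(\Theta u)$ term, and the induction/weak-compactness closing; the paper itself omits the details and refers to \cite[Prop.~6.11]{GaWr18} and \cite[Sec.~7]{Vasy08}, noting that the $k\leq 0$ case ``follows suit'' from the $0<k\leq 2$ analysis. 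Your identification of the geometric bookkeeping (the $O(\lambda\delta+\delta+\lambda^{-1})$ control on $D_r$ and the location of $WF_b'(\mathcal{F})$) as the genuine work is exactly right and matches what the paper's two auxiliary lemmas are designed to provide.
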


In the case where $\Theta\in\Psi^k_b(M)$, $0<k\leq 2$, the generalization of \cite[Prop. 6.11]{GaWr18} is the following:

\begin{prop}\label{prop:glan_reg_pos}
Let $\Theta \in \Psi_b^k(M)$ with $0 < k \leq 2$ and let $u \in \mathcal{H}^{1,m}_{loc}(M)$ with $m \leq 0$. If $K \subset {}^b S^*_U X$ is compact and 
$$K \subset (\mathcal{G} \cap T^*\partial M) \setminus \left(  WF_b^{-1,s+s}(P_\Theta u)  \cup WF_b^{-1,s+s}(\Theta u)   \right), $$ 
then there exist $C_0, \delta_0 > 0$ such that for each $q_0 \in K$ and $\delta \in (0,\delta_0)$ the following holds. Let $\alpha_0 \in \mathcal{N}$ be such that $\pi(\alpha_0) = q_0$. If
$$ \alpha \in \mathcal{N}, \ \ |\widetilde{\pi}(\alpha) - exp(-\delta W)(\widetilde{\alpha}_0)| \leq C_0 \delta^2, \ \ \ |x(\alpha)| \leq C_0 \delta^2, $$
imply $\pi(\alpha) \not \in WF_b^{1,s}(u)$, then $q_0 \not \in WF_b^{1,s}(u)$.
\end{prop}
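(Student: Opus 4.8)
The plan is to mirror the structure of the proof of Proposition \ref{prop:hyp_reg_pos}, carrying out an induction on $s$ while invoking the glancing-region commutator estimates in place of the hyperbolic ones. First I would set up the inductive hypothesis: the statement is vacuously true for $s < m + \tfrac12$ since $u \in \mathcal{H}^{1,m}_{loc}(M)$, and we assume it holds up to $s - \tfrac12$ and prove it at $s$. As in the hyperbolic case, we regularize by choosing a family $J_r \in \Psi^{m-s-1}_b(M)$, $r\in(0,1)$, bounded in $\Psi^0_b(M)$ and converging to $\mathbb{I}$ in $\Psi^1_b(M)$, form $A_r = A J_r$ with $A$ the order-zero b-$\Psi$DO localized near $q_0$ via the symbol $a = \chi_0(2-\phi/\delta)\chi_1(1 + (\rho+\delta)/(\lambda\delta))$, and build the associated families $\mathcal{B}$, $\widetilde{\mathcal{B}}$ exactly as in Section \ref{Sec: hyperbolic region}. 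The key algebraic identity remains the expansion of $\mathrm{Im}\,\mathcal{E}_0(u, A_r^* A_r u)$ from Equation \eqref{eq:im-energy}, where the commutator $i A_{1,r}$ decomposes as $\widetilde{B}_r^* \widetilde{B}_r + F_r + T_r$, but now $F_r$ and $T_r$ are supported microlocally near the gliding flow thanks to the straightened coordinates $\rho_1,\dots,\rho_{2n-2}$; the sign of $\rho^{-1}W\rho_1 = 1$ guarantees the positivity needed for the commutator to yield a usable square.

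The next step is to establish the glancing analogues of Lemma \ref{lemma:hyper-1} (resp. the displayed estimate following it) and Lemmas \ref{lemma:hyper-2}, \ref{lemma:hyper-1b}: a lower bound $c\|\widetilde{B}_r u\|^2 \leq -2\,\mathrm{Im}\,\mathcal{E}_0(u, A_r^* A_r u) + C(\dots)$ and an upper bound $\mathrm{Im}\,\mathcal{E}_\Theta(u, A_r^* A_r u) \leq \varepsilon\|\widetilde{B}_r u\|^2_{\mathcal{H}^1(M)} + C(\dots)$, where the error terms involve $G_2 u$ with $WF_b^\prime(G_2)$ disjoint from the portion of $\dot{\mathcal{N}}$ one is allowed to assume regular, together with $\chi u$, $\chi P_\Theta u$ and — in the $0<k\leq 2$ case — $\chi\Theta u$ and $G_0\Theta u$ terms. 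For the first bound one replaces Lemma 5.3 of \cite{GaWr18} by Lemma \ref{Lem: lemma-bound-quad-form-v1} (for $k\leq 0$) or Lemma \ref{lemma-bound-quad-form-v2} (for $0<k\leq 2$), exactly as in the hyperbolic case. For the boundary-term contribution $2\,\mathrm{Im}\langle\Theta\gamma_- u, \gamma_-(A_r^* A_r u)\rangle_{\partial M}$ one again passes through the indicial family, writing $\widehat{N}([A_r^* A_r,\Theta])(-i\nu_-)$ and using Lemma \ref{lemma:commutator-positive} to extract the positive square $\widetilde{B}_r^*\widetilde{D}_r\widetilde{B}_r$, precisely reproducing Equations \eqref{eq:hyp_inner_to_bound}--\eqref{eq:hyp-BDB-inner}; the only new geometric input is that the commutator symbols of $A$ and $G\in\Psi^k_b$ are now controlled via the gliding coordinates rather than via $\mu = -\zeta$.

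Once these estimates are in hand, I close the argument as in the hyperbolic case: since $q_0\notin WF_b^{-1,s+1}(P_\Theta u)$ (and, when $0<k\leq 2$, also $q_0\notin WF_b^{-1,s+1}(\Theta u)$), and since — by the elliptic regularity Proposition \ref{Prop: elliptic regularity} or \ref{prop-mer} — the wavefront set near $q_0$ is confined to $\dot{\mathcal{N}}$, the hypothesis that the points $\alpha\in\mathcal{N}$ with $|\widetilde\pi(\alpha) - \exp(-\delta W)(\widetilde\alpha_0)|\leq C_0\delta^2$ and $|x(\alpha)|\leq C_0\delta^2$ lie off $WF_b^{1,s}(u)$ lets us choose $\lambda, \delta_0$ and a cutoff $G_2$ (microlocalizing to that $\delta$-translated region) so that $\|G_2 u\|_{\mathcal{H}^1(M)}$ is finite. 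Combining the lower and upper bounds for $\mathrm{Im}\,\mathcal{E}_0(u, A_r^* A_r u)$ with the inductive hypothesis controlling $\chi u$, $\chi P_\Theta u$, $\chi\Theta u$ and $G_1 u$, all right-hand-side terms are uniformly bounded in $r$, so $\|\widetilde{B}_r u\|_{\mathcal{H}^1(M)}$ is uniformly bounded; extracting a weakly convergent subsequence $\widetilde{B}_{r_k}u \rightharpoonup \widetilde{B}u$ in $\mathcal{H}^1(K)$ for some compact $K$ and using ellipticity of $\widetilde{B}$ at $q_0$ gives $q_0\notin WF_b^{1,s}(u)$, completing the induction. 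The main obstacle I expect is the careful verification that the straightened gliding coordinates $\rho_1,\dots,\rho_{2n-2}$ make the error operators $F_r$, $T_r$ in the commutator expansion microlocally supported where the a priori regularity hypothesis applies, and that the boundary commutator $[A_r^* A_r, \Theta]$ — now with $\Theta$ of positive order up to $2$ — still splits with a principal-symbol estimate of the correct order (the factor $\rho^2$ relating $d_r$ to $\widetilde d_r$ must be tracked), since this is where the positive-commutator positivity and the order bookkeeping genuinely interact.
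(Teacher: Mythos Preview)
Your proposal is correct and follows essentially the same route as the paper: an induction on $s$ driven by a positive-commutator estimate localized via the straightened gliding coordinates, with the two key lemmas (a lower bound $c\|\widetilde{B}_r u\|^2 \leq -2\,\mathrm{Im}\,\mathcal{E}_0(u,A_r^*A_r u) + C(\dots)$ and the control of the $\Theta$-boundary contribution through the indicial family and Lemma~\ref{lemma-bound-quad-form-v2}) replacing their hyperbolic counterparts, followed by the weak-convergence extraction. The paper itself only sketches this, pointing to \cite[Sec.~6]{GaWr18} and \cite[Sec.~7]{Vasy08}; the one ingredient you might make more explicit is that the first auxiliary lemma in the glancing region bounds $\|Q_0 A_r u\|^2_{\mathcal{L}^2(M)} - \varepsilon\,\mathcal{Q}(A_r u,A_r u)$ (rather than $\mathcal{E}_0$ directly), exploiting the condition $\widehat g^{ij}\eta_i\eta_j \leq \delta\beta^{-1}|\eta_{n-1}|^2$ that characterizes proximity to the glancing set.
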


We focus on the case $0<k\leq 2$, the other following suit. The proof is based on two lemmas along with the counterpart of Lemma \ref{lemma:hyper-1} for the glancing region. The proofs are similar to those of the hyperbolic case and are adaptation of those in \cite{GaWr18}, hence we will omit them. 

The first lemma we need gives a bound of the difference between the $\mathcal{L}^2$-norm of $Q_0 A_r u$ and of a generic positive sesquilinear $\mathcal{Q}$ applied $A_r u$.
\begin{lemma}
Let $U \subset M$ be a boundary coordinate patch and $m \leq 0$. Let $\mathcal{A}=\{A_r : r \in (0,1) \}$ be a bounded subset of $\Psi^s_b(M)$ with compact support in $U$ such that $A_r \in \Psi^m_b(M)$ for each $r \in (0,1)$. Let $\delta>0$ and let $V_\delta=\{ q \in {}^bT^*_UM \setminus\{0\}: \widehat{g}^{ij} \eta_i \eta_j \leq \delta \beta^{-1}|\eta_{n-1}|^2 \}$ and assume that $WF_b^\prime(\mathcal{A}) \subset V_\delta$. Let $G_0 \in \Psi^s_b(M)$ and $G_1 \in \Psi^{s-1/2}_b(M)$ be elliptic on $WF^\prime_b(\mathcal{A})$ and on $WF^\prime_b(\Theta \mathcal{A})$ respectively, both with compact support in $U$. Then there exist $C_\varepsilon$ and $\chi \in \mathcal{C}_0^\infty(U)$ such that
\begin{gather*}
\|Q_0 A_r u\|^2_{\mathcal{L}^2(M)} - \varepsilon \mathcal{Q}(A_r u, A_r u) \leq \\
\leq 2 \delta \| Q_{n-1} A_r u \|^2_{\mathcal{L}^2(M)} + C_\varepsilon \Big( \| \chi u \|^2_{\mathcal{H}^{1,m}(M)} + \| \chi P_\Theta u \|^2_{\dot{\mathcal{H}}^{-1,m}(M)} +\\
+ \| G_0 P_\Theta u \|^2_{\dot{\mathcal{H}}^{-1}(M)} +
 \| G_1 u \|^2_{\mathcal{H}^1(M)} +  \| \chi \Theta u \|^2_{\dot{\mathcal{H}}^{-1,m}(M)} + \| G_0 \Theta u \|^2_{\dot{\mathcal{H}}^{-1}(M)}  \Big).
\end{gather*}
\end{lemma}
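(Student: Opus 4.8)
The plan is to mimic the structure of the proof of Lemma \ref{lemma:DirichletE0Bound} (i.e.\ \cite[Lemma 5.3]{GaWr18}) together with the boundary-term analysis already carried out in Lemma \ref{lemma-bound-quad-form-v2}, but now extracting the $\|Q_0 A_r u\|^2_{\mathcal{L}^2}$ piece and comparing it against the \emph{tangential} energy $\|Q_{n-1}A_r u\|^2_{\mathcal{L}^2}$ rather than against the full Dirichlet form. First I would recall the pointwise algebraic identity for the twisted Dirichlet form coming from the metric \eqref{Eq: metric near the boundary}: up to zeroth-order commutator terms,
\begin{equation*}
\mathcal{E}_0(A_r u, A_r u) = \|Q_0 A_r u\|^2_{\mathcal{L}^2(M)} + \langle \widehat{g}^{ij} Q_i A_r u, Q_j A_r u\rangle + \langle T A_r u, A_r u\rangle,
\end{equation*}
with $T \in \Psi^1_b(M)$ a lower-order remainder coming from differentiating the metric coefficients and the twisting function. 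Solving for $\|Q_0 A_r u\|^2$ and using that on $WF_b'(\mathcal{A}) \subset V_\delta$ one has $\widehat{g}^{ij}\eta_i\eta_j \le \delta\beta^{-1}|\eta_{n-1}|^2$, the spatial block $\langle \widehat{g}^{ij}Q_i A_r u, Q_j A_r u\rangle$ is controlled (modulo an operator microlocally supported off $WF_b'(\mathcal A)$, hence absorbed into the $\|G_1 u\|$ term via Lemma \ref{prop-boundH1}/\ref{prop-boundL2}) by $2\delta\,\|Q_{n-1}A_r u\|^2_{\mathcal{L}^2(M)}$; more precisely one writes the operator $\widehat{g}^{ij}Q_i^*Q_j - 2\delta\,\beta Q_{n-1}^*Q_{n-1}$, which is negative elliptic on $WF_b'(\mathcal A)$, takes an approximate square root there, and pushes the excess into $G_1$- and $\chi$-controlled terms exactly as in Lemma \ref{lemma:DirichletE0Bound}. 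This gives
\begin{equation*}
\|Q_0 A_r u\|^2_{\mathcal{L}^2(M)} \le 2\delta\,\|Q_{n-1}A_r u\|^2_{\mathcal{L}^2(M)} + \mathcal{E}_0(A_r u,A_r u) + C\big(\|G_1 u\|^2_{\mathcal{H}^1(M)} + \|\chi u\|^2_{\mathcal{H}^{1,m}(M)}\big).
\end{equation*}

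Second I would replace $\mathcal{E}_0(A_r u, A_r u)$ by $\mathcal{E}_0(u, A_r^* A_r u)$ at the cost of a term bounded by $C(\|G_1 u\|^2_{\mathcal{H}^1} + \|\chi u\|^2_{\mathcal{H}^{1,m}})$ (this is literally the content of Lemma \ref{lemma:DirichletE0Bound}), and then write $\mathcal{E}_0(u,A_r^*A_r u) = \mathcal{E}_\Theta(u,A_r^*A_r u) - \langle x^{-2}S_F u, A_r^* A_r u\rangle - \langle \Theta\gamma_- u, \gamma_-(A_r^*A_r u)\rangle_{\partial M}$. The first of these is $\langle A_r P_\Theta u, A_r u\rangle$, which after inserting $\mathbb{I} = \Lambda_{-1/2}\Lambda_{1/2} + E$ and applying $ab \le C(a^2+b^2)$ together with Lemma \ref{prop-boundH1} (on the $P_\Theta u$ factor, producing $\|G_0 P_\Theta u\|_{\dot{\mathcal{H}}^{-1}}$ and $\|\chi P_\Theta u\|_{\dot{\mathcal{H}}^{-1,m}}$) and Lemma \ref{prop-boundL2} (on the $A_r u$ factor, producing $\|G_1 u\|_{\mathcal{H}^1}$ and $\|\chi u\|_{\mathcal{H}^{1,m}}$) is absorbed except for a genuinely positive residual $\varepsilon\,\mathcal{Q}(A_r u, A_r u)$ — here one uses that $\mathcal{Q}$ is positive-definite so that the $\|A_r u\|^2_{\mathcal{H}^1}$-type term on the right may be replaced by $\varepsilon^{-1}$ times itself plus $\varepsilon\mathcal{Q}$, exactly as in the proof of Lemma \ref{Lem: lemma-bound-quad-form-v1}. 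The $S_F$-term is handled by $S_F \in x^2 L^\infty(M)$ and Lemma \ref{prop-boundL2}, giving only $\|\chi u\|^2_{\mathcal{H}^{1,m}} + \|G_1 u\|^2_{\mathcal{H}^1}$.

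Third, the boundary term $\langle \Theta\gamma_- u, \gamma_-(A_r^* A_r u)\rangle_{\partial M}$ is treated exactly as in Lemma \ref{lemma-bound-quad-form-v2}: using $\gamma_-(Bu) = \widehat{N}(B)(-i\nu_-)\gamma_- u$ and the homomorphism property \eqref{Eq: product of indicial families}, commuting $\Theta$ past $\widetilde{A}_r = x^{2\nu_-}A_r x^{-2\nu_-}$, one reduces to $\langle \gamma_-(\widetilde{A}_r\Theta u), \gamma_-(A_r u)\rangle_{\partial M}$ modulo lower order, then Cauchy--Schwarz, Lemma \ref{eq:gamma-bound}, and the chain of inclusions $\mathcal{H}^1_{loc}(M)\subset \mathcal{L}^2_{loc}(M)\subset \dot{\mathcal{H}}^{-1}_{loc}(M)$ yield the terms $\|\chi\Theta u\|^2_{\dot{\mathcal{H}}^{-1,m}} + \|G_0\Theta u\|^2_{\dot{\mathcal{H}}^{-1}} + \|G_1 u\|^2_{\mathcal{H}^1} + \|\chi u\|^2_{\mathcal{H}^{1,m}}$, using Lemma \ref{prop-boundH1} on the $\widetilde{A}_r\Theta u$ factor with $G_1 \in \Psi^{s-1/2}_b(M)$ elliptic on $WF_b'(\Theta\mathcal{A})$ (note $\Theta$ raises the order by $k \le 2$, which is why one needs $G_1$ — not $G_0$ — to dominate $WF_b'(\Theta\mathcal{A})$ after the $\dot{\mathcal{H}}^{-1}$-gain; this is precisely where the hypothesis $k \le 2$ is used, cf.\ Remark \ref{Rem: obstruction for large k}). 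Collecting all bounds and choosing $\delta_0$ so small that the square-root construction in the first step is valid gives the claim. The main obstacle is bookkeeping: making sure that every factor loses enough derivatives so that, after the $\Lambda_{\pm 1/2}$ splitting and the $\dot{\mathcal{H}}^{-1}$-pairings, the operators appearing are exactly of the orders for which Lemmas \ref{prop-boundH1} and \ref{prop-boundL2} apply with the stated $G_0 \in \Psi^s_b$, $G_1 \in \Psi^{s-1/2}_b$ — in particular that the $\Theta$-contributions land at order $s$ (on $WF_b'(\Theta\mathcal A)$), not higher, which forces the restriction $k \le 2$ and the appearance of the two separate wavefront-set hypotheses on $P_\Theta u$ and $\Theta u$.
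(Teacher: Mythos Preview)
Your proposal is correct and follows essentially the same approach as the paper: the paper itself gives no detailed proof, stating only that it ``follows that of Lemma 6.10 in \cite{GaWr18} up to the fact that we use Lemma \ref{lemma-bound-quad-form-v2} to control the boundary terms,'' and your outline is precisely a faithful unpacking of that recipe --- extract $\|Q_0 A_r u\|^2$ from the Dirichlet form, use the $V_\delta$ hypothesis to dominate the tangential block by $2\delta\|Q_{n-1}A_r u\|^2$ via a square-root argument, pass from $\mathcal{E}_0(A_r u,A_r u)$ to $\mathcal{E}_\Theta(u,A_r^*A_r u)$ through Lemma \ref{lemma:DirichletE0Bound}, and then handle the $S_F$- and $\Theta$-boundary contributions exactly as in Lemma \ref{lemma-bound-quad-form-v2}.
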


The proof follows that of Lemma 6.10 in \cite{GaWr18} up to the fact that we use Lemma \eqref{lemma-bound-quad-form-v2} to control the boundary terms. This result can be used to generalize straightforwardly the proof of Lemma 6.12 in \cite{GaWr18} to the case in hand:

\begin{lemma}
There exist $C, c, \lambda, \delta_0 > 0$, a cutoff $\chi \in \mathcal{C}^\infty_0(M)$, $G_0,G_1$ as above and an operator $G_2 \in \Psi^s_b(M)$ with
$$WF_b^\prime(G_2)\subset W \cap \{ -2\delta \lambda < \rho_1 < -\delta/2, \omega^{1/2} < 3 \lambda \delta \},$$
such that 
\begin{gather*}
c\|\widetilde{B}_r u \|^2_{\mathcal{H}^1(M)} \leq\\
\leq - 2 Im \mathcal{E}_0(u, A^*_r A_r u) + C \Big(  \|G_1 u \|^2_{\mathcal{H}^1(M)} + \|G_2 u \|^2_{\mathcal{H}^1(M)} + \|\chi u \|^2_{\mathcal{H}^{1,m}(M)} + \\
\|G_0 P_\Theta u \|^2_{\dot{\mathcal{H}}^{-1}(M)}  + \|\chi P_\Theta u \|^2_{\dot{\mathcal{H}}^{-1,m}(M)}  +  \| \chi \Theta u \|^2_{\dot{\mathcal{H}}^{-1,m}(M)} + \| G_0 \Theta u \|^2_{\dot{\mathcal{H}}^{-1}(M)}  \Big).
\end{gather*}
\end{lemma}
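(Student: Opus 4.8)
The plan is to adapt the positive‑commutator argument of \cite[Lemma 6.12]{GaWr18} to the present setting, the only genuinely new ingredient being the control of the boundary term carrying the pseudodifferential operator $\Theta$. I would start from the identity for $\mathrm{Im}\,\mathcal{E}_0(u,A_r^*A_ru)$ in Equation \eqref{eq:im-energy}, isolating the term $\langle Q_0 u, iQ_0 A_{1,r}u\rangle$ and using the decomposition $iA_{1,r}=\widetilde B_r^*\widetilde B_r+F_r+T_r$ together with Lemma \ref{lemma:commutator-positive} applied to $G=\widehat g^{ij}Q_j$ and to $Q_0$, so that the leading commutators reassemble, modulo $\mathcal{F}$ and $\mathcal{T}$, into a manifestly signed expression proportional to $\|\widetilde B_r u\|^2_{\mathcal{H}^1(M)}$. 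The sign of $H_{\widehat p}(\pi^*\rho_1)$ being fixed near $q_0$ by the straightening construction, the top‑order contribution has the right sign; the $\mathcal{F}$‑term is supported in $\{-2\delta\lambda<\rho_1<-\delta/2,\ \omega^{1/2}<3\lambda\delta\}$, which is exactly the microlocal region on which we place $G_2$, and the $\mathcal{T}$‑term is of order $2s+k-1$ and hence absorbed by Lemma \ref{prop-boundL2} together with the inductive hypothesis into the $\|\chi u\|^2_{\mathcal{H}^{1,m}(M)}$ and $\|G_1 u\|^2_{\mathcal{H}^1(M)}$ terms.

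Next I would handle the discrepancy $\mathrm{Im}\,\mathcal{E}_\Theta(u,A_r^*A_ru)-\mathrm{Im}\,\mathcal{E}_0(u,A_r^*A_ru)$ exactly as in the hyperbolic case, Lemma \ref{lemma:hyper-1b}. The $S_F$‑term contributes $\langle A_r^*[A_r,x^{-2}S_F]u,u\rangle$ minus its conjugate, with $A_r^*[A_r,x^{-2}S_F]$ uniformly bounded in $\Psi^{2s}_b(M)$, hence absorbed by Lemma \ref{prop-boundL2}. For the genuinely new boundary term $2\,\mathrm{Im}\langle\Theta\gamma_-u,\gamma_-(A_r^*A_ru)\rangle_{\partial M}$ I would pass to indicial families via $\gamma_-(Bu)=\widehat N(B)(-i\nu_-)\gamma_-u$ and Equation \eqref{Eq: product of indicial families}, rewrite it as $\langle\widehat N([\widetilde A_r^*\widetilde A_r,\Theta])(-i\nu_-)\gamma_-u,\gamma_-u\rangle_{\partial M}$ with $\widetilde A_r=x^{2\nu_-}A_rx^{-2\nu_-}$, and split off the lower‑order correction $\widetilde A_r^*\widetilde A_r-A_r^*A_r=x^{2\nu_-}[A_r^*A_r,x^{-2\nu_-}]$ as in Lemma \ref{lemma:hyper-1b}. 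Applying Lemma \ref{lemma:commutator-positive} to $[A_r^*A_r,\Theta]$ yields $\widetilde B_r^*\widetilde D_r\widetilde B_r+T_r$ with $\widetilde D_r\in\Psi^k_b(M)$ of principal symbol $\widetilde d_r=\rho^2 d_r$; since $\widehat N(-i\nu_-)(\widetilde D_r)\in\Psi^0_b(M)$ and $\widehat N(-i\nu_-)(\widetilde B_r)\in\Psi^{m-1/2}_b(M)$, the estimate \eqref{eq:hyp-BDB-inner} goes through verbatim and is bounded by $\varepsilon\|\widetilde B_r u\|^2_{\mathcal{H}^1(M)}+C(\|G_1 u\|_{\mathcal{H}^1(M)}+\|\chi u\|^2_{\mathcal{H}^{1,m}(M)})$, while the $x^{-2\nu_-}$‑commutator correction is controlled by the same mechanism, this time producing the new right‑hand‑side terms $\|\chi\Theta u\|^2_{\dot{\mathcal{H}}^{-1,m}(M)}$ and $\|G_0\Theta u\|^2_{\dot{\mathcal{H}}^{-1}(M)}$ through the inclusions $\mathcal{H}^1_{loc}\subset\mathcal{L}^2_{loc}\subset\dot{\mathcal{H}}^{-1}_{loc}$ and Lemma \ref{eq:gamma-bound}.

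Finally I would collect all the estimates: the leading term gives $c\|\widetilde B_r u\|^2_{\mathcal{H}^1(M)}$, the $\mathcal{F}$‑term produces $\|G_2 u\|^2_{\mathcal{H}^1(M)}$ with $G_2$ microsupported in $W\cap\{-2\delta\lambda<\rho_1<-\delta/2,\ \omega^{1/2}<3\lambda\delta\}$ as claimed, and all remaining pieces are absorbed either into $\|G_1 u\|^2_{\mathcal{H}^1(M)}$, $\|\chi u\|^2_{\mathcal{H}^{1,m}(M)}$, the $P_\Theta u$ terms, or the two new $\Theta u$ terms; the parameters $\lambda$ and $\delta_0$ are fixed so that $C_0(\lambda\delta+\delta+\lambda^{-1})$ in Lemma \ref{lemma:commutator-positive} is small enough that the signed part dominates. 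I expect the main obstacle to be bookkeeping of orders in the $\Theta$‑boundary term: one must check that conjugating $A_r^*A_r$ with powers of $x$ and passing to indicial families does not raise the order beyond what Lemma \ref{prop-Psi0_Bound} and Lemma \ref{eq:gamma-bound} can absorb, and in particular that the contribution of $\widetilde D_r$ — whose order $k$ is strictly positive here, unlike in \cite{GaWr18} — is tamed precisely by the factor $\rho^{m-1/2}$ in the principal symbol of $\widehat N(-i\nu_-)(\widetilde B_r)$ together with the elliptic estimate on $\gamma_-(\Theta\mathcal{A})$ provided by $G_1$. Everything else is a routine transcription of the hyperbolic‑region argument.
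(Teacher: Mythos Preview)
Your proposal is correct and follows essentially the same route as the paper, which simply states that the proof is a straightforward generalization of \cite[Lemma 6.12]{GaWr18} using the preceding glancing lemma (whose $\Theta$-dependent terms ultimately come from Lemma \ref{lemma-bound-quad-form-v2}). You have unpacked this reference in detail---the positive-commutator structure via Lemma \ref{lemma:commutator-positive}, the $\mathcal{F}$-term giving rise to $G_2$ with the stated microsupport, and the handling of the $\Theta$-boundary contribution via indicial families as in Lemma \ref{lemma:hyper-1b}---and your bookkeeping of orders and of where the new $\Theta u$ terms enter is accurate.
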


Proposition \ref{prop:glan_reg_pos} can now be proven along the lines of Proposition \ref{Prop: elliptic regularity} and of Proposition \ref{prop:hyp_reg_pos}, the differences arising because of the geometric nature of the glancing region. We refer to \cite[Sec. 6]{GaWr18} and to \cite[Sec. 7]{Vasy08} for further details.

\subsection{Propagation of singularities theorems}\label{Sec: propagation of singularities theorem}

Combining all microlocal estimates from the previous sections, we obtain the following propagation of singularities theorem which generalizes that of \cite{GaWr18}. We recall that, once more, with $\Theta$ we denote also the trivial extension to $M$ of the pseudodifferential operator on $\partial M$ which implements the boundary condition. 

\begin{thm}\label{Thm: main theorem k positivo}
Let $\Theta \in \Psi_b^k(\partial M)$ with $0<k\leq 2$. If $u \in \mathcal{H}_{loc}^{1,m}(M)$ for  $m \leq 0$ and $s \in \mathbb{R} \cup \{ + \infty \}$, then $WF_b^{1,s}(u) \setminus \left( WF_b^{-1,s+1}(P_\Theta u) \cup WF_b^{-1,s+1}(\Theta u) \right) $ is the union of maximally extended generalized broken bicharacteristics within the compressed characteristic set $\dot{N}$.
\end{thm}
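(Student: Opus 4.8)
The strategy is the standard one for propagation of singularities theorems: reformulate the set-theoretic statement as two inclusions, one of which is essentially trivial and the other of which follows from the microlocal estimates established in Sections \ref{Sec: elliptic region}--\ref{Sec: glancing region}. Set $W \doteq WF_b^{1,s}(u) \setminus \left( WF_b^{-1,s+1}(P_\Theta u) \cup WF_b^{-1,s+1}(\Theta u) \right)$. First I would observe that by microlocal elliptic regularity, Proposition \ref{Prop: elliptic regularity}, we have $W \subset \dot{\mathcal{N}}$, so the claim is about the structure of $W$ as a subset of the compressed characteristic set. The content is: (i) $W$ is closed (in $\dot{\mathcal{N}}$ minus the two wavefront sets of $P_\Theta u$ and $\Theta u$, which are themselves closed and conic), and (ii) through every point $q_0 \in W$ there passes a generalized broken bicharacteristic $\gamma$ with $\gamma \subset W$, defined on a maximal interval; since $WF_b^{1,s}(u)$ is closed and conic this suffices to conclude $W$ is a union of maximally extended GBBs.

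The heart of the argument is a local statement: if $q_0 \in \dot{\mathcal{N}} \setminus \left( WF_b^{-1,s+1}(P_\Theta u) \cup WF_b^{-1,s+1}(\Theta u) \right)$ and $q_0 \in WF_b^{1,s}(u)$, then through $q_0$ there is a nontrivial GBB segment $\gamma:[-\varepsilon,0] \to \dot{\mathcal{N}}$ (or $[0,\varepsilon]$, by time-reversal symmetry) with $\gamma(0)=q_0$ and $\gamma(s') \in WF_b^{1,s}(u)$ for all $s'$. This is proven by contradiction using the three regional estimates. If $q_0 \in \mathcal{E}(M)$ it is excluded outright by elliptic regularity. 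If $q_0 \in \mathcal{H}(M)$, one uses Proposition \ref{prop:hyp_reg_pos}: were there a conic neighborhood $W'$ of $q_0$ with $W' \cap \{\zeta < 0\} \cap WF_b^{1,s}(u) = \emptyset$ we would get $q_0 \notin WF_b^{1,s}(u)$, a contradiction, so the singularity must propagate into $\{\zeta < 0\}$; combined with the instantaneous-reflection structure of GBBs at hyperbolic points (Definition \ref{Def: generalized broken bicharacteristics}(b)) and the fact that near a hyperbolic point $WF_b^{1,s}(u) \cap \{x \ne 0\}$ is governed by ordinary propagation of singularities in the interior (Hörmander), one extracts an incoming/outgoing bicharacteristic segment in the singular support. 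If $q_0 \in \mathcal{G}(M)$ one uses Proposition \ref{prop:glan_reg_pos}: its contrapositive says that if all $\alpha \in \mathcal{N}$ with $|\widetilde{\pi}(\alpha) - \exp(-\delta W)(\widetilde{\alpha}_0)| \le C_0\delta^2$ and $|x(\alpha)| \le C_0\delta^2$ avoid $WF_b^{1,s}(u)$, then $q_0 \notin WF_b^{1,s}(u)$; hence for every small $\delta$ there is such an $\alpha_\delta \in WF_b^{1,s}(u)$, and letting $\delta \to 0$ these points $\pi(\alpha_\delta)$ converge to a point on the gliding ray through $q_0$, producing the GBB segment along the boundary. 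The interior case $q_0 \in {}^bT^*M$ with $x(q_0) \ne 0$ is just the classical Duistermaat--Hörmander theorem applied to $x^{-2}P$, whose principal symbol is $\widehat{p}$.

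Once the local one-sided propagation statement is in hand, the global statement is assembled by a now-standard compactness/continuation argument: starting from $q_0 \in W$, iterate the local statement to build a GBB segment lying in $WF_b^{1,s}(u)$, staying inside $\dot{\mathcal{N}} \setminus ( WF_b^{-1,s+1}(P_\Theta u) \cup WF_b^{-1,s+1}(\Theta u))$ as long as it does not meet those closed sets, and extend it to a maximal interval; Lemma \ref{lemm:gbb-lebau} guarantees that a uniform limit of GBBs is a GBB and that the relevant space of GBBs in a compact set is compact, so the segments glue into a single maximally extended GBB and no ``escape to the boundary of the allowed region in finite parameter'' can occur except at the excluded sets or at parameter infinity. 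Conversely, any point on such a maximal GBB lying in $WF_b^{1,s}(u)$ but off the two exceptional wavefront sets is in $W$ by definition, giving the reverse inclusion. The main obstacle, and the step requiring the most care, is the hyperbolic region: one must correctly match the $b$-microlocal reflection picture of Definition \ref{Def: generalized broken bicharacteristics}(b) with the interior propagation near $\{x \ne 0\}$, keeping track of the extra term $WF_b^{-1,s+1}(\Theta u)$ that is absent in \cite{GaWr18} and which here is what allows the boundary pseudodifferential operator $\Theta$ of positive order to contribute to the wavefront set; verifying that this term enters the estimates in Propositions \ref{prop:hyp_reg_pos} and \ref{prop:glan_reg_pos} exactly in the form needed for the set-theoretic bookkeeping is the delicate point. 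I expect the elliptic and glancing cases, and the global gluing via Lemma \ref{lemm:gbb-lebau}, to be routine adaptations of \cite[Sec.~6]{GaWr18} and \cite[Sec.~7]{Vasy08}.
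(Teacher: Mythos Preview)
Your proposal is correct and follows essentially the same route as the paper: elliptic regularity (Proposition~\ref{Prop: elliptic regularity}) confines $W$ to $\dot{\mathcal{N}}$; interior points are handled by Duistermaat--H\"ormander; at hyperbolic boundary points the contrapositive of Proposition~\ref{prop:hyp_reg_pos} yields a sequence of interior singular points $q_j\to q_0$ with $\zeta(q_j)<0$, through each of which interior propagation gives a bicharacteristic segment $\gamma_j$, and Lemma~\ref{lemm:gbb-lebau} provides the limiting GBB through $q_0$; the glancing case is analogous via Proposition~\ref{prop:glan_reg_pos}; and the passage to maximal GBBs is a continuation argument. The only cosmetic difference is that the paper phrases the maximal extension via an explicit Zorn's lemma on a partial order on GBB segments through $q_0$, whereas you describe it as iteration plus compactness---these are interchangeable.
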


\noindent In full analogy it holds also

\begin{thm}\label{Thm: main theorem k negativo}
Let $\Theta \in \Psi_b^k(M)$ with $k \leq 0$. If $u \in \mathcal{H}_{loc}^{1,m}(M)$ for $m \leq 0$ and $s \in \mathbb{R} \cup \{ + \infty \}$, then it holds that $WF_b^{1,s}(u) \setminus WF_b^{-1,s+1}(P_\Theta u)$ is the union of maximally extended GBBs within the compressed characteristic set $\dot{N}$.
\end{thm}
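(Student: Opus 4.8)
The plan is to mirror the standard microlocal argument of Lebeau \cite{Leb97}, Vasy \cite{Vasy08} and Gannot--Wrochna \cite{GaWr18}, now using the microlocal estimates established earlier in this paper for the case $k\le 0$, namely Proposition \ref{prop-mer} (elliptic regularity), Proposition \ref{prop:hyp_reg_negnull} (hyperbolic region) and Proposition \ref{prop:glan_reg_negnulla} (glancing region). Write $\Gamma = WF_b^{1,s}(u)\setminus WF_b^{-1,s+1}(P_\Theta u)$; note that in the $k\le 0$ regime the exceptional set involves only $P_\Theta u$, since then $\Theta u$ contributes no additional singularities (this is precisely what makes the estimates in Lemma \ref{Lem: lemma-bound-quad-form-v1} and Propositions \ref{prop-mer}, \ref{prop:hyp_reg_negnull}, \ref{prop:glan_reg_negnulla} closed without a $WF_b^{-1,s+1}(\Theta u)$ term). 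The claim has two halves: (i) $\Gamma\subset\dot{\mathcal N}$, and (ii) $\Gamma$ is a union of maximally extended GBBs inside $\dot{\mathcal N}$.

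\textbf{Step 1: $\Gamma$ is contained in the compressed characteristic set.} This is immediate from Proposition \ref{prop-mer}: if $q_0\in WF_b^{1,s}(u)\setminus\dot{\mathcal N}$ then $q_0\in WF_b^{-1,s}(P_\Theta u)\subset WF_b^{-1,s+1}(P_\Theta u)$, so $q_0\notin\Gamma$. Hence $\Gamma\subset\dot{\mathcal N}$, and moreover $\Gamma$ is a closed conic subset of $\dot{\mathcal N}$ (closedness coming from the closedness of wavefront sets and the fact that removing an open-in-$\dot{\mathcal N}$ set — the complement of a closed set — preserves relative closedness; one must be a little careful since $WF_b^{-1,s+1}(P_\Theta u)$ need not be closed in the same topology, but the usual device is to work with the closed set $WF_b^{-1,s+1}(P_\Theta u)$ and intersect with its complement inside $\dot{\mathcal N}\setminus WF_b^{-1,s+1}(P_\Theta u)$, which is the object one actually propagates in).

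\textbf{Step 2: local propagation / openness of the good set along GBBs.} Fix $q_0\in\dot{\mathcal N}\setminus WF_b^{-1,s+1}(P_\Theta u)$ and suppose $q_0\notin WF_b^{1,s}(u)$; I must show that any GBB $\gamma$ with $\gamma(0)=q_0$ stays out of $WF_b^{1,s}(u)$ for small parameter. Split according to the location of $q_0$: if $q_0\in\mathcal E(M)$ there is nothing to propagate since $q_0\notin\dot{\mathcal N}$; if $q_0\in\mathcal H(M)$, use Proposition \ref{prop:hyp_reg_negnull} — a point of $WF_b^{1,s}(u)$ in the hyperbolic region with $\zeta<0$ on one side forces $q_0\in WF_b^{1,s}(u)$, so the reflected broken bicharacteristic carries regularity across; if $q_0\in\mathcal G(M)$, use Proposition \ref{prop:glan_reg_negnulla} — if all nearby points $\alpha\in\mathcal N$ with $\widetilde\pi(\alpha)$ close to $\exp(-\delta W)(\widetilde\alpha_0)$ and $|x(\alpha)|\le C_0\delta^2$ avoid $WF_b^{1,s}(u)$, then $q_0\notin WF_b^{1,s}(u)$. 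Combining these, one shows: if an initial segment $\gamma([0,t_0))$ avoids $WF_b^{1,s}(u)$ then so does $\gamma([0,t_0+\epsilon))$ for some $\epsilon>0$ (standard bootstrap on the parameter; away from the boundary this is the classical Hörmander propagation for real principal type operators, which is subsumed by the glancing/interior estimate).

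\textbf{Step 3: globalization via compactness of GBBs.} Let $N_{\mathrm{good}} = \dot{\mathcal N}\setminus\big(WF_b^{1,s}(u)\cup WF_b^{-1,s+1}(P_\Theta u)\big)$. Step 2 shows that the set of parameters $t$ for which $\gamma(t)\in N_{\mathrm{good}}$ is open in the domain of $\gamma$. To conclude it is also closed (hence all of the domain once it is nonempty, for a GBB contained in a fixed compact set), invoke Lemma \ref{lemm:gbb-lebau}: uniform limits of GBBs are GBBs and $\mathcal R_K[a,b]$ is compact, so a sequence of GBB-segments in $N_{\mathrm{good}}$ approaching a limit point does not escape $\dot{\mathcal N}$ and the limiting point again lies in $N_{\mathrm{good}}$ because $WF_b^{1,s}(u)$ is closed. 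Therefore each maximally extended GBB through a point of $N_{\mathrm{good}}$ lies entirely in $N_{\mathrm{good}}$; equivalently, $\Gamma = WF_b^{1,s}(u)\setminus WF_b^{-1,s+1}(P_\Theta u)$ is invariant under the GBB flow and is a union of maximally extended GBBs within $\dot{\mathcal N}$. Finally, handle $s=+\infty$ by intersecting over finite $s$, using that $WF_b^{1,\infty}(u)=\bigcap_s WF_b^{1,s}(u)$ and that a decreasing intersection of GBB-unions is again a GBB-union (here Lemma \ref{lemm:gbb-lebau} is used once more to pass the union structure through the intersection).

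\textbf{Main obstacle.} The delicate point is Step 2 at glancing points: one must verify that the geometric hypothesis of Proposition \ref{prop:glan_reg_negnulla} — regularity of $u$ on the precise family of points $\alpha\in\mathcal N$ with $\widetilde\pi(\alpha)\approx\exp(-\delta W)(\widetilde\alpha_0)$, $|x(\alpha)|\le C_0\delta^2$ — is implied by the inductive assumption that an initial GBB-segment avoids $WF_b^{1,s}(u)$. This requires matching the discrete ``one step of size $\delta$ along $\exp(-\delta W)$'' in the microlocal estimate with the continuous GBB flow, and is exactly where Lebeau's argument (quoted via \cite{Leb97,Vasy08}) does the real work; the asymptotically AdS / $b$-calculus setting only changes the bookkeeping, not the structure, since the relevant estimates have already been proven above. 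The $k\le 0$ assumption enters only through the absence of the $\Theta u$ term in Lemma \ref{Lem: lemma-bound-quad-form-v1} and hence in Propositions \ref{prop-mer}, \ref{prop:hyp_reg_negnull}, \ref{prop:glan_reg_negnulla}, which is why the exceptional set is the smaller $WF_b^{-1,s+1}(P_\Theta u)$ rather than its union with $WF_b^{-1,s+1}(\Theta u)$.
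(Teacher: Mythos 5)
Your Step 1 (containment in $\dot{\mathcal N}$ via Proposition \ref{prop-mer}) and your observation that the $k\le 0$ estimates close without a $WF_b^{-1,s+1}(\Theta u)$ term are fine. The problem is the logical architecture of Steps 2--3. You propose to show that the \emph{regular} set is open along each individual GBB (``if $\gamma([0,t_0))$ avoids $WF_b^{1,s}(u)$ then so does $\gamma([0,t_0+\epsilon))$'') and then deduce GBB-invariance by an open-and-closed argument. This bootstrap cannot be run from the estimates you cite: Proposition \ref{prop:hyp_reg_negnull} requires that an entire one-sided conic neighbourhood $W\cap\{\zeta<0\}$ of $\gamma(t_0)$ be disjoint from $WF_b^{1,s}(u)$, and Proposition \ref{prop:glan_reg_negnulla} requires regularity at \emph{all} points $\alpha\in\mathcal N$ with $\widetilde\pi(\alpha)$ near $\exp(-\delta W)(\widetilde\alpha_0)$ and $|x(\alpha)|\le C_0\delta^2$. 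Regularity along the single incoming curve $\gamma([0,t_0))$ does not supply these hypotheses, because GBBs through $\gamma(t_0)$ are not unique: other, singular, GBBs may arrive at $\gamma(t_0)$ from different incoming directions inside $W\cap\{\zeta<0\}$. Indeed the statement your Step 2 asserts --- that every GBB through a regular point stays regular --- is strictly stronger than the theorem and is false in diffractive situations, where a singularity arriving along one incoming ray produces singularities on outgoing GBBs whose other backward continuations are regular. The theorem is existential (``through each singular point there \emph{exists} a singular maximal GBB''), not universal over GBBs, and your ``equivalently'' at the end of Step 3 silently converts one into the other.

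The argument the paper runs (following Vasy and Lebeau) is the contrapositive, constructive one, and you should restructure around it. Given $q_0\in WF_b^{1,s}(u)\setminus WF_b^{-1,s+1}(P_\Theta u)$ hyperbolic, the contrapositive of Proposition \ref{prop:hyp_reg_negnull}, applied with shrinking neighbourhoods $W$, produces interior singular points $q_j\to q_0$ with $\zeta(q_j)<0$; H\"ormander's interior propagation theorem yields singular bicharacteristic segments $\gamma_j:[-\varepsilon_0,0]\to\dot{\mathcal N}$ ending at $q_j$ and avoiding the boundary for a uniform $\varepsilon_0$; Lemma \ref{lemm:gbb-lebau} (compactness of $\mathcal R_K[-\varepsilon_0,0]$ and stability of GBBs under uniform limits) then extracts a limit GBB contained in $WF_b^{1,s}(u)\setminus WF_b^{-1,s+1}(P_\Theta u)$ (the latter set being relatively closed) and ending at $q_0$; the glancing case is handled analogously with Proposition \ref{prop:glan_reg_negnulla}, and Zorn's lemma on the partially ordered set $\mathcal{GBB}_q$ gives the maximal extension. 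Your ``Main obstacle'' paragraph correctly identifies where the real work lies, but the scaffolding you built around it does not support that work.
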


The proof of both theorems is similar to that given in \cite{Vasy08}, employing the estimates derived in the previous sections. Hence we do not give all details here, rather we feel worth outlining only the analysis of forward propagation in the hyperbolic region for the reader's convenience. We focus on the case of Theorem \ref{Thm: main theorem k positivo}.

In $\mathring{M}$, the statement can be reduced to Duistermaat and H\"{o}rmander's theorem of propagation of singularities \cite{HoDu72, Hor71}. We can focus on the boundary. First, we prove a local version of the theorem, extending a non maximal GBB to $\partial M$, {\it cf.} Definition \ref{Def: generalized broken bicharacteristics}. In other words, calling $U$ an open chart of $M$ such that $U\cap\partial M\neq\emptyset$, we show that if $q_0 \in WF_b^{1,s}(u) \setminus WF_b^{-1,s+1}(P_\Theta u) $ with $q_0 \in {}^bT^*_U M$, then there exists a GBB $\gamma:[-\varepsilon_0 , 0] \rightarrow \dot{N}$, with $\varepsilon_0>0$, such that $\gamma(0) = q_0$ and $\gamma(s) \in WF_b^{1,s}(u) \setminus\left(WF_b^{-1,s+1}(P_\Theta u)\cup WF_b^{-1,s+1}(\Theta u)\right) $ for $s \in [-\varepsilon_0,0]$.  We consider the case in which $q_0 \in \mathcal{H}(M)$ and that in which $q_0 \in \mathcal{G}(M)$ separately.

We focus on the former proceeding iteratively. Given $q_0 \in {}^bT^*_U M$, we build a sequence of generalized broken bicharacteristics $\gamma_j:[-\varepsilon_0,0] \rightarrow \dot{N}$ such that $\gamma_j(s) \in WF_b^{1,s}(u) \setminus\left(WF_b^{-1,s+1}(P_\Theta u)\cup WF_b^{-1,s+1}(\Theta u)\right)$ and with the endpoint $\gamma_j(0) \doteq q_j \in {}^bT^* \mathring{M} $ converging to $q_0$ on the boundary.
Thanks to Proposition \ref{prop:hyp_reg_pos}, choosing increasingly smaller sets $W\subset T^*M\setminus\{0\}$ we can found the sought sequence of points $\{q_j\}_{j\in\mathbb{N}}$. Since every $q_j\in\mathring{M}$, H\"{o}rmander's theorem on propagation of singularities \cite{Hor3} \cite{HoDu72} guarantees existence of the sought sequence of GBBs. The assumption of forward propagation, that is $\xi(q_j) <0 $, ensures that there exists $\varepsilon_0>0$ such that for $s \in [-\varepsilon_0,0]$, $\gamma_j(s) \not \in {}^b T^*_Y M$, where $\xi = 0$. 

Since generalized broken bicharacteristics $\mathcal{R}_K[-\varepsilon_0,0]$ with $K$ compact are themselves compact in the topology of uniform convergence, Lemma \ref{lemm:gbb-lebau} allows to conclude that there exists a subsequence $\{\gamma_{j_k}\}$ uniformly converging to $\gamma$.  

At last we extend the result to maximal GBBs. Given a subset $V \subset \dot{N}$ with $q \in V$ and $a,b \in \mathbb{R}$ containing $0$, there is a natural partial order on the set $\mathcal{GBB}_q$ of broken generalized bicharacteristics $\gamma:(a,b)\rightarrow V$ such that $\gamma(0)=q$. Let $\gamma_1:(a_1,b_1) \rightarrow V$ and $\gamma_2: (a_2,b_2) \rightarrow V$ be two elements of $\mathcal{GBB}_q$, we say that $\gamma_1 \leq \gamma_2$ if $(a_1,b_1) \subset (a_2,b_2)$ and if the two curves agree over the common domain $(a_1,b_1)$. 

Since a non-empty totally ordered subset has an upper bound, we can extend the GBBs joining the domains of those in the chain. At this point we apply Zorn's lemma, the maximal element of any totally ordered subsets being the maximal extension of a GBBs. In the glancing region the main idea is still to build a sequence of curves approximating a GBB, although the details are different due to some technical hurdles. The reader can find the argument in \cite{Leb97} and \cite{Vasy08}.

\section*{Acknowledgments}

We are grateful to Benito Juarez Aubry for the useful discussions which inspired the beginning of this project and to Nicolò Drago both for the useful discussions and for pointing out reference \cite{GMP}. The work of A. Marta is supported by a fellowship of the Università Statale di Milano, which is gratefully acknowledged. C. Dappiaggi is grateful to the Department of Mathematics of the Università Statale di Milano for the kind hospitality during the realization of part of this work.

\end{document}